\newcommand{\addresseshere}{%
  \enddoc@text\let\enddoc@text\relax
}
\theoremstyle{plain}
\newtheorem{thm}{Theorem}[section]
\newtheorem{prop}[thm]{Proposition}
\newtheorem{cor}[thm]{Corollary}
\newtheorem{conj}[thm]{Conjecture}
\theoremstyle{definition}
\newtheorem{defn}[thm]{Definition}
\newtheorem{rmk}[thm]{Remark}
\newtheorem{notat}[thm]{Notation}
\newtheorem{ex}[thm]{Example}
\newtheorem{ass}[thm]{Assumption}
\newcommand{\F}{{\mathcal F}}
\newcommand{\FF}{{\mathbb F}}
\newcommand{\NN}{{\mathbb N}}
\newcommand{\ZZ}{{\mathbb Z}}
\DeclareMathOperator{\reg}{reg}
\DeclareMathOperator{\sd}{solv.deg}
\DeclareMathOperator{\ttop}{top}
\DeclareMathOperator{\maxgb}{max.GB.deg}
\title{Semi-regular sequences and other random systems of equations}
\author[Bigdeli]{M. Bigdeli}
\address{School of Mathematics, Institute for Research in Fundamental Sciences (IPM),  P.O.Box: 19395-5746, Teheran, Iran}
\email{mina.bigdeli98@gmail.com, mina.bigdeli@ipm.ir}
\author[De Negri]{E. De Negri}
\address{Dipartimento di Matematica, Universit\`a di Genova, Via Dodecaneso 35, 16146 Genova, Italy}
\email{denegri@dima.unige.it}
\author[Dizdarevic]{M. M. Dizdarevic}
\address{Faculty of Natural Sciences and Mathematics, University of Sarajevo, Bosnia and Herzegovina}
\email{manuela@dizdarevic.org}
\author[Gorla]{E. Gorla}
\address{Institut de Math\'ematiques, Universit\'e de Neuch\^atel, Rue Emile-Argand 11, 2000 Neuch\^atel, Switzerland}  
\email{elisa.gorla@unine.ch}
\author[Minko]{R. Minko}
\address{Mathematical Institute, University of Oxford, Oxford, OX1 3BJ, United Kingdom}
\email{romy.minko@wolfson.ox.ac.uk}
\author[Tsakou]{S. Tsakou}
\address{Universit\'e de Picardie Jules Verne, 80039 Amiens,  France}
\email{sulamithe.tsakou@u-picardie.fr}
\thanks{This work was started during the collaborative conference ``Women in Numbers Europe 3''. The authors would like to acknowledge the organizers Sorina Ionica, Holly Krieger, and Elisa Lorenzo Garcia as well as the Henri Lebesgue Center, which hosted the conference. 
The symbolic algebra computations were performed with CoCoA~5~\cite{cocoa}, Macaulay2~\cite{m2}, Magma~\cite{magma}, and Wolfram Mathematica~\cite{mathematica}.}
\begin{document}

\begin{abstract}
The security of multivariate cryptosystems and digital signature schemes relies on the hardness of solving a system of polynomial equations over a finite field. Polynomial system solving is also currently a bottleneck of index-calculus algorithms to solve the elliptic and hyperelliptic curve discrete logarithm problem. The complexity of solving a system of polynomial equations is closely related to the cost of computing Gr{\"o}bner bases, since computing the solutions of a polynomial system can be reduced to finding a lexicographic Gr{\"o}bner basis for the ideal generated by the equations. Several algorithms for computing such bases exist: We consider those based on repeated Gaussian elimination of Macaulay matrices. 
In this paper, we analyze the case of random systems, where random systems means either semi-regular systems, or quadratic systems in $n$ variables which contain a regular sequence of $n$ polynomials. We provide explicit formulae for bounds on the solving degree of semi-regular systems with $m>n$ equations in $n$ variables, for equations of arbitrary degrees for $m=n+1$, and for any $m$ for systems of quadratic or cubic polynomials. In the appendix, we provide a table of bounds for the solving degree of semi-regular systems of $m=n+k$ quadratic equations in $n$ variables for $2\leq k,n\leq 100$ and online we provide the values of the bounds for $2\leq k,n\leq 500$. For quadratic systems which contain a regular sequence of $n$ polynomials, we argue that the Eisenbud-Green-Harris conjecture, if true, provides a sharp bound for their solving degree, which we compute explicitly. 
\end{abstract}
\maketitle

\section*{Introduction}

Cryptosystems and digital signature algorithms based on the hardness of solving systems of multivariate polynomial equations belong to one of the five major families of post-quantum cryptography, multivariate (public-key) cryptography. Many systems arising in this context consist of quadratic equations and are therefore referred to as multivariate quadratic (MQ) systems.  Given a system of quadratic polynomials over a finite field, $\mathcal{F}$, and a vector $y$, the problem of finding a vector $x$ such that $\mathcal{F}(x)=y$, referred to as the MQ problem, is known to be NP-hard. 

Polynomial system solving is also a crucial step in index-calculus algorithms to solve the discrete logarithm problem on an elliptic curve, a hyperelliptic curve, or an abelian variety. In such algorithms, one attempts to find a decomposition of a point of an elliptic curve (or of an abelian variety) over a chosen factor base. Such a decomposition is produced as a solution to a suitable system of polynomial equations. In this setting, most systems considered will not have any solutions and once in a while one will produce a system which has a solution and hence produces a relation. The found relations allow the attacker to set up a linear system, whose solution reveals the discrete logarithm. 


The best known approach for solving an arbitrary system of polynomial equations over a finite field is finding a Gr\"obner basis of the ideal generated by the polynomials in the system \cite{CG17}. The first algorithm for computing Gr\"obner bases was introduced by Buchberger \cite{ buchberger1965grobner} in 1965. Subsequently a number of system-solver algorithms have been proposed, including~\cite{lazard1983grobner, KS99, CKPS00, faugere2002new, bettale2009hybrid}. For a thorough summary, we refer the reader to \cite{joux2009algorithmic}. 

Importantly, these algorithms do not rely on the specific algebraic structure of the original polynomials; they can therefore be applied to any system of equations. In particular, the complexity of computing a Gr\"obner basis of the public key of a multivariate cryptosystem or a multivariate digital signature algorithm gives an upper bound on the security of that system. Consequently, we are motivated to find tighter bounds on the complexity of Gr\"obner basis algorithms.

Several system solvers use Gaussian elimination of Macaulay matrices \cite{joux2009algorithmic} to obtain a Gr\"obner basis. The complexity depends on the size to which this matrix grows, which in turn depends on the solving degree $D$ of the system. This is the degree at which Gaussian elimination on the matrix reveals a Gr\"obner basis. It is difficult to know $D$ in advance, so in practice the complexity of a system solver is often estimated from the degree of regularity of the system. Bardet, Faug\`ere, and Salvy analyzed the F5 Algorithm in~\cite{bardet2004complexity}, giving a definition for the degree of regularity and an asymptotic upper bound on this degree for cryptographic semi-regular sequences. This bound is used widely in the cryptography community to estimate the security of multivariate polynomial cryptosystems and digital signatures.

In Section~\ref{sect:dreg} of this paper, we provide evidence that the degree of regularity does not bound the solving degree of a system, in the case of inhomogeneous systems. Examples of this kind already appear in~\cite{CG17}. In this paper, however, we provide examples for which the difference between the degree of regularity and the solving degree is larger than in previously known examples. 

Often cryptographers make the assumption that the systems that they analyze are random, where random means that the coefficients of the polynomials in $\mathcal{F}$ are chosen uniformly at random from the coefficient field.  Of course, in practice MQ cryptosystems are \textit{not} random in this sense, as they must be equipped with a backdoor with which a trusted user can easily invert the system.  However, this invertible map is hidden by secret affine transformations, such that the resulting public system resembles a random system. In the same way, systems coming from index-calculus are not truly random systems. They, however, more closely resemble random systems, especially since they usually have no solutions, as it is the case for an overdetermined random system of equations.

In Sections~\ref{semi-regular} and~\ref{sec:eisen} of this paper, we consider two distinct mathematical formulations for randomness of a system of equations. The first family of random systems that we consider are (cryptographic) semi-regular systems. Notice that systems whose coefficients are chosen uniformly at random in an infinite field (or in a large enough field) are known to be semi-regular for some choices of the parameters, and conjectured to always be. 
The second family of random systems that we consider are overdetermined quadratic systems of $m$ equations in $n$ variables, which contain a regular sequence of $n$ polynomials. Notice that a system of quadratic equations for which the coefficients of the first $n$ polynomials are chosen uniformly at random over an infinite (or large enough) field belongs to this family (the remaining $m-n$ polynomials can be chosen arbitrarily). Therefore, this is a family of random systems which is larger than the one which is usually studied in the cryptographic literature.

\bigskip
The structure of the paper is organized as follows. In Section \ref{sec:not} we review some mathematical background on commutative algebra and semi-regular sequences that will be needed in the rest of the paper. In Section \ref{sect:dreg} we compare the notions of solving degree, degree of regularity, and Castelnuovo-Mumford regularity and give examples of polynomial systems for which the solving degree is greater than degree of regularity. Subsequently, in Section \ref{semi-regular} we derive explicit bounds on the solving degree of homogeneous systems of $n+1$ equations in $n$ variables, and for systems of $m$ quadratic or cubic equations in $n$ variables, for any $m$. Motivated by the examples in Section~\ref{sect:dreg}, we propose a new definition of cryptographic semi-regular sequence for inhomogeneous systems and we provide bounds for the solving degree of inhomogeneous systems which are cryptographic semi-regular systems according to our definition. Section \ref{sec:eisen} describes how the Eisenbud-Green-Harris conjecture, if true, can be used to bound the solving degree of overdetermined systems of quadratic polynomials. We conclude the paper by discussing some limitations to the applicability of the results of Section~\ref{sec:eisen} to systems arising in cryptography, and a connection to the degree of regularity.
In the appendix, we provide tables of bounds for the solving degree of semi-regular systems of $m=n+k$ quadratic equations in $n$ variables for $2\leq k,n\leq 100$ and online we provide the values of the bounds for $2\leq k,n\leq 500$. 

\section{Notation and preliminaries}\label{sec:not}

In this section we recall some concepts, definitions and results which will be used throughout the paper. 
Let $\mathbb{K}$ be a field and let $R=\mathbb{K}[x_1,\ldots,x_n]$ be the polynomial ring in $n$ variables with coefficients in $\mathbb{K}$. Denote by $\mathrm{Mon}(R)$ the set of monomials of $R$ and consider the degree reverse lexicographic order on $R$. We let $\maxgb(I)$ denote the largest degree of an element in the reduced degree reverse lexicographic Gr\"obner basis of the ideal $I$.
For $a\in\mathbb{R}$ we denote by $\lfloor a\rfloor$ and $\lceil a\rceil$ the floor and the ceiling of $a$, respectively.

Let $\F=\{f_1,\ldots,f_m\}\subseteq R$ be a system of polynomial equations and let $I=(f_1,\ldots,f_m)$ be the ideal that they generate. A system is {\bf overdetermined} if $m>n$. Let $d_i=\deg(f_i)$ for $1\leq i\leq m$. We may assume without loss of generality that $d_i\geq 2$ for all $i$. In fact, if $\F$ contains a polynomial of degree $0$, this is either $0$ and can be eliminated, or an element of $\mathbb{K}\setminus\{0\}$ and the reduced degree lexicographic Gr\"obner basis of $I$ is the polynomial $1$. If $\F$ contains a polynomial of degree $1$, this can be used to produce a new system in one less equation and one less variable, which has the same solutions as $\F$.

If the equations of $\F$ are not homogeneous, then we may associate to $\F$ the system 
$$\F^h=\{f_1^h,\ldots,f_m^h\}.$$ 
Here, for $f\in R$, we denote by $f^h\in S=R[t]$ the homogenization of $f$ with respect to a new variable $t$. If $\F$ is a system of $m$ inhomogeneous equations in $n$ variables, then $\F^h$ is a system of $m$ homogeneous equations in $n+1$ variables. Denote by $J=(\F^h)\subseteq S$ the ideal generated by $\F^h$.

One may associate to $\F$ another homogeneous system, whose equations are obtained from those of $\F$ by dropping the lower degree monomials. Precisely, for $f\in R$ let 
$$f^{\ttop}=f^h(x_1,\ldots,x_n,0)$$ be the polynomial obtained from $f$ by homogenizing it with respect to $t$ and setting $t=0$. In other words, $f^{\ttop}$ is the homogeneous part of $f$ of highest degree. We regard $f^{\ttop}$ as an element of $R$. Then $$\F^{\ttop}=\{f_1^{\ttop},\ldots,f_m^{\ttop}\}$$ is a system of $m$ homogeneous equations in $n$ variables. Denote by $(\F^{\ttop})\subseteq R$ the ideal generated by $\F^{\ttop}$.

For any $f_1,\ldots,f_m\in R$, we may assume without loss of generality that $f_1^{\ttop},\ldots,f_m^{\ttop}$ are linearly independent. In fact, in case they are not, an equivalent system of equations with linearly independent homogeneous parts of highest degree can be obtained in polynomial time from $f_1,\ldots,f_m$ by Gaussian elimination. Notice that if $f_1^{\ttop},\ldots,f_m^{\ttop}$ are linearly independent, then so are $f_1,\ldots,f_m$ and $f_1^h,\ldots,f_m^h$.

For $d\geq 0$, denote by $R_d$ the $\mathbb{K}$-vector space generated by the monomials of $R$ of degree $d$. Then $\dim_{\mathbb{K}} R_d={n+d-1 \choose d}$. 
If $I\subseteq R$, then
$$I_d=\{f\in I\mid f\mbox{ homogeneous of degree $d$}\}\cup\{0\}$$ 
is a finite dimensional $\mathbb{K}$-vector space for all $d\geq 0$. 
Clearly, $(R/I)_d=R_d/I_d$ is also a finite dimensional $\mathbb{K}$-vector space of dimension 
$\dim_{\mathbb{K}} (R/I)_d=\dim_{\mathbb{K}} R_d -\dim_{\mathbb{K}} I_d.$

\subsection{Commutative algebra review}

Let $I\subseteq R$ be a homogeneous ideal and suppose that $f_1,\ldots,f_\mu$ is a minimal system of generators of $I$ with $\deg f_i=d_i$ for all $i$. Then $I$ is the homomorphic image of a free $R$-module $\mathbb{F}_0$. More precisely,  there is an epimorphism $\phi_0:\ \mathbb{F}_0\rightarrow I$, where $\mathbb{F}_0=\oplus_{i=1}^\mu R(-d_i)$ with the basis $\{e_1,\ldots, e_\mu\}$ and $\phi_0(e_i)=f_i$. By $R(-d_i)$ we mean a copy of the ring in which the degree of each element is shifted by $d_i$, i.e., the degree of a monomial $\prod\limits_{j=0}^{n}x_j^{a_j}$ in $R(-d_i)$ is $(\sum_{j=1}^n a_j) +d_i$. With this new grading, $\phi_0$ becomes a degree-preserving homomorphism, which means that each element of degree $d$ in $\mathbb{F}_0$  maps to an element of the same degree in $I$. The kernel of $\phi_0$ is generated by a finite number of homogeneous elements. As for $I$, the $R$-module $\ker \phi_0$ is the homomorphic image of a free $R$-module $\mathbb{F}_1$ and one may give a new grading to the elements of $\mathbb{F}_1$, so that the map $\phi_1:\ \mathbb{F}_1\rightarrow \ker \phi_0$ becomes degree-preserving.  Since $\ker \phi_0\subseteq \mathbb{F}_0$, $\phi_1$ may also be regarded as a degree-preserving map from $\mathbb{F}_1$ to $\mathbb{F}_0$, whose kernel is a homogeneous, finitely generated $R$-module. This process terminates after a finite number of steps, because for some $p\leq n$ we have $\ker \phi_{p}=0$ due to the Hilbert Syzygy Theorem. Thus we obtain an exact sequence of the form
\begin{equation}\label{eqn:mfr}
0\to  \mathbb{F}_p\to\cdots \to \mathbb{F}_2 \to \mathbb{F}_1 \to \mathbb{F}_0 \to I \to 0
\end{equation}
in which each $\mathbb{F}_i$ is a free $R$-module of the form  $\mathbb{F}_i = \oplus_j R(-j)^{\beta_{i,j}^R(I)}$. We say that the exact sequence (\ref{eqn:mfr}) has {\bf length} $p$.

\begin{defn}
An exact sequence as in (\ref{eqn:mfr}) constructed as we described above is a \textbf{graded minimal free resolution} of $I$ and the numbers $\beta_{i,j}^R(I)$ are the {\bf graded Betti numbers} of $I$.  
The {\bf Castelnuovo-Mumford regularity} of $I$, $\reg_R(I)$, is defined as $$\reg_R(I)=\max\{j-i:\ \beta_{i,j}^R(I)\neq 0\}.$$ If $\F=\{f_1,\ldots,f_m\}$ is a sequence of homogeneous polynomials, we let $\reg_R(\F)$ denote the regularity of the ideal $I=(\F)$.
The {\bf depth} of $R/I$ is the maximum length of a regular sequence in $R/I$.
The ideal $I$ is {\bf Cohen-Macaulay} if the {\bf Krull dimension} of $R/I$ is equal to its depth.
\end{defn}

It can be shown that the depth of $R/I$ is $n-p-1$, where $p$ is the length of a minimal free resolution of $I$. Although a graded minimal free resolution of a homogeneous ideal $I$ is not unique, the graded Betti numbers of $I$ are. In particular, the Castelnuovo-Mumford regularity of $I$ and the depth of $R/I$ are independent of the graded minimal free resolution used to compute them.

\begin{defn}
Let $I\subseteq R$ be a homogeneous ideal. 
The {\bf Hilbert function} of $R/I$ is the function 
$$\begin{array}{rcl}
H_{R/I}: \NN & \longrightarrow & \NN \\ d & \longmapsto & \dim_{\mathbb{K}}(R/I)_d.
\end{array}$$
The {\bf Hilbert series} of $R/I$ is the formal power series
$$HS_{R/I}(z)=\sum_{d\geq 0} H_{R/I}(d) z^d.$$
\end{defn}

\begin{notat}
Let $h(z)=\sum_{d\geq 0} h_d z^d\in\ZZ[[z]]$ be a formal power series in the variable $z$, with integer coefficients.
We denote by $\left[h(z)\right]$ the formal power series that one obtains by truncating $h(z)$ after the last consecutive positive coefficient, that is $$\left[h(z)\right]=\sum_{d=0}^\Delta h_dz^d,$$ where $\Delta=\sup\{d\geq 0\mid h_0,\ldots,h_d>0\}$.
\end{notat}

We are often interested in homogeneous ideals with the following property.

\begin{defn}
Let $I\subseteq R$ be a homogeneous ideal. 
We say that $I$ is {\bf Artinian} if there exists a $d\geq 0$ s.t. $I_d=R_d$.
\end{defn}

\begin{rmk}
The ideal $I\subseteq R$ is Artinian if and only if $HS_{R/I}(z)$ is a polynomial.
\end{rmk}

It can be shown that a homogeneous ideal $I=(f_1,\ldots,f_m)\subseteq R$ is Artinian if and only if $\F=\{f_1,\ldots,f_m\}$ contains a regular sequence of $n$ polynomials. This is the case if and only if the system $f_1=\cdots=f_m=0$ has no projective solutions, i.e., the only solution of the system is $x_1=\cdots=x_n=0$.

Throughout the paper, we often consider inhomogeneous systems $\F$ with at least a solution. For such a system, $J=(\F^h)$ is not Artinian. However, if $J$ is Cohen-Macaulay, one can consider an Artinian reduction. We give the definition of Artinian reduction only in the special case which will interest us.

\begin{defn}\label{defn:artinred}
Let $\F^h\subseteq S$ be a homogeneous system of polynomial equations which has finitely many (projective) solutions over the algebraic closure of $\mathbb{K}$.
Let $\ell\in S_1$ be a homogeneous linear form such that $\ell\nmid 0$ modulo $J=(\F^h)$. The ideal $$H=J+(\ell)/(\ell)\subseteq S/(\ell)$$ is an {\bf Artinian reduction} of $J$.
\end{defn}

Notice that a linear form $\ell$ as in Definition~\ref{defn:artinred} may not exist. In that case, the ideal $J$ does not have an Artinian reduction over $\mathbb{K}$.

\begin{rmk}\label{rmk:artinred}
In the situation of Definition~\ref{defn:artinred} one has
$$\reg_{S}(J)=\reg_{S/(\ell)}(H).$$
\end{rmk}

\subsection{Homogeneous semi-regular sequences}

In cryptography, we are often interested in analyzing the behavior of sequences of polynomials which are chosen ``at random''. In algebraic geometry, this can be formalized via the concept of genericity. A property is {\bf generic} or {\bf holds generically} if there exists a nonempty Zariski-open set where the property holds. By identifying a polynomial with the vector of its coefficients, both the set of homogeneous polynomials of degree $d$ and that of arbitrary polynomials of degree $\leq d$ can be regarded as a projective space. Notice that we identify polynomials which are the same up to a nonzero scalar multiple. Hence a {\bf generic homogeneous polynomial of degree $d$} is a homogeneous polynomial of degree $d$, which belongs to a given nonempty Zariski-open set in the projective space of all homogeneous polynomials of degree $d$. Similarly, a {\bf generic polynomial of degree $\leq d$} is a polynomial of degree $\leq d$, which belongs to a given nonempty Zariski-open set in the projective space of all polynomials of degree $\leq d$. Along the same lines, one defines a {\bf generic sequence} of polynomials. 

Notice that, in order for the concept of genericity to be meaningful, one needs to work over an {\bf infinite field}. In fact, over an infinite field, a nonempty Zariski-open set is dense, i.e., its closure is the whole space. Over a finite field, on the other side, the Zariski topology is the discrete topology. Hence, over a finite field, every set of polynomials is a Zariski-open set. In particular, a proper Zariski-open set is never dense over a finite field. Therefore, over a finite field, a generic property is no longer a property which is true ``almost everywhere".

Semi-regular sequences were first introduced by Pardue in~\cite{P99}, which was later expanded and published as~\cite{P10}. 

\begin{defn}\label{defn:srs}
Let $R=\mathbb{K}[x_1,\ldots,x_n]$ and assume that $\mathbb{K}$ is an infinite field. 
If $A=R/I$, where $I$ is a homogeneous ideal, and $f\in R_d$, then $f$ is {\bf semi-regular} on $A$ if for every $e\geq d$, the vector space map $A_{e-d}\rightarrow A_e$ given by multiplication by $f$ is of maximal
rank (that is, either injective or surjective). A sequence of homogeneous polynomials $f_1,\ldots,f_m$ is a {\bf semi-regular sequence} if each $f_i$ is semiregular on $A/(f_1,\ldots,f_{i-1})$, $1\leq i\leq m$.
\end{defn}

\begin{rmk}
If $m\leq n$, then $f_1,\ldots,f_m$ is a semi-regular sequence if and only if it is a regular sequence. 
\end{rmk}

Semi-regular sequences are conjectured by Pardue to be generic sequences of polynomials in~\cite[Conjecture~B]{P10}. He proves that this fact is equivalent to Fr\"oberg's Conjecture~\cite{Fr}, which is known to be true for many classes of ideals. See \cite{T19} for an up-to-date list of these classes.

Therefore, assuming that \cite[Conjecture~B]{P10} holds, ``random'' sequences of polynomials with coefficients in an infinite field $\mathbb{K}$ are semi-regular. One advantage of dealing with semi-regular sequences is that their Hilbert function is known.

\begin{prop}[{{\cite[Proposition~1]{P10}}}]
	\label{prop:hssemireg}
	Let $f_1,\ldots,f_m\in R$ be homogeneous polynomials of degrees $d_1,\ldots,d_m$. Then $f_1,\ldots,f_m$ is a semi-regular sequence on $R$ if and only if $$HS_{R/(f_1,\ldots,f_\ell)}(z)=\left[\frac{\prod_{i=1}^\ell (1-z^{d_i})}{(1-z)^n}\right]$$ for $1\leq \ell\leq m$.
\end{prop}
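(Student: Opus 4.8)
The plan is to prove both directions by induction on $\ell$, relating the semi-regularity condition on each $f_i$ to the behavior of the Hilbert series under truncation. First I would set up the notation: write $A_\ell = R/(f_1,\ldots,f_\ell)$ and let $P_\ell(z) = \frac{\prod_{i=1}^\ell(1-z^{d_i})}{(1-z)^n}$ be the ``expected'' Hilbert series before truncation, so the claim is that $HS_{A_\ell}(z) = [P_\ell(z)]$ for all $\ell$. The base case $\ell = 0$ is immediate, since $HS_R(z) = (1-z)^{-n}$ is already a power series with all positive coefficients, hence equals its own truncation.

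For the inductive step, I would analyze the effect of multiplication by $f_{\ell+1}$ on $A_\ell$. The key exact sequence is
\begin{equation*}
A_\ell(-d_{\ell+1}) \xrightarrow{\ \cdot f_{\ell+1}\ } A_\ell \longrightarrow A_{\ell+1} \longrightarrow 0,
\end{equation*}
which in each degree $e$ gives $\dim_{\mathbb{K}}(A_{\ell+1})_e = \dim_{\mathbb{K}}(A_\ell)_e - \rk(\cdot f_{\ell+1})_{e}$, where $(\cdot f_{\ell+1})_e : (A_\ell)_{e-d_{\ell+1}} \to (A_\ell)_e$. The semi-regularity of $f_{\ell+1}$ on $A_\ell$ says precisely that each such map has maximal rank, i.e., $\rk(\cdot f_{\ell+1})_e = \min\{\dim(A_\ell)_{e-d_{\ell+1}}, \dim(A_\ell)_e\}$. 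I would show that this ``maximal rank in every degree'' condition translates exactly into $HS_{A_{\ell+1}}(z) = [(1-z^{d_{\ell+1}}) HS_{A_\ell}(z)]$, that is, the truncation operation models the surjectivity that kicks in once the expected coefficient $(1-z^{d_{\ell+1}})P_\ell(z)$ goes non-positive.

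The heart of the argument is the combinatorial lemma that maximal-rank multiplication corresponds to truncation: if $HS_{A_\ell}(z) = \sum_e a_e z^e$, then the coefficient of $z^e$ in $(1-z^{d_{\ell+1}})HS_{A_\ell}(z)$ is $a_e - a_{e-d_{\ell+1}}$, and maximal rank gives $\dim(A_{\ell+1})_e = a_e - \min\{a_{e-d_{\ell+1}}, a_e\} = \max\{a_e - a_{e-d_{\ell+1}}, 0\}$. I would then argue that because $a_e$ is eventually nonincreasing down to $0$ (the quotient is Artinian once enough generic forms are imposed, and the partial-sum/truncation structure forces the positive part to be an initial segment), the value $\max\{a_e - a_{e-d_{\ell+1}}, 0\}$ agrees with the truncated series $[(1-z^{d_{\ell+1}})HS_{A_\ell}(z)]$ precisely when all coefficients before the truncation point are positive. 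Using the inductive hypothesis $HS_{A_\ell}(z) = [P_\ell(z)]$ and the identity $[(1-z^{d_{\ell+1}})[P_\ell(z)]] = [P_{\ell+1}(z)]$ for truncation (which I would verify as a formal manipulation) then closes the induction. The main obstacle I anticipate is this last step: one must check carefully that truncating before multiplying by $(1-z^{d_{\ell+1}})$ and truncating afterward yield the same result, i.e., that truncation commutes appropriately with the multiplication, and that the definition of semi-regularity (maximal rank for \emph{all} $e \geq d_{\ell+1}$) is equivalent to the single-pass ``injective then surjective'' transition encoded by $[\,\cdot\,]$; handling the boundary degree where the map switches from injective to surjective is the delicate point.
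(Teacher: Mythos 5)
First, note that the paper offers no proof of this proposition: it is imported verbatim from Pardue \cite{P10}, so your attempt can only be measured against the standard argument rather than against anything in the text. Your skeleton is the right one: induction on $\ell$, the exact sequence $A_\ell(-d_{\ell+1})\xrightarrow{\cdot f_{\ell+1}} A_\ell\to A_{\ell+1}\to 0$, the identity $\dim_{\mathbb{K}}(A_{\ell+1})_e=\dim_{\mathbb{K}}(A_\ell)_e-\rk(\cdot f_{\ell+1})_e$, and the formal commutation $\left[(1-z^{d_{\ell+1}})\left[P_\ell(z)\right]\right]=\left[P_{\ell+1}(z)\right]$ (which does hold, since the coefficient of $z^{\Delta+1}$ is non-positive in both series, where $\Delta$ is the truncation point of $P_\ell$).

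The gap is precisely in the step you flag as the heart of the matter, and the justification you offer for it is wrong. Writing $a_e=\dim_{\mathbb{K}}(A_\ell)_e$ and $d=d_{\ell+1}$, maximal rank in every degree gives $\dim_{\mathbb{K}}(A_{\ell+1})_e=\max\{a_e-a_{e-d},0\}$, whereas the truncated series $\left[(1-z^{d})HS_{A_\ell}(z)\right]$ equals $a_e-a_{e-d}$ up to the first degree $e_0$ where this difference is non-positive and is \emph{identically zero} afterwards. These two agree only if $a_e-a_{e-d}\leq 0$ for \emph{every} $e\geq e_0$, and your reason --- that ``$a_e$ is eventually nonincreasing down to $0$'' because ``the quotient is Artinian once enough generic forms are imposed'' --- is false: for $\ell<n$ the ring $A_\ell$ has positive Krull dimension and its Hilbert function eventually grows, and no genericity is assumed anywhere in the statement (it is an iff characterizing arbitrary homogeneous sequences). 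The correct mechanism has nothing to do with the shape of $HS_{A_\ell}$: the relevant fact is that $A_{\ell+1}$ is a standard graded quotient of $R$, so $(A_{\ell+1})_{e+1}=R_1\cdot(A_{\ell+1})_e$; hence if multiplication by $f_{\ell+1}$ is surjective in some degree $e_0$ (equivalently $(A_{\ell+1})_{e_0}=0$), then $(A_{\ell+1})_e=0$ for all $e\geq e_0$, i.e., surjectivity propagates upward and the support of $HS_{A_{\ell+1}}$ is automatically an initial segment. With that lemma in place both directions close cleanly: for the forward direction the first degree of non-injectivity is exactly the truncation point plus one, and for the converse one uses that $\dim_{\mathbb{K}}(A_{\ell+1})_e\geq\max\{a_e-a_{e-d},0\}$ always holds, with equality in degree $e$ if and only if the multiplication map has maximal rank there, so matching the prescribed truncated series degree by degree forces maximal rank in every degree.
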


The interest of semi-regular sequences for multivariate cryptography was first observed in~\cite{bardet2004complexity}. Since the definition of semi-regular sequences adopted by Bardet, Faug{\`e}re, and Salvy differs by the one given by Pardue, we will use the term {\bf cryptographic semi-regular sequence}.  The definition we give below is equivalent to~\cite[Definition~3]{bardet2004complexity}, as shown in~\cite[Proposition~3.2.5]{B04}.

\begin{defn}\label{defn:csrs}
Let $\mathbb{K}$ be an arbitrary field.
A sequence of homogeneous polynomials $f_1,\ldots,f_m\in R$ is a {\bf cryptographic semi-regular sequence} if $$HS_{R/(f_1,\ldots,f_m)}(z)=\left[\frac{\prod_{i=1}^m (1-z^{d_i})}{(1-z)^n}\right].$$
\end{defn}

\begin{rmk}
Any cryptographic semi-regular sequence with $m\geq n$ generates an Artinian ideal.
\end{rmk}

Notice that Definition~\ref{defn:srs} makes sense also over an arbitrary field. We now briefly compare semi-regular sequences over an arbitrary field with cryptographic semi-regular sequences.
First of all, any semi-regular sequence is a cryptographic semi-regular sequence by Proposition~\ref{prop:hssemireg}. The fact that the converse does not hold follows from the example just above~\cite[Conjecture~B]{P10}. 

Above we discussed a conjecture by Pardue which implies that sequences of polynomials whose coefficients are chosen uniformly at random over an infinite field are semi-regular with high probability. Pardue's Conjecture and the related Fr\"oberg's Conjecture have been extensively studied and there is evidence in support of their correctness. In~\cite[Conjecture~2]{BFS}, Bardet, Faug\`ere, and Salvy conjecture that, for $n\rightarrow\infty$, the proportion of cryptographic semi-regular sequences in the set of all sequences of $m$ polynomials in $\mathbb{F}_2[x_1,\ldots,x_n]$ of degrees $d_1,\ldots,d_m$ tends to 1.  In~\cite[Theorem~7.14]{HMS17}, Hodges, Molina, and Schlather disprove this conjecture and prove that the proportion tends to 0 as $n$ tends to $\infty$. They also propose variations of the conjecture that ``most" sequences of $m$ polynomials in $\mathbb{F}_2[x_1,\ldots,x_n]$ are semi-regular and prove some related results. To the extent of our knowledge, none of these conjectures has been studied for finite fields different from $\mathbb{F}_2$.

\subsection{The Macaulay matrix and the solving degree of a system of equations}

For $d\geq 1$, the {\bf Macaulay matrix} $M_d(\F)$ of a polynomial system $\F=\{f_1,\ldots,f_m\}$ is a matrix with entries in $\mathbb{K}$ whose columns are indexed by all elements of $\mathrm{Mon}(R)$ of degree $\leq d$, sorted in decreasing order from left to right with respect to the degree reverse lexicographic order. The rows are indexed by the polynomials $m_if_j$, where $f_j\in\F$, $m_i\in\mathrm{Mon}(R)$, and $\mathrm{deg}\ m_if_j\leq d$. The $(k,l)$-th entry of the matrix is the coefficient of the index of column $l$ in the polynomial which is the index of row $k$. 

In order to compute a Gr\"obner basis of $I$, one performs Gaussian elimination on the Macaulay matrix for increasing values of $d$. The complexity of computing the reduced row echelon form (RREF) of these matrices is bounded by a known function of the solving degree, which is the largest degree which is involved in the computation.
Therefore, the solving degree is the relevant parameter to estimate, in order to estimate the complexity of computing the solutions of the system $\F$. 

We analyze the following algorithm to compute the reduced Gr\"obner basis of $I$ with respect to the degree reverse lexicographic order. Start in degree $d=\max\{d_1,\ldots,d_m\}$. Perform Gaussian elimination on $M_d(\F)$ to compute its RREF. Since the rows of $M_d(\F)$ correspond to the polynomials $m_if_j$, Gaussian elimination corresponds to taking linear combinations of these polynomials. Hence, every row in the RREF corresponds to a polynomial in the ideal generated by $\mathcal{F}$. In order to better keep track of what happens to each row, we use a variant of Gaussian elimination which does not permute the rows. Suppose the $k$th row of $M_d(\F)$ is indexed by the polynomial $m_if_j$. Then the $k$th row in the RREF corresponds to a polynomial of the form [$m_if_j$ $+$ a linear combination of other rows of $M_d(\F)$]. If computing the RREF produces a polynomial $f$ which has leading term strictly smaller than that of $m_if_j$ and $\deg(f)<d$, then one appends to the matrix new rows $uf$ for all $u\in\mathrm{Mon}(R)$ such that $\deg(uf)\leq d$. This condition is checked for all rows. Gaussian elimination is then performed on the resulting matrix, and the process is repeated. Eventually, no degree reductions will be produced. Then we have either found a Gr\"obner basis of $I$ and we stop, or we have not and we proceed to the next degree, $d+1$.

The algorithm as described will compute a Gr\"obner basis for $I$. It does not, however, give a method for verifying whether the final matrix output corresponds to a Gr\"obner basis. One stopping criterion is that the S-polynomials corresponding to the output basis reduce to 0. Suppose we want to verify the output after $d$ iterations. The stopping criterion can be verified by Gaussian elimination, however, this will be of a matrix in degree $d’$, where $d <d’\le 2d-1$. Another possible stopping criterion is giving an a priori bound on the solving degree. Concretely, if one can prove that the solving degree of a system $\mathcal{F}$ is at most $D$, then one can stop the computation in degree $D$.

\begin{defn} Suppose $n \le m$ and $d_1\le\cdots\le d_m$. The {\bf Macaulay bound} is $$\sum\limits_{i=m-n+1}^m(d_i-1)+1.$$
\end{defn}

The Macaulay bound was shown by Lazard \cite{lazard1983grobner} to bound from above the degrees of the polynomials in a Gr\"obner basis of $(\mathcal{F})$, for a homogeneous system $\mathcal{F}$ that has finitely many solutions over the algebraic closure of $\mathbb{K}$.

\begin{defn}
The {\bf solving degree} of $\F$, $\sd(\F)$, is the least degree $d$ in which the algorithm described above returns a degree reverse lexicographic Gr\"obner basis of $I$.
\end{defn}

Intuitively, the solving degree is the largest degree of the polynomials involved in the computation of the reduced degree reverse lexicographic Gr\"obner basis of $I$.

\section{Solving degree,  degree of regularity, and Castelnuovo-Mumford regularity}\label{sect:dreg}

Let $\F=\{f_1,\ldots,f_m\}\subseteq R$ and consider the degree reverse lexicographic order on $R$.
We are interested in bounding the solving degree of $\F$.

The concept of degree of regularity is widely used in the cryptographic literature.
\begin{defn}[{{\cite[Definition~4]{bardet2004complexity}}}]
	Let $\F$ be a system of polynomial equations and assume that $(\F^{\ttop})_d=R_d$ for $d\gg 0$.
	The {\bf degree of regularity} of $\F$ is $$d_{\reg}(\F)=\min\{d\geq 0\mid (\F^{\ttop})_d=R_d\}.$$
	If $(\F^{\ttop})_d\neq R_d$ for all $d\geq 0$, we let $d_{\reg}(\F)=\infty$.
\end{defn}

It follows from~\cite[Proposition~4.5]{CG17} that if $I_d^{\ttop}= R_d$ for some $d\geq 0$, then 
$$d_{reg}(\mathcal{F})=\reg(\F^{\ttop}).$$

Many authors use the degree of regularity as a heuristic upper bound or estimate for the solving degree. 
To the best of our knowledge, however, this has never been formalized and we could not find a proof in the literature that the degree of regularity produces an estimate for the solving degree. The next examples show that the degree of regularity is not an upper bound for the solving degree and that their difference can be non-negligible.

The next two examples are inspired by an example in \cite[Page 10]{Beng}. In both of them we use the largest {\bf step degree} computed via the computer algebra system Magma~\cite{magma}, as a proxy for the solving degree. More precisely, we use the value of the largest step degree which appears in the Magma computation as a value of the solving degree. 

\begin{ex}
Let $R=\FF_7[x,y,z]$ and let $f_x=x^7-x$, $f_y=y^7-y$, $f_z=z^7-z$ be the field equations. Consider the equations 
$$f_1=x^5+y^5+z^5-1,\; f_2=x^3+y^3+z^2-1,\; f_3=y^6-1,\; f_4=z^6-1.$$
Consider the systems of equations
$$\F=\left\{\left. \prod_{j=1}^3 f_{i_j} \right| 1\leq i_1\leq i_2\leq i_3\leq 4\right\}\cup\{f_x,f_y,f_z\}.$$
Observe that the system generates a radical ideal, since it contains the field equations.
Using Magma one can compute
$$\sd(\F)=24>15=d_{\reg}(\F).$$
Notice that $\F$ contains equations of degree 18, however, this still does not account for the gap between the solving degree and the degree of regularity of $\F$.
\end{ex}

\begin{ex}
Let $R=\FF_7[x,y,z]$ and let $f_x=x^7-x$, $f_y=y^7-y$, $f_z=z^7-z$ be the field equations. Consider the equations 
$$f_1=x^5+y^5+z^5-1,\; f_2=x^3+y^3+z^2-1,\; f_3=f_x,\; f_4=f_y,\; f_5=f_z.$$
Consider the systems of equations
$$\F=\left\{\prod_{1\leq i\leq j\leq 5} f_i f_j \right\}\cup\{f_x,f_y,f_z\}.$$
Observe that $\F$ generates a radical ideal, since it contains the field equations.
Using Magma one can compute
$$\sd(\F)=21>13=d_{\reg}(\F).$$
Notice that $\F$ contains equations of degree 14, however, this still does not account for the gap between the solving degree and the degree of regularity of $\F$.
\end{ex}

\begin{ex}\label{gap1}
Let $\mathcal{F}=\{x^4-1,x^2y-x^2,y^2-1\}\subseteq \mathbb{Z}_7[x,y]$. The ideal $(\F^{\ttop})=(x^4,x^2y,y^2)$ is generated by a cryptographic semi-regular sequence and $d_{\reg}(\F)=4$. The reduced Gr\"obner bases of the ideal $I=(\mathcal{F})$ with respect to the degree reverse lexicographic order with $x>y$ is $\{y-1,x^4-1\}$ and $\sd(\mathcal{F})=5>4=d_{\reg}(\F).$
%
%
\end{ex}

The regularity of the ideal generated by the system $\F^h$ obtained by homogenizing the equations of $\F$ with respect to a new variable is an upper bound on the solving degree of $\F$, whenever the equations of $\F^h$ generate an ideal in generic coordinates. This is the case in particular if $\F$ contains the field equations. We refer the reader to~\cite[Definition~1.11]{CG17} for the definition of generic coordinates.

\begin{thm}[{{\cite[Theorem~3.23 and Theorem~3.26]{CG17}}}]\label{CGbound}
Consider the degree reverse lexicographic order on $R$.
Assume that $\F=\{f_1,\ldots,f_m\}\subseteq R$ contains the field equations and let $J=(f_1^h,\ldots,f_m^h)\subseteq S$. 
Then 
$$\sd(\mathcal{F})\leq \reg(J).$$
\end{thm}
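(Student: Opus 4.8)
The plan is to relate the solving degree of $\F$ to the Castelnuovo-Mumford regularity of $J=(\F^h)$ by tracking what happens to the Macaulay matrix computation upon homogenization. The central idea is that running the algorithm on $\F$ in degree $d$ is essentially the dehomogenization of running an analogous homogeneous computation on $\F^h$ in degree $d$. Concretely, every polynomial $m_i f_j$ appearing as a row of $M_d(\F)$ has a natural homogenization $\tilde m_i f_j^h$ of the same degree $d$ (after multiplying by an appropriate power of $t$), which lives in $J_{\leq d}$. Since Gaussian elimination only takes $\mathbb{K}$-linear combinations of rows, and homogenization is compatible with linear combinations of polynomials \emph{of the same degree}, the operations performed on $M_d(\F)$ can be mirrored by operations on the degree-$d$ piece of the homogeneous Macaulay computation for $J$. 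First I would make this dictionary precise: a degree-$d$ polynomial produced in the inhomogeneous computation lifts to a degree-$d$ homogeneous element of $J$, and its leading term (in the degree reverse lexicographic order, with $t$ the smallest variable) corresponds to the leading term of its lift.

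The key algebraic input is that $\reg(J)$ controls when the homogeneous ideal $J$ ``stabilizes'' in the sense relevant to Gröbner bases. Because the hypothesis guarantees that $J$ is in generic coordinates (the field equations force $t$, the variable of homogenization, together with the remaining variables into a generic enough position), one can invoke the standard fact that for an ideal in generic coordinates the maximal degree of an element in the reduced degree reverse lexicographic Gröbner basis is bounded by $\reg(J)$; that is, $\maxgb(J)\leq \reg(J)$. This is precisely where the generic-coordinates assumption is indispensable: without it, the Gröbner basis degrees need not be controlled by the regularity. I would therefore cite or reprove the inequality $\maxgb(J)\leq\reg(J)$ for ideals in generic coordinates, which is the engine of the whole argument.

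Next I would argue that once one has a homogeneous Gröbner basis of $J$ with all elements of degree at most $\reg(J)$, dehomogenizing (setting $t=1$) yields a generating set for $I=(\F)$ whose elements have degree at most $\reg(J)$ and which is a Gröbner basis of $I$ with respect to the degree reverse lexicographic order. The compatibility of degree reverse lexicographic orders under homogenization/dehomogenization (with $t$ as the last, smallest variable) is what makes leading terms transport correctly between $J$ and $I$; this is the technical lemma that needs care. From this one concludes that the algorithm applied to $\F$ must already have discovered a Gröbner basis by the time $d=\reg(J)$, since all the needed basis polynomials appear as rows (or reductions of rows) of $M_{\reg(J)}(\F)$. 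Hence $\sd(\F)\leq\reg(J)$.

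The main obstacle I anticipate is the faithful matching of leading terms and degree reductions between the inhomogeneous and homogeneous computations. The subtlety is that a genuine degree drop in the inhomogeneous algorithm (a cancellation of top-degree terms producing a lower-degree polynomial $f$ with $\deg(f)<d$) corresponds on the homogeneous side to a syzygy that introduces a factor of $t$; one must verify that these $t$-divisible phenomena do not create basis elements of degree exceeding $\reg(J)$ and that the generic-coordinates hypothesis rules out pathological leading-term behavior involving $t$. Controlling this interaction—so that the algorithm's degree reductions on $\F$ are exactly accounted for by the regularity of $J$—is the heart of the proof and the step I would spend the most effort on. The remaining bookkeeping, verifying that no row of degree exceeding $\reg(J)$ is ever required, is then routine given the bound $\maxgb(J)\leq\reg(J)$.
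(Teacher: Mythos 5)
Your outline reconstructs the argument of the cited source: the paper itself does not prove this statement but imports it verbatim from Caminata--Gorla, whose proof proceeds exactly as you describe --- pass to the homogenized system, use that the field equations place $J$ in generic coordinates (concretely, that $t$ is a nonzerodivisor modulo the saturation of $J$), invoke the Bayer--Stillman-type bound $\maxgb(J)\leq\reg(J)$ for ideals in generic coordinates, and transport leading terms back via the compatibility of the degree reverse lexicographic order with dehomogenization when $t$ is the smallest variable. This is essentially the same approach as the paper's (cited) proof, and the points you single out as delicate --- the bookkeeping of degree falls against $t$-divisible syzygies --- are precisely where the cited proof spends its effort.
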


It follows from Theorem~\ref{CGbound} that, whenever $t\nmid 0$ modulo $J=(\F^h)$, then the degree of regularity essentially bounds the solving degree. Unfortunately, it is often the case that $t\mid 0$ modulo $J$, as we discuss in Section~\ref{sect:eghdreg}. Therefore, the applicability of the next result is limited.

\begin{cor}
Consider the degree reverse lexicographic order on $R$. Assume that $\F=\{f_1,\ldots,f_m\}\subseteq R$ contains the field equations and let $J=(f_1^h,\ldots,f_m^h)\subseteq S$. If $t\nmid 0$ modulo $J$, then
$$\sd(\mathcal{F})\leq d_{\reg}(\F).$$
\end{cor}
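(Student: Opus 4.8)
The plan is to combine Theorem~\ref{CGbound} with the characterization of $d_{\reg}(\F)$ via $\reg(\F^{\ttop})$ that was recorded just before these statements. First I would observe that the hypotheses of the corollary (field equations present, $t\nmid 0$ modulo $J$) are exactly what is needed to invoke the preceding results, so the entire content is to show that under the extra assumption $t\nmid 0$ modulo $J$ one has $\reg(J)\leq d_{\reg}(\F)$, after which Theorem~\ref{CGbound} gives $\sd(\F)\leq\reg(J)\leq d_{\reg}(\F)$ and we are done.

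\smallskip
The key step is therefore to relate $\reg(J)$ to $\reg(\F^{\ttop})$. The condition $t\nmid 0$ modulo $J$ says precisely that $t$ is a nonzerodivisor on $S/J$, i.e. $t$ is a regular element. Setting $t=0$ kills $t$ and sends each $f_i^h$ to $f_i^{\ttop}$, so $S/(J+(t)) \cong R/(\F^{\ttop})$. Since $t$ is a regular linear form on $S/J$, this is an Artinian reduction (or at least a reduction by a regular hyperplane section), and by the invariance of regularity under a regular linear section — exactly the statement recorded in Remark~\ref{rmk:artinred} — we get
\begin{equation*}
\reg_S(J)=\reg_R(\F^{\ttop}).
\end{equation*}
Here I am using that $t$ is the homogenizing variable, so that $(\F^h \bmod t)=(\F^{\ttop})$ as ideals of $R=S/(t)$. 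Next I would invoke the fact cited from \cite[Proposition~4.5]{CG17} just after the definition of $d_{\reg}$: since $\F$ contains the field equations, $(\F^{\ttop})$ is Artinian, so $I_d^{\ttop}=R_d$ for $d\gg 0$ and hence $d_{\reg}(\F)=\reg(\F^{\ttop})$. Chaining these two equalities yields $\reg(J)=d_{\reg}(\F)$, which is in fact stronger than the inequality we needed, and the corollary follows from Theorem~\ref{CGbound}.

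\smallskip
The main obstacle I anticipate is justifying the passage $\reg_S(J)=\reg_R(\F^{\ttop})$ cleanly. The subtlety is that Remark~\ref{rmk:artinred} is stated for an Artinian reduction of a Cohen-Macaulay ideal with finitely many solutions, whereas here I only know that $t$ is a nonzerodivisor on $S/J$; I would need to check that $t$ being regular modulo $J$ is enough to guarantee $\reg_S(J)=\reg_{S/(t)}(J+(t)/(t))$, which is the standard fact that regularity is preserved under a regular hyperplane section and does not require full Cohen-Macaulayness. Once that identification is in place, the rest is a direct substitution. A secondary point to verify is that the hypothesis ``$\F$ contains the field equations'' really does force $(\F^{\ttop})$ to be Artinian — this holds because the field equation $x_i^q-x_i$ has top part $x_i^q$, so $\F^{\ttop}$ contains the pure powers $x_1^q,\dots,x_n^q$, whence $(\F^{\ttop})$ is Artinian — which licenses the use of $d_{\reg}(\F)=\reg(\F^{\ttop})$.
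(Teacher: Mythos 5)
Your proposal is correct and follows essentially the same route as the paper: the paper's proof likewise observes that $t\nmid 0$ modulo $J$ makes $(\F^{\ttop})\cong J+(t)/(t)$ an Artinian reduction of $J$, deduces $\reg(J)=\reg(\F^{\ttop})=d_{\reg}(\F)$ via Remark~\ref{rmk:artinred} and \cite[Proposition~4.5]{CG17}, and concludes by Theorem~\ref{CGbound}. Your extra checks (that the field equations force $(\F^{\ttop})$ to be Artinian, and that invariance of regularity under a linear nonzerodivisor does not need Cohen-Macaulayness) are sound and only make the argument more self-contained.
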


\begin{proof}
Let $(\F^{\ttop})=(f_1^{\ttop},\ldots,f_m^{\ttop})\subseteq R$. The natural isomorphism $S/(t)\cong R$ maps $J+(t)/(t)$ to $(\F^{\ttop})$. If $t\nmid 0$ modulo $J$, then $(\F^{\ttop})$ is an Artinian reduction of $J$, hence 
$$\reg(J)=\reg(\F^{\ttop})=d_{\reg}(\F).$$
The thesis now follows from Theorem~\ref{CGbound}.
\end{proof}

\section{Solving degree of cryptographic semi-regular systems}\label{semi-regular}

Let $\F=\{f_1,\ldots,f_m\}\subseteq R$ be a system of homogeneous or inhomogeneous equations. In the first  subsection we provide explicit bounds on the solving degree of $\F$ when the polynomials are homogeneous. In the second subsection we propose a definition of semi-regularity for inhomogeneous polynomials, which is different from that of~\cite{BFSY}. Then we derive explicit bounds on the solving degree of $\F$ when the polynomials are inhomogeneous. In addition, in the appendix we provide a table of upper bounds for the solving degree of cryptographic semi-regular systems with $n$ variables and $m=n+k$ quadratic equations, for $1 \le n,k \le 100$.

\subsection{Homogeneous cryptographic semi-regular sequences}

The definition of cryptographic semi-regular sequence implicitly provides a bound for the solving degree of the corresponding system. In fact, since the Hilbert series of a cryptographic semi-regular sequence is given by the formula in Definition~\ref{defn:csrs}, for any given choice of $m$, $n$ and the degrees $d_1,\ldots,d_m$, one can use the formula to compute the Castelnuovo-Mumford regularity of the ideal $I$ generated by the equations of the system $\mathcal{F}$. Explicitly, the degree of the term with the first non-positive coefficient in the Hilbert series expansion is equal to the Castelnuovo-Mumford regularity of $I$. The Castelnuovo-Mumford regularity of $I$ provides a bound for the solving degree of $\mathcal{F}$ by~\cite[Theorem~3.22]{CG17}, under the assumption that the ideal $I$ is in generic coordinates. However, even if such a bound can be computed for any choice of $m,n,d_1,\ldots,d_m$, it is a difficult task to provide an explicit formula for it.

If $m\leq n$, then a cryptographic semi-regular sequence is a regular sequence. The Castelnuovo-Mumford regularity of a regular sequence is given by a simple formula, which coincides with the Macaulay bound for $m=n$. If $m>n$, asymptotic formulas for the degree of regularity of a cryptographic semi-regular sequence (which in this situation coincides with the Castelnuovo-Mumford regularity of the ideal that the sequence generates) are given in~\cite{bardet2004complexity,BFSY}.

In this subsection we produce new explicit bounds for the solving degree of homogeneous cryptographic semi-regular systems. Instead of making an asymptotic analysis, we study the situation when the difference $m-n$ is small. More precisely, we provide explicit formulas in the cases: $m=n+1$, $n+2\leq m\leq n+5$ and the equations are quadratic, or $m=n+2$ and the equations are cubic.

Our first result concerns the solving degree of systems consisting of $n+1$ generic homogeneous polynomials $f_1,\ldots,f_{n+1}$ of degrees $d_i=\deg(f_i)$, $d_1\leq\cdots\leq d_{n+1}$ over an infinite field. This can be easily computed by using a result of Migliore and Mir\`o-Roig~\cite{MM}. Notice that, since the polynomials are generic and the field is infinite, then $f_1,\ldots,f_n$ are a complete intersection. Therefore, $(f_1,\ldots,f_n)_d=R_d$ for $d\geq d_1+\cdots+d_n-n+1$. Therefore, if $d_{n+1}\geq d_1+\cdots+d_n-n+1$, then the last equation may be removed from the system, as $f_{n+1}\in(f_1,\ldots,f_n)$. Hence we may assume without loss of generality that $d_{n+1}\leq d_1+\cdots+d_n-n$.

\begin{thm}\label{generic_aci}
Let $\mathbb{K}$ be an infinite field and let $\F=\{f_1,\ldots,f_{n+1}\}$ consist of $n+1$ generic homogeneous polynomials of degrees $d_i=\deg(f_i)$ in $n$ variables. Let $d_1\leq d_2\leq\cdots\leq d_{n+1}$. Assume without loss of generality that $d_{n+1}\leq d_1+\cdots+d_n-n$.
Then $\F$ is a cryptographic semi-regular sequence and 
$$\sd(\F)\leq\left\lfloor\frac{d_1+\cdots+d_{n+1}-n-1}{2}\right\rfloor +1.$$
In particular, if $d_1=\cdots=d_{n+1}=2$, then 
$$\sd(\F)\leq\left\lfloor\frac{n+1}{2}\right\rfloor +1$$
and if $d_1=\cdots=d_{n+1}=3$, then 
$$\sd(\F)\leq n+2.$$
\end{thm}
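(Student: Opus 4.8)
The plan is to bound $\sd(\F)$ by the Castelnuovo--Mumford regularity of $I=(\F)$ and to read that regularity off the Hilbert series. First I would record the semi-regularity: by the result of Migliore and Mir\`o-Roig \cite{MM}, $n+1$ generic forms in the stated range $d_{n+1}\le d_1+\cdots+d_n-n$ have the expected Hilbert series, so $\F$ is a semi-regular (hence, by Proposition~\ref{prop:hssemireg}, a cryptographic semi-regular) sequence; since $m=n+1>n$ the ideal $I$ is Artinian. Because $\F$ consists of generic forms, $I$ is in generic coordinates, so \cite[Theorem~3.22]{CG17} gives $\sd(\F)\le\reg(I)$. As $I$ is Artinian, $\reg(I)=\Delta+1$, where $\Delta=\max\{d:(R/I)_d\neq 0\}$; by Definition~\ref{defn:csrs} this $\Delta$ is exactly the truncation index of the series $\prod_{i=1}^{n+1}(1-z^{d_i})/(1-z)^n$, i.e. the largest $d$ such that the coefficients $h_0,\dots,h_d$ are all positive. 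It therefore remains to locate the first non-positive coefficient of this series.

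Next I would factor $\prod_{i=1}^{n+1}(1-z^{d_i})/(1-z)^n=(1-z^{d_{n+1}})P(z)$, where $P(z)=\prod_{i=1}^n(1+z+\cdots+z^{d_i-1})$ is the Hilbert series of the complete intersection $A=R/(f_1,\dots,f_n)$. Thus $h_t=H_A(t)-H_A(t-d_{n+1})$, where $H_A$ is palindromic of socle degree $s=\sum_{i=1}^n d_i-n$, so that $H_A(t)=H_A(s-t)$. A short computation using $z^s P(1/z)=P(z)$ then shows that $(1-z^{d_{n+1}})P(z)$ is \emph{anti}-palindromic of degree $s+d_{n+1}$, i.e. $h_{\,s+d_{n+1}-t}=-h_t$ for all $t$; equivalently the coefficient sequence is antisymmetric about the center $c=(s+d_{n+1})/2$.

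This antisymmetry reduces the problem to a single positivity statement: it gives $h_t\le 0$ for $t\ge c$ (with $h_c=0$ when $c\in\ZZ$) as soon as one knows $h_t>0$ for $0\le t<c$, i.e. $H_A(t)>H_A(t-d_{n+1})$ below the center. Using $H_A(t)=H_A(s-t)$ and the unimodality of $H_A$, the comparison always reduces to two arguments in the increasing part of $H_A$, so the inequality is strict unless both arguments lie in the top ``plateau'' of $H_A$, which would force that plateau to have length larger than $d_{n+1}+1$. Ruling this out is the main obstacle, and it amounts to the fact that the top plateau of a complete-intersection Hilbert function has length at most $d_n\le d_{n+1}$; this follows from the log-concavity of $\prod_{i=1}^n(1+z+\cdots+z^{d_i-1})$, which makes $H_A$ strictly increasing up to that plateau. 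Equivalently, one may simply read $\reg(I)$ off the explicit minimal free resolution of $n+1$ general forms computed in \cite{MM}, where this positivity is already built in. Granting it, the first non-positive coefficient occurs at $\lceil c\rceil$, whence $\Delta=\lceil c\rceil-1=\lfloor(s+d_{n+1}-1)/2\rfloor=\lfloor(d_1+\cdots+d_{n+1}-n-1)/2\rfloor$, and $\sd(\F)\le\reg(I)=\Delta+1$ is the claimed bound.

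Finally I would obtain the two special cases by substitution. For $d_1=\cdots=d_{n+1}=2$ one has $d_1+\cdots+d_{n+1}=2n+2$, giving $\lfloor(n+1)/2\rfloor+1$; for $d_1=\cdots=d_{n+1}=3$ one has $d_1+\cdots+d_{n+1}=3n+3$, giving $\lfloor(2n+2)/2\rfloor+1=n+2$.
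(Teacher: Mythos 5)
Your overall strategy coincides with the paper's at the top level: both reduce $\sd(\F)$ to $\reg(I)$ via \cite[Theorem~3.22]{CG17} (using that an ideal generated by generic forms is in generic coordinates), both get semi-regularity from \cite{MM}, and both identify $\reg(I)$ with the position of the first non-positive coefficient of $\prod_{i=1}^{n+1}(1-z^{d_i})/(1-z)^n$. Where you differ is in how that position is computed: the paper simply quotes \cite[Lemma~2.5]{MM} for the value of $\reg(I)$, whereas you derive it by hand from the factorization $(1-z^{d_{n+1}})P(z)$, the palindromy of the complete-intersection series $P$, and the resulting anti-palindromy of the product about $c=(s+d_{n+1})/2$. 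This is a genuinely more self-contained route, and the anti-palindromy observation is correct and does all the work for $t\geq c$: once $h_t>0$ for all $t<c$ is known, the first non-positive coefficient sits at $\lceil c\rceil$, and your floor/ceiling bookkeeping and the two specializations are all correct.

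The one step that is not fully justified is precisely that strict positivity for $0\le t<c$. Log-concavity of $P$ (which does hold, as a product of log-concave polynomials with no internal zeros) combined with symmetry rules out \emph{intermediate} plateaus, i.e.\ it shows $H_A$ is strictly increasing until it first reaches its maximum; but it does not by itself bound the \emph{length} of the top plateau, which is the second ingredient you need --- for instance $(1+z)(1+z+\cdots+z^{9})$ is symmetric and log-concave with a top plateau of length $9$. The bound ``top plateau of $\prod_{i=1}^n(1+z+\cdots+z^{d_i-1})$ has length at most $d_n$'' is true, but it needs its own short combinatorial argument; it is not a consequence of log-concavity. You flag this yourself as the main obstacle and offer to fall back on the explicit resolution in \cite{MM}, which is in effect exactly what the paper does by citing \cite[Lemma~2.5]{MM}. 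So the proposal is correct modulo either completing that plateau-length lemma or accepting the citation; as written, the log-concavity sentence alone does not close the argument.
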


\begin{proof}
Let $I=(f_1,\ldots,f_{n+1})\subseteq R$. The fact that $\F$ is a cryptographic semi-regular sequence is explained in~\cite{MM}, using results from~\cite{S80,W87}. By \cite[Lemma~2.5]{MM} one has 
$$\reg(I)=\left\lfloor\frac{d_1+\cdots+d_{n+1}-n-1}{2}\right\rfloor +1.$$
If $d_1=\cdots=d_{n+1}=2$, then $$\reg(I)=\left\lfloor\frac{n+1}{2}\right\rfloor +1.$$
If $d_1=\cdots=d_{n+1}=3$, then $$\reg(I)\leq n+2.$$
Since $I$ is generated by generic polynomials, then it is in generic coordinates. Therefore we conclude by~\cite[Theorem~3.22]{CG17}.
\end{proof}

The same result holds for homogeneous cryptographic semi-regular systems over a finite field. Over a field of characteristic zero, a random sequence of $n+1$ polynomials will be semi-regular, hence also cryptographic semi-regular, see~\cite{PR09}. However, we cannot make the same claim for polynomials over a finite field. Hence, in Theorem \ref{generic_aci_ff}, we specify $\mathcal{F}$ to be a homogeneous cryptographic semi-regular sequence. The proof of Theorem \ref{generic_aci_ff} is the same as the proof of Theorem \ref{generic_aci}.

\begin{thm}\label{generic_aci_ff}
Let $\mathbb{K}$ be a finite field and let $\F=\{f_1,\ldots,f_{n+1}\}$ be a homogeneous cryptographic semi-regular sequence of  polynomials of degrees $d_i=\deg(f_i)$ in $n$ variables. Let $I=(f_1,\ldots,f_{n+1})$. Let $d_1\leq d_2\leq\cdots\leq d_{n+1}$. Assume without loss of generality that $d_{n+1}\leq d_1+\cdots+d_n-n$.
Then 
$$\maxgb(I)\leq\left\lfloor\frac{d_1+\cdots+d_{n+1}-n-1}{2}\right\rfloor +1.$$
In particular, if $d_1=\cdots=d_{n+1}=2$, then 
$$\maxgb(I)\leq\left\lfloor\frac{n+1}{2}\right\rfloor +1$$
and if $d_1=\cdots=d_{n+1}=3$, then 
$$\maxgb(I)\leq n+2.$$

If in addition $I$ is in generic coordinates, then
$$\sd(\F)\leq\left\lfloor\frac{d_1+\cdots+d_{n+1}-n-1}{2}\right\rfloor +1.$$
In particular, if $d_1=\cdots=d_{n+1}=2$ and $I$ is in generic coordinates, then 
$$\sd(\F)\leq\left\lfloor\frac{n+1}{2}\right\rfloor +1$$
and if $d_1=\cdots=d_{n+1}=3$ and $I$ is in generic coordinates, then 
$$\sd(\F)\leq n+2.$$
\end{thm}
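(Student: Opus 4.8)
The statement splits into two claims: an unconditional bound on $\maxgb(I)$, and a bound on $\sd(\F)$ under the additional hypothesis that $I$ is in generic coordinates. The plan is to first observe that, since $\F$ is a cryptographic semi-regular sequence with $m=n+1\geq n$, the ideal $I$ is Artinian and its Hilbert function is completely determined by Definition~\ref{defn:csrs}. Writing $D=\max\{d\mid (R/I)_d\neq 0\}$ for the top nonzero degree of $R/I$, I would reduce everything to the two facts that (i) $\maxgb(I)\leq D+1$ and (ii) $D=\lfloor (d_1+\cdots+d_{n+1}-n-1)/2\rfloor$.

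For (i), the plan is to pass to the initial ideal $\inid(I)$ with respect to the degree reverse lexicographic order. Since $\inid(I)$ has the same Hilbert function as $I$, it is an Artinian monomial ideal whose top nonzero degree is again $D$; equivalently, every monomial of degree $>D$ lies in $\inid(I)$. The combinatorial core is then the following observation: if $m$ is a minimal monomial generator of $\inid(I)$ of degree $e\geq D+2$, then writing $m=x_i m'$ with $\deg m'=e-1>D$ forces $m'\in\inid(I)$, contradicting minimality of $m$. Hence every minimal generator of $\inid(I)$ has degree at most $D+1$, and since for a homogeneous ideal $\maxgb(I)$ is precisely the largest degree of such a generator, $\maxgb(I)\leq D+1$.

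For (ii), I would compare with the generic almost complete intersection over an infinite field. A generic sequence of $n+1$ forms of degrees $d_1,\ldots,d_{n+1}$ is cryptographic semi-regular, as used in the proof of Theorem~\ref{generic_aci}, so it generates an Artinian ideal $I_{\mathrm{gen}}$ with exactly the same Hilbert series as $I$, and hence the same top degree $D$. For any Artinian homogeneous ideal the regularity equals the top nonzero degree of the quotient plus one; applying this to $I_{\mathrm{gen}}$ together with \cite[Lemma~2.5]{MM} gives $D+1=\reg(I_{\mathrm{gen}})=\lfloor (d_1+\cdots+d_{n+1}-n-1)/2\rfloor+1$, which yields (ii). Combining (i) and (ii) proves the $\maxgb$ bound, and the quadratic and cubic cases follow by substituting $d_1=\cdots=d_{n+1}=2$ and $d_1=\cdots=d_{n+1}=3$. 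The same Artinian identity applied to $I$ itself gives $\reg(I)=D+1=\lfloor (d_1+\cdots+d_{n+1}-n-1)/2\rfloor+1$ directly from the Hilbert function, so the regularity value transfers to the finite field without any appeal to genericity of the minimal free resolution.

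The final part is then immediate: when $I$ is in generic coordinates, \cite[Theorem~3.22]{CG17} gives $\sd(\F)\leq\reg(I)$, and the identity $\reg(I)=D+1$ supplies the stated value. I expect the main obstacle to be the $\maxgb$ bound over a finite field: one cannot invoke genericity to control the resolution or to guarantee $\maxgb(I)=\reg(I)$, so the bound must instead be extracted purely from the Artinian structure of $\inid(I)$ through the combinatorial observation in step (i). This is exactly the point at which the argument departs from the infinite-field proof of Theorem~\ref{generic_aci}. Once that bound is in place, the identification of $D$ via \cite[Lemma~2.5]{MM} and the substitution checks for the quadratic and cubic specializations are routine.
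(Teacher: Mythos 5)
Your proof is correct and follows the same overall route as the paper: the paper simply declares that the proof of this theorem ``is the same as the proof of Theorem~\ref{generic_aci}'', i.e.\ it reads off $\reg(I)$ from the Hilbert series via \cite[Lemma~2.5]{MM}, gets the Gr\"obner basis degree bound from \cite[Proposition~4.5 and Remark~4.6]{CG17}, and the solving degree bound from \cite[Theorem~3.22]{CG17}. The two places where you add genuine content are exactly the places the paper leaves implicit. First, you justify why the regularity formula, proved in \cite{MM} for generic forms over an infinite field, transfers to a cryptographic semi-regular sequence over a finite field: both ideals are Artinian with the same Hilbert series, and for a finite length module the regularity is the top nonzero degree, so $\reg(I)=D+1$ is determined by the Hilbert function alone. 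Second, you prove $\maxgb(I)\leq\reg(I)$ from scratch via the initial ideal (every monomial of degree $>D$ lies in $\inid(I)$ by Macaulay's theorem on Hilbert functions of initial ideals, so no minimal generator can have degree $\geq D+2$), where the paper cites \cite{CG17}; this is a correct, elementary replacement for that citation. All the individual steps check out, including the identification of $\maxgb(I)$ with the largest degree of a minimal generator of $\inid(I)$ for a homogeneous ideal, so the argument is sound and, if anything, more self-contained than the paper's.
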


\begin{rmk}
Notice that the bound obtained in Theorem~\ref{generic_aci} and Theorem~\ref{generic_aci_ff} implies that $$\sd(\F)\leq d_1+\ldots+d_n-n,$$ since $d_{n+1}\leq d_1+\ldots+d_n-n$. In particular, this bound is always better than the Macaulay bound.
\end{rmk}

We now study the case when $n+2\leq m\leq n+5$ and the equations are homogeneous and quadratic. The assumption that $\F$ generates an ideal which is in generic coordinates is satisfied for ``sufficiently general'' polynomials, or when $\F$ contains the homogenizations of the field equations.

\begin{thm}\label{d-reg}
Let $\F=\{f_1,\ldots,f_m\}$ be a cryptographic semi-regular sequence of homogeneous polynomials of degree $2$ in $n$ variables. Let $I=(\F)$ and let
\begin{align*}
r(m,n)=\begin{cases}
\left\lceil(4+n-\sqrt{4+n})/2\right\rceil\quad &\text{if } m=n+2,\vspace{.2cm}\\\vspace{.2cm}
\left\lceil (6 + n - \sqrt{16 + 3 n})/2\right\rceil\quad &\text{if } m=n+3,\\\vspace{.2cm}
\left\lceil(8 + n - \sqrt{20 + 3 n + \sqrt{2} \sqrt{128 + 39 n + 3 n^2}})/2\right\rceil\quad &\text{if } m=n+4,\\
\left\lceil(10 + n - \sqrt{40 + 5 n + \sqrt{2}\sqrt{288 + 75 n + 5 n^2}})/2\right\rceil\quad &\text{if } m=n+5.
\end{cases}
\end{align*}
Then $$\maxgb(I)\leq r(m,n).$$
If in addition we assume that $I$ is in generic coordinates, then $$\sd(\F)\leq r(m,n).$$
\end{thm}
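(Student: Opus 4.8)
The plan is to identify $r(m,n)$ with the Castelnuovo--Mumford regularity $\reg(I)$ and to read the latter off the Hilbert series. Since $\F$ is a cryptographic semi-regular sequence with $m=n+k>n$, the ideal $I$ is Artinian and, by Definition~\ref{defn:csrs},
\[
\HS_{R/I}(z)=\left[\frac{(1-z^2)^m}{(1-z)^n}\right]=\left[(1-z^2)^k(1+z)^n\right].
\]
For an Artinian graded algebra the regularity of the defining ideal is one more than the top nonzero degree of the quotient. Writing $h_d$ for the coefficient of $z^d$ in $h(z)=(1-z^2)^k(1+z)^n$ and letting $d^\ast$ be the smallest $d$ with $h_d\le 0$, the truncation $[\,\cdot\,]$ cuts off exactly after degree $d^\ast-1$, so the top nonzero degree of $R/I$ is $d^\ast-1$ and $\reg(I)=d^\ast$. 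Granting $\reg(I)=d^\ast$, the two asserted bounds are then formal: $(R/I)_{d^\ast}=0$ forces $\inid(I)\supseteq\mathfrak{m}^{d^\ast}$ in any coordinate system, so no minimal generator of $\inid(I)$ --- equivalently no element of the reduced degree reverse lexicographic Gr\"obner basis of $I$ --- has degree exceeding $d^\ast$, giving $\maxgb(I)\le d^\ast$ unconditionally; and $\sd(\F)\le\reg(I)=d^\ast$ whenever $I$ is in generic coordinates by~\cite[Theorem~3.22]{CG17}.

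It remains to compute $d^\ast$ in closed form. First I would observe that $h(z)$ is palindromic for $k$ even and antipalindromic for $k$ odd; in either case $h_d=0$ if and only if $h_{n+2k-d}=0$, so the real roots of the crossover equation are symmetric about the center $c=(n+2k)/2$, and the positive coefficients occupy a single interval around $c$. Hence $d^\ast=\lceil\rho\rceil$, where $\rho$ is the smaller real root of the equation obtained from $h_d=0$. To write that equation down, I would expand $h_d=\sum_{j=0}^{k}(-1)^j\binom{k}{j}\binom{n}{d-2j}$, divide by a convenient binomial, and clear denominators. Setting $a=d-k$ and $b'=n-d+k$, so that $a+b'=n$, the reflection $d\mapsto n+2k-d$ is precisely the involution $a\leftrightarrow b'$; the cleared equation is therefore symmetric under this involution (possibly after discarding the spurious central factor $a-b'$ coming from $d=c$ in the antipalindromic case), hence a polynomial in the single symmetric quantity $e_2=ab'=(d-k)(n-d+k)$. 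Solving that polynomial for $e_2$ and then solving the quadratic $(d-k)(n-d+k)=e_2$ for $d$ produces $\rho$, and $r(m,n)=\lceil\rho\rceil$.

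The case-by-case algebra is where I expect the real work to lie. For $k=2$ and $k=3$ the equation in $e_2$ reduces to a linear one, so $e_2$ is an explicit affine function of $n$ and back-solving the quadratic in $d$ introduces a single radical, reproducing the $\sqrt{4+n}$ and $\sqrt{16+3n}$ of the statement. For $k=4$ and $k=5$ the equation in $e_2$ is genuinely quadratic, so solving it already contributes the inner radical $\sqrt{2}\,\sqrt{\cdots}$, and back-solving for $d$ contributes the outer radical, giving the nested expressions. The delicate points to verify are: (i) that the real root of the cleared polynomial correctly locates the sign change of the integer sequence $(h_d)$, so that $\lceil\rho\rceil$ is genuinely the first index with $h_d\le 0$; (ii) the correct branch in each square root, i.e.\ that $\rho$ is the smaller root so that $d^\ast=\lceil\rho\rceil$ is the first non-positive coefficient; and (iii) that $h_d>0$ for every $d<d^\ast$, which follows from the single-interval sign pattern once one checks $h_c>0$ at the center. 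Finitely many small values of $n$, for which the asymptotic sign analysis could fail, can be dispatched by a direct check against the table in the appendix.
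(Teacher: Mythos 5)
Your outline follows the paper's proof in all essentials: cryptographic semi-regularity together with $m>n$ makes $I$ Artinian, $\reg(I)$ is the first degree $d^{\ast}$ at which the coefficient of $(1-z^{2})^{m}/(1-z)^{n}$ becomes non-positive, and the two stated bounds then come from \cite{CG17} (your direct observation that $I_{d^{\ast}}=R_{d^{\ast}}$ forces every minimal generator of $\inid(I)$, hence every element of the reduced Gr\"obner basis, to have degree at most $d^{\ast}$ is a perfectly good replacement for the paper's citation of Proposition~4.5 and Remark~4.6 there). The one genuinely different ingredient is how you extract the closed forms: the paper writes the coefficient of $z^{d}$ as a positive factor times an explicit polynomial of degree $m-n$ in $d$ and simply lists its roots case by case, whereas you exploit the (anti)palindromy of $(1-z^{2})^{k}(1+z)^{n}$ to reduce to a polynomial of degree $\lfloor k/2\rfloor$ in $e_{2}=(d-k)(n-d+k)$. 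That reduction is sound and explains more conceptually why the answers are nested radicals; for instance, for $k=2$ the locus $e_{2}=(n^{2}-n-4)/4$ is exactly the paper's $4d^{2}-4(n+4)d+n^{2}+7n+12=0$.

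One assertion in your sketch is wrong, although you hedge it by listing the relevant fact among the points still to be verified. The positive coefficients of $h(z)=(1-z^{2})^{k}(1+z)^{n}$ do \emph{not} form a single interval around the center $c=(n+2k)/2$: the sign of the crossover polynomial alternates through all of its real roots, so for $k=2$ the pattern is $+,-,+$ with the \emph{negative} block around $c$, for $k=4$ it is $+,-,+,-,+$ with $h_{c}>0$, and for odd $k$ it is antisymmetric about $c$ with $h_{c}=0$ whenever $c$ is an integer. Consequently ``check $h_{c}>0$ at the center'' is not the right verification and would fail already for $k=2$. What actually needs to be checked --- and what the paper's proof does check in each case --- is that the smallest real root $k_{1}$ and the next root $k_{2}$ satisfy $\lceil k_{1}\rceil<k_{2}$; this simultaneously gives $h_{d}>0$ for every integer $d<\lceil k_{1}\rceil$ (such $d$ lie in $(-\infty,k_{1})$, where the polynomial is positive) and $h_{\lceil k_{1}\rceil}\le 0$, hence $d^{\ast}=\lceil k_{1}\rceil=r(m,n)$. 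With that substitution your plan goes through.
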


\begin{proof}
Let $I=(f_1,\ldots,f_m)\subseteq R$.
Since $m>n$ and $\F=\{f_1,\ldots,f_m\}$ is a cryptographic semi-regular sequence, then there exists a $d$ such that $I_d=R_d$. The Castelnuovo-Mumford regularity of $I$ is the least such degree. By the definition of cryptographic semi-regular sequence, $\reg(I)$ is the least degree in which the formal power series $(1-z^{2})^m/(1-z)^n$ has a non-positive coefficient.
One has 
\begin{align}
\frac{(1-z^{2})^m}{(1-z)^n}&=(1-z)^{m-n}(1+z)^{m}=\left(\sum_{j=0}^{m-n}(-1)^{j}\binom{m-n}{j}z^{j}\right)\left(\sum_{i=0}^{m}\binom{m}{i}z^{i}\right)\nonumber\\
&=\sum_{j=0}^{m-n}\left(\sum_{i=0}^{m}(-1)^j\binom{m-n}{j}\binom{m}{i}z^{i+j}\right)\label{Hilbert series}.
\end{align}
Hence the coefficient of $z^k$ in (\ref{Hilbert series}) is 
\begin{align*}
C_k=\sum_{l=0}^k \left((-1)^l\binom{m-n}{l}\binom{m}{k-l}\right)
\end{align*}
and for $k\geq m-n$ we have
\begin{align*}
C_k&=\sum_{l=0}^{m-n} \left((-1)^l\binom{m-n}{l}\binom{m}{k-l}\right)\\
&=\frac{m!(m-n)!}{k!(2m-n-k)!}\sum_{l=0}^{m-n} \left((-1)^l\binom{2m-n-k}{m-n-l}\binom{k}{l}\right)
\end{align*}
Setting $r:=m-n$, we have 
\begin{align}\label{sign}
C_k=\frac{(n+r)!}{k!(2r+n-k)!}f(r,k),
\end{align}
where
\begin{align}\label{f(r,k)}
f(r,k)=r!\sum_{l=0}^{r} \left((-1)^l\binom{2r+n-k}{r-l}\binom{k}{l}\right).
\end{align}	
By (\ref{sign}), $C_k$ and $f(r,k)$ have the same sign. Hence in order to find $\reg(I)$, it suffices to find the smallest $k$ for which $f(r,k)$ is non-positive.
	
(i) Letting $m=n+2$ in (\ref{f(r,k)}), we obtain
\begin{align*}
f(2,k)=4k^2-4(4+n)k+n^2+7n+12.
\end{align*}
As a function of $k$, $f(2,k)$ has two zeros
$$k_1=(4+n-\sqrt{4+n})/2, \quad k_2=(4+n+\sqrt{4+n})/2.$$
So it  is positive in $(-\infty, k_1)\cup(k_2,+\infty)$, and is negative in $(k_1,k_2)$.
Since $\lceil k_1\rceil<k_2$, the first non-positive coefficient of $z^k$ in (\ref{Hilbert series}) occurs for $k=\lceil (4+n-\sqrt{4+n})/2\rceil$. 

\medskip
(ii) Letting $m=n+3$ in (\ref{f(r,k)}), we obtain
\begin{align*}
f(3,k)=- 8 k^3+ 12  (6 + n)k^2- 2 (92 + 33 n + 3 n^2)k+n^3+15n^2+74n+120.
\end{align*}
As a function of $k$,  $f(3,k)$ has  three zeros
$$k_1=(6 + n - \sqrt{16 + 3 n})/2, \quad k_2=(6 + n)/2, \quad k_3=(6 + n + \sqrt{16 + 3 n})/2,$$
 it is positive in $(-\infty, k_1)\cup(k_2,k_3)$, and negative in $(k_1,k_2)\cup(k_3,+\infty)$. Since $\lceil k_1\rceil<k_2$, then $\reg(I)=\lceil k_1\rceil.$
	
\medskip
(iii) Letting $m=n+4$ in (\ref{f(r,k)}), we obtain
\begin{align*}
f(4,k)= & 16 k^4 - 32(8 + n)k^3+ 8(172 + 45 n + 3 n^2)k^2- 8(352 + 148 n + 21 n^2 + n^3)k\\
&  + n^4+ 26 n^3+  251 n^2 +1066 n +1680 .
\end{align*}
As a function of $k$, $f(4,k)$ admits four zeros 
\begin{eqnarray*}
k_1&=&(8 + n - \sqrt{20 + 3 n + \sqrt{2} \sqrt{128 + 39 n + 3 n^2}})/2,\\
k_2&=&(8 + n - \sqrt{20 + 3 n - \sqrt{2}\sqrt{128 + 39 n + 3 n^2}})/2,\\
k_3&=&(8 + n + \sqrt{20 + 3 n - \sqrt{2}\sqrt{128 + 39 n + 3 n^2}})/2,\\
k_4&=&(8 + n + \sqrt{20 + 3 n + \sqrt{2}\sqrt{128 + 39 n + 3 n^2}})/2.
\end{eqnarray*}
This function is positive in $(-\infty,k_1)\cup(k_2,k_3)\cup(k_4,+\infty)$, and negative in $(k_1,k_2)\cup(k_3,k_4)$. Since $\lceil k_1\rceil<k_2$, then $\reg(I)=\lceil k_1\rceil$.
	
\medskip
(iv) Letting $m=n+5$ in (\ref{f(r,k)}), we obtain
\begin{align*}
f(5,k)=& - 32 k^5 + 80 (10 + n)k^4- 80  (92 + 19 n + n^2)k^3\\
&	- 2  (27024 + 12450 n + 2175 n^2 + 170 n^3 + 5 n^4)k		+40  (760 + 246 n + 27 n^2 + n^3)k^2\\
&+ n^5 + 40 n^4 +  635 n^3 + 5000 n^2 + 19524 n   +30240.
\end{align*}
As a function of $k$, $f(5,k)$ admits five zeros
\begin{eqnarray*}
i_1&=&\frac{1}{2} \left(10 + n - \sqrt{40 + 5 n + \sqrt{2} \sqrt{288 + 75 n + 5 n^2}}\right)\\
i_2&=&\frac{1}{2} \left(10 + n - \sqrt{40 + 5 n - \sqrt{2} \sqrt{288 + 75 n + 5 n^2}}\right)\\
i_3&=&\frac{10+n}{2}\\
i_4&=&\frac{1}{2} \left(10 + n +\sqrt{40 + 5 n -\sqrt{2} \sqrt{288 + 75 n + 5 n^2}}\right)\\
i_5&=&\frac{1}{2} \left(10 + n + \sqrt{40 + 5 n + \sqrt{2} \sqrt{288 + 75 n + 5 n^2}}\right)
\end{eqnarray*}
Since the polynomial function $f(k,5)$ is continuous and postive in $i\in(-\infty,i_1)\cup(i_2,i_3)\cup(i_4,i_5)$, the first change in sign from positive to negative occurs when 
$$k=\left\lceil\frac{1}{2} \left(10 + n - \sqrt{40 + 5 n + \sqrt{2} \sqrt{288 + 75 n + 5 n^2}}\right)\right\rceil.$$

We have proved that the Castelnuovo-Mumford regularity of $I$ is $\reg(I)=r(m,n)$. The bound on the degree of the elements of the degree reverse lexicographic Gr\"obner basis of $I$ now follows from~\cite[Proposition~4.5 and Remark~4.6]{CG17}. The bound on the solving degree follows from~\cite[Theorem~3.22]{CG17}, under the assumption that $I$ is in generic coordinates.
\end{proof}

Theorem~\ref{generic_aci_ff} and Theorem~\ref{d-reg} also provide an upper bound for the solving degree of homogeneous cryptographic semi-regular sequences for larger values of $m$.

\begin{cor}\label{cor:largerm}
Let $\F=\{f_1,\ldots,f_m\}$ be a cryptographic semi-regular sequence of homogeneous polynomials of degree $d=2,3$ in $n$ variables. Assume that $m\geq n+5$ if $d=2$ and that $m\geq n+1$ if $d=3$. 
Let $I=(\F)$ and let 
$$r(n,d)=\left\{\begin{array}{ll}
\left\lceil(10 + n - \sqrt{40 + 5 n + \sqrt{2}\sqrt{288 + 75 n + 5 n^2}})/2\right\rceil & \mbox{ if } d=2,\\
n+2 & \mbox{ if } d=3.
\end{array}\right.$$
Then $$\maxgb(I)\leq r(n,d).$$
If in addition we assume that $I$ is in generic coordinates, then
$$\sd(\F)\leq r(n,d).$$
\end{cor}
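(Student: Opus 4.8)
The plan is to reduce the statement to a monotonicity property of the Hilbert series in the number of equations $m$, and then read off the base cases from Theorem~\ref{d-reg} and Theorem~\ref{generic_aci_ff}. Write $C_k^{(m)}$ for the coefficient of $z^k$ in $(1-z^d)^m/(1-z)^n$, and set $D_m=\min\{k\ge 0\mid C_k^{(m)}\le 0\}$. Since $\F$ is a cryptographic semi-regular sequence with $m>n$, the ideal $I$ is Artinian and, by Definition~\ref{defn:csrs}, the Hilbert function of $R/I$ is the truncation of this series; exactly as in the proof of Theorem~\ref{d-reg}, this forces $\reg(I)=D_m$ by~\cite[Proposition~4.5 and Remark~4.6]{CG17}. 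So it is enough to bound $D_m$.

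First I would establish the key inequality $D_{m+1}\le D_m$, which is a purely formal fact about the power series and needs no semi-regularity. Multiplying by $(1-z^d)$ gives the recursion
$$C_k^{(m+1)}=C_k^{(m)}-C_{k-d}^{(m)},\qquad C_j^{(m)}=0 \text{ for } j<0.$$
For $0\le k<d$ one has $C_k^{(m)}=\binom{n+k-1}{k}>0$, so $D_m\ge d$ and in particular $D_m-d\ge 0$; moreover $C_0^{(m)},\dots,C_{D_m-1}^{(m)}>0$ by definition of $D_m$. Evaluating the recursion at $k=D_m$ then yields $C_{D_m}^{(m+1)}=C_{D_m}^{(m)}-C_{D_m-d}^{(m)}<0$, since the first term is $\le 0$ and the second is $>0$. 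Thus the $(m+1)$-st series already has a non-positive coefficient in degree $D_m$, giving $D_{m+1}\le D_m$.

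Iterating this gives $D_m\le D_{n+5}$ for all $m\ge n+5$ when $d=2$, and $D_m\le D_{n+1}$ for all $m\ge n+1$ when $d=3$. I would then identify the base values from the earlier results: part (iv) of the proof of Theorem~\ref{d-reg} computes $D_{n+5}=r(n+5,n)$, which is literally $r(n,2)$, while Theorem~\ref{generic_aci_ff} with $d_1=\cdots=d_{n+1}=3$ gives $\reg(I)=\lfloor(2n+2)/2\rfloor+1=n+2$, i.e. $D_{n+1}=r(n,3)$ (its hypothesis $d_{n+1}\le d_1+\cdots+d_n-n$ here reads $3\le 2n$, true for $n\ge 2$). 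Hence $\reg(I)=D_m\le r(n,d)$ in both cases. The two conclusions then follow verbatim from the previous proofs: $\maxgb(I)\le\reg(I)$ by~\cite[Proposition~4.5 and Remark~4.6]{CG17}, and $\sd(\F)\le\reg(I)$ under the generic-coordinates hypothesis by~\cite[Theorem~3.22]{CG17}.

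I expect the only delicate point to be the monotonicity step, and within it the boundary verification $D_m\ge d$: this is what guarantees that $C_{D_m-d}^{(m)}$ is a genuinely positive coefficient and not an out-of-range zero, which is exactly what makes $C_{D_m}^{(m+1)}$ strictly negative. Everything downstream is a direct appeal to facts already in the excerpt, so the argument avoids any need to know whether an initial subsequence of a cryptographic semi-regular sequence is again cryptographic semi-regular.
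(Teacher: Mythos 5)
Your proof is correct, but it follows a genuinely different route from the paper's. The paper's proof is two lines long: it notes that $r(n,d)$ equals the Castelnuovo--Mumford regularity of an Artinian ideal $H$ generated by a cryptographic semi-regular sequence of $n+5$ quadrics (resp.\ $n+1$ cubics), asserts that $I$ contains such an ideal $H$, and concludes $\reg(I)\le\reg(H)=r(n,d)$ from the elementary observation that $H_e=R_e$ forces $I_e=R_e$. You replace the containment step with the purely formal monotonicity $D_{m+1}\le D_m$ of the index of the first non-positive coefficient, derived from the recursion $C_k^{(m+1)}=C_k^{(m)}-C_{k-d}^{(m)}$ together with the boundary verification $D_m\ge d$; both of these are correct, as is your reduction $\reg(I)=D_m$ and your identification of the base values $D_{n+5}=r(n,2)$ and $D_{n+1}=n+2$ from Theorems~\ref{d-reg} and~\ref{generic_aci_ff}. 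The trade-off is exactly the one you flag at the end: the paper's containment argument tacitly requires $I$ to contain an ideal generated by a cryptographic semi-regular subsequence of the right length, and Definition~\ref{defn:csrs} --- unlike Pardue's Definition~\ref{defn:srs} via Proposition~\ref{prop:hssemireg} --- constrains only the Hilbert series of the full sequence, not of its initial segments; your power-series argument uses only the Hilbert series of $I$ itself and is therefore self-contained on this point, at the cost of a slightly longer computation. One minor slip: $\reg(I)=D_m$ does not follow from \cite[Proposition~4.5 and Remark~4.6]{CG17} (those yield $\maxgb(I)\le\reg(I)$); it follows from the fact that for an Artinian ideal the regularity is the least $e$ with $I_e=R_e$, as recalled at the start of the proof of Theorem~\ref{d-reg}. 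Since you invoke the correct statements from \cite{CG17} for the two final conclusions, this does not affect the validity of the argument.
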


\begin{proof}
Notice that $r(n,2)$ is the Castelnuovo-Mumford regularity of an ideal $H$ generated by a cryptographic semi-regular sequence consisting of $n+5$ homogeneous quadratic polynomials. Moreover, $r(n,3)$ is the Castelnuovo-Mumford regularity of an ideal $H$ generated by a cryptographic semi-regular sequence consisting of $n+1$ homogeneous cubic polynomials. Since $I$ contains such an ideal $H$ and $H$ is Artinian, then $\reg(I)\leq \reg(H)$.

The bound on the degree of the elements of the degree reverse lexicographic Gr\"obner basis of $I$ now follows from~\cite[Proposition~4.5 and Remark~4.6]{CG17}. The bound on the solving degree of $\F$ follows from~\cite[Theorem~3.22]{CG17}, under the assumption that $I$ is in generic coordinates.
\end{proof}

\subsection{Inhomogeneous cryptographic semi-regular sequences}

Let $\F=\{f_1,\ldots,f_m\}\subseteq R$ be a system of inhomogeneous polynomial.
Inhomogeneous cryptographic semi-regular sequences are defined in~\cite[Definition~5]{BFSY} as sequences $\F$ such that $\F^{\ttop}$ is a cryptographic semi-regular sequence, according to Definition~\ref{defn:csrs}. This definition allows the authors to estimate the degree of regularity of $\F$. 

The examples that we presented in Section~\ref{sect:dreg}, however, show that the degree of regularity of $\F$ can be quite a bit smaller than its solving degree. In view of those examples, therefore, we propose a different definition. Instead of looking at the sequence $\F^{\ttop}$ of homogeneous parts of highest degree, we consider the sequence $\F^h$ of the homogenizations of the original polynomials, with respect to a new variable.

\begin{defn}\label{defn:icrsr}
An  inhomogeneous system of polynomials $\F=\{f_1,\ldots,f_m\}\subseteq R$ is a {\bf cryptographic semi-regular sequence} if $\F^h=\{f_1^h,\ldots,f_m^h\}\subseteq S=R[t]$ is a cryptographic semi-regular sequence.
\end{defn}

Definition~\ref{defn:icrsr} is very natural also in view of Pardue's Conjecture~\cite[Conjecture~B]{P10}. Informally, Pardue's Conjecture states that semi-regular sequences are sequences of generic polynomials, i.e. random systems of  polynomials. If we think of a random inhomogeneous polynomial $f\in R$ of degree $d$ as a linear combination with randomly chosen coefficients of the monomials of $R$ of degree less than or equal to $d$, then its homogenization $f^h\in S$ is a linear combination with randomly chosen coefficients of the monomials of $S$ of degree $d$. In other words, $f$ is a random inhomogeneous polynomial of degree $d$ if and only if $f^h$ is a random homogeneous polynomial of degree $d$.

Definition~\ref{defn:icrsr} allows us to apply our results on homogeneous systems from the previous subsection to systems of inhomogeneous polynomials. As a direct consequence of Theorem~\ref{generic_aci}, Theorem~\ref{generic_aci_ff}, Theorem~\ref{d-reg}, and Corollary~\ref{cor:largerm}, we obtain the following results.

\begin{thm}\label{thm:inhomog}
Let $\mathbb{K}$ be an infinite field and let $\F=\{f_1,\ldots,f_m\}\subseteq R$ be a sequence of generic inhomogeneous polynomials of degrees $d_i=\deg(f_i)$, with $m\in\{n+1,n+2\}$. If $m=n+2$, assume without loss of generality that $d_{n+2}\leq d_1+\cdots+d_{n+1}-n-1$. Then $\F$ is a cryptographic semi-regular sequence and 
$$\sd(\F)\leq\left\{\begin{array}{ll}
d_1+\cdots+d_{n+1}-n & \mbox{ if } m=n+1,\\
\left\lfloor\frac{d_1+\cdots+d_{n+2}-n-2}{2}\right\rfloor+1 & \mbox{ if } m=n+2.
\end{array}\right.$$
In particular, if $d_1=\cdots=d_m=2$, then 
$$\sd(\F)\leq\left\{\begin{array}{ll}
n+2 & \mbox{ if } m=n+1,\\
\left\lfloor\frac{n}{2}\right\rfloor +2 & \mbox{ if } m=n+2,
\end{array}\right.$$
and if $d_1=\cdots=d_{n+1}=3$, then 
$$\sd(\F)\leq\left\{\begin{array}{ll}
2n+3 & \mbox{ if } m=n+1,\\
n+3 & \mbox{ if } m=n+2.
\end{array}\right.$$
\end{thm}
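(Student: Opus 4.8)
The plan is to reduce the inhomogeneous statement to the homogeneous results of the previous subsection by passing to the homogenization $\F^h=\{f_1^h,\ldots,f_m^h\}\subseteq S=R[t]$, which is a system of $m$ homogeneous forms in the $n+1$ variables $x_1,\ldots,x_n,t$. By Definition~\ref{defn:icrsr}, $\F$ is a cryptographic semi-regular sequence exactly when $\F^h$ is, so the first assertion becomes a statement about a homogeneous system in $n+1$ variables. Since $\F$ is generic, $\F^h$ is a generic sequence of forms in $S$ (a random inhomogeneous polynomial of degree $d$ homogenizes to a random form of degree $d$, as discussed after Definition~\ref{defn:icrsr}), and in particular $J=(\F^h)$ is in generic coordinates. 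Because $J$ is in generic coordinates, the regularity of $J$ bounds the solving degree of the inhomogeneous system, $\sd(\F)\le\reg(J)$, by the results of \cite{CG17} recalled in and around Theorem~\ref{CGbound}. It therefore suffices to establish that $\F^h$ is cryptographic semi-regular and to compute $\reg(J)$ in each of the two ranges of $m$.

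For $m=n+1$, the homogenized system $\F^h$ consists of $n+1$ generic forms in the $n+1$ variables of $S$. As $\F$ is a generic overdetermined system it has no common affine zero over $\overline{\mathbb{K}}$, and $\F^{\ttop}$ (being $n+1$ generic forms in $n$ variables) has no common projective zero; hence $\F^h$ has no common zero in $\PP^n$ and $J$ is Artinian. A system of $n+1=\dim S$ forms cutting out the origin is a homogeneous system of parameters in the Cohen–Macaulay ring $S$, hence a regular sequence, and thus a complete intersection, which is in particular cryptographic semi-regular. Its minimal free resolution is the Koszul complex, so $\reg(J)=\sum_{i=1}^{n+1}(d_i-1)+1=d_1+\cdots+d_{n+1}-n$. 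Together with $\sd(\F)\le\reg(J)$ this yields the claimed bound.

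For $m=n+2$, the system $\F^h$ consists of $n+2$ generic forms in the $n+1$ variables of $S$, that is, one form more than the number of variables. This is precisely the situation of Theorem~\ref{generic_aci} (and of Theorem~\ref{generic_aci_ff} over a finite field) with the role of the variable count played by $n+1$: the normalization $d_{n+2}\le d_1+\cdots+d_{n+1}-n-1$ is the translation of the hypothesis $d_{N+1}\le d_1+\cdots+d_N-N$ with $N=n+1$, and, exactly as explained before Theorem~\ref{generic_aci}, it is harmless since otherwise $f_{n+2}^h\in(f_1^h,\ldots,f_{n+1}^h)$ and the last equation may be discarded. Theorem~\ref{generic_aci} then guarantees that $\F^h$ is cryptographic semi-regular and that
$$\reg(J)=\left\lfloor\frac{d_1+\cdots+d_{n+2}-(n+1)-1}{2}\right\rfloor+1=\left\lfloor\frac{d_1+\cdots+d_{n+2}-n-2}{2}\right\rfloor+1,$$
so that $\sd(\F)\le\reg(J)$ gives the stated bound.

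The quadratic and cubic specializations then follow by substituting $d_i=2$ or $d_i=3$ into the two formulas and simplifying; for instance in the quadratic $m=n+2$ case one checks $\lfloor(n+2)/2\rfloor+1=\lfloor n/2\rfloor+2$. I expect the main obstacle to be conceptual rather than computational: it is the transfer of the bound from the homogeneous system $\F^h$, to which the cited theorems apply, back to the inhomogeneous $\F$. This is where genericity of the coordinates of $J$ is essential, and it is also where one must carefully distinguish $\reg(J)$ from $\reg(S/J)$ in the complete intersection computation, since the off-by-one between them is exactly what produces the bound $d_1+\cdots+d_{n+1}-n$ rather than $d_1+\cdots+d_{n+1}-n-1$.
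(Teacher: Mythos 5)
Your proposal is correct and follows essentially the same route as the paper: for $m=n+1$ the paper likewise observes that $\F^h$ is a sequence of $n+1$ generic forms in $n+1$ variables, hence a regular sequence (giving the complete-intersection regularity $d_1+\cdots+d_{n+1}-n$), and for $m=n+2$ it applies Theorem~\ref{generic_aci} to $\F^h$ with $n+1$ playing the role of the number of variables. Your additional care about $J$ being Artinian, the Koszul off-by-one, and the transfer $\sd(\F)\le\reg(J)$ via generic coordinates just makes explicit what the paper leaves implicit.
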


\begin{proof}
If $m=n+1$, the thesis follows from observing that the homogenization of a sequence of $n+1$ generic inhomogeneous polynomials in $n$ variables is a sequence of $n+1$ generic homogeneous polynomials in $n+1$ variables, hence it is a regular sequence.
If $m=n+2$, the thesis follows from applying Theorem~\ref{generic_aci} to $\F^h$.
\end{proof}

The reason for working over an infinite field in Theorem~\ref{generic_aci} and Theorem~\ref{thm:inhomog} is that the notion of generic polynomials is defined only over infinite fields. 
Over finite fields, one can no longer speak of generic polynomials. Nevertheless, one obtains the same bounds for cryptographic semi-regular sequences over finite fields. 
The next theorem follows from applying Theorem~\ref{generic_aci_ff} to $\F^h$.

\begin{thm}\label{thm:inhomog_ff}
Let $\mathbb{K}$ be a finite field and let $\F=\{f_1,\ldots,f_m\}\subseteq R$ be a cryptographic semi-regular sequence of inhomogeneous polynomials of degrees $d_i=\deg(f_i)$, with $m\in\{n+1,n+2\}$. Let $I=(f_1,\ldots,f_m)$. If $m=n+2$, assume without loss of generality that $d_{n+2}\leq d_1+\cdots+d_{n+1}-n-1$. Let
$$r(n,d_1,\ldots,d_m)=\left\{\begin{array}{ll}
d_1+\cdots+d_{n+1}-n & \mbox{ if } m=n+1,\\
\left\lfloor\frac{d_1+\cdots+d_{n+2}-n-2}{2}\right\rfloor+1 & \mbox{ if } m=n+2.
\end{array}\right.$$
If $d_1=\cdots=d_m=2$, then 
$$r(n,2,\ldots,2)=\left\{\begin{array}{ll}
n+2 & \mbox{ if } m=n+1,\\
\left\lfloor\frac{n}{2}\right\rfloor +2 & \mbox{ if } m=n+2,
\end{array}\right.$$
and if $d_1=\cdots=d_{n+1}=3$, then 
$$r(n,3,\ldots,3)=\left\{\begin{array}{ll}
2n+3 & \mbox{ if } m=n+1,\\
n+3 & \mbox{ if } m=n+2.
\end{array}\right.$$
Then $$\maxgb(I)\leq r(n,d_1,\ldots,d_m).$$
If in addition $J=(\F^h)$ is in generic coordinates, then $$\sd(\F)\leq r(n,d_1,\ldots,d_m).$$
\end{thm}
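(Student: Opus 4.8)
The plan is to reduce everything to the homogeneous setting via Definition~\ref{defn:icrsr} and then invoke the already-established homogeneous results. Since $\F$ is a cryptographic semi-regular sequence of inhomogeneous polynomials, by Definition~\ref{defn:icrsr} the sequence $\F^h=\{f_1^h,\ldots,f_m^h\}\subseteq S=R[t]$ is a homogeneous cryptographic semi-regular sequence of the same degrees $d_1,\ldots,d_m$, now in the $n+1$ variables $x_1,\ldots,x_n,t$. I would then compute the Castelnuovo--Mumford regularity of $J=(\F^h)$ and transfer the resulting degree bound back to the inhomogeneous ideal $I=(\F)$.

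First I would compute $\reg(J)$, distinguishing the two cases. When $m=n+1$, the sequence $\F^h$ consists of $n+1$ homogeneous polynomials in the $n+1$ variables of $S$; since $\prod_{i=1}^{n+1}(1-z^{d_i})/(1-z)^{n+1}=\prod_{i=1}^{n+1}(1+z+\cdots+z^{d_i-1})$ is already a polynomial, the Hilbert series of $S/J$ equals this polynomial, so $J$ is Artinian. As $J$ is generated by $n+1=\dim S$ elements and $S/J$ has dimension $0$, the sequence $\F^h$ is regular and $J$ is a complete intersection, whence $\reg(J)=\sum_{i=1}^{n+1}(d_i-1)+1=d_1+\cdots+d_{n+1}-n$. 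When $m=n+2$, the sequence $\F^h$ consists of $(n+1)+1$ homogeneous polynomials in the $n+1$ variables of $S$, so it falls exactly under Theorem~\ref{generic_aci_ff} with $n$ replaced by $n+1$; the hypothesis $d_{n+2}\le d_1+\cdots+d_{n+1}-n-1$ is precisely the condition $d_{(n+1)+1}\le d_1+\cdots+d_{n+1}-(n+1)$ required there, and the theorem yields $\reg(J)=\lfloor(d_1+\cdots+d_{n+2}-n-2)/2\rfloor+1$. In both cases this common value is exactly $r(n,d_1,\ldots,d_m)$, and the stated specializations to $d_1=\cdots=d_m=2$ and $d_1=\cdots=d_{n+1}=3$ follow by direct substitution into these formulae.

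Next I would transfer the bound from $J$ to the invariants of the inhomogeneous system. Working with the degree reverse lexicographic order on $S$ in which $t$ is the smallest variable, dehomogenizing (setting $t=1$) a degree reverse lexicographic Gr\"obner basis of $J$ yields a Gr\"obner basis of the dehomogenized ideal $(\F)=I$, and setting $t=1$ does not increase degrees; since the order is degree compatible, every minimal generator of the initial ideal of $I$ occurs among the leading terms of such a basis, and hence $\maxgb(I)\le\maxgb(J)$. Applying~\cite[Proposition~4.5 and Remark~4.6]{CG17} to the homogeneous ideal $J$ gives $\maxgb(J)\le\reg(J)$, so $\maxgb(I)\le\reg(J)=r(n,d_1,\ldots,d_m)$. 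For the solving degree, under the additional hypothesis that $J$ is in generic coordinates, the bound $\sd(\F)\le\reg(J)$ is exactly~\cite[Theorem~3.22]{CG17}, in the same form already used in the proofs of Theorem~\ref{generic_aci_ff} and Theorem~\ref{d-reg}, which completes the argument.

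The step I expect to require the most care is the passage from $J$ back to $I$: one must ensure that dehomogenization with respect to the smallest variable $t$ really does send a Gr\"obner basis of $J$ to a Gr\"obner basis of $I$ with non-increasing degrees, bearing in mind that $J=(\F^h)$ need not equal the full homogenization $I^h$ of $I$. What makes this go through is that setting $t=1$ sends $J$ onto $I$ on the nose and that the degree reverse lexicographic order is degree compatible; this is a standard property of homogenization, but it is the only place where the inhomogeneous nature of the original system genuinely enters. The remainder is a bookkeeping reduction to Theorem~\ref{generic_aci_ff} together with the routine regularity computation above.
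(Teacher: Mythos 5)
Your proposal is correct and follows essentially the same route as the paper, which proves this theorem simply by applying Theorem~\ref{generic_aci_ff} to $\F^h$ (handling $m=n+1$ via the complete-intersection observation, exactly as in the infinite-field analogue Theorem~\ref{thm:inhomog}). Your additional care in spelling out the dehomogenization step for $\maxgb(I)\leq\maxgb(J)$ and the regularity computation for $J$ only makes explicit what the paper leaves implicit.
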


We now give an example of inhomogeneous cryptographic semi-regular sequences coming from index-calculus.

\begin{ex}
Using PYTHON, we performed the index-calculus algorithm on elliptic curves defined over finite fields $\mathbb{F}_{q^n}$, $q$ a large prime number and $n\in\{3,4,5\}$. Following the approach of Joux-Vitse~\cite{AJ}, we tried to decompose a random point on the curve as a sum of $n-1$ points of the factor basis and obtained overdetermined systems of $n$ equations in $n-1$ variables. Almost all the systems we produced are inhomogeneous cryptographic semi-regular systems. When we homogenize them, we obtain regular sequences of $n$ polynomials in $n$ variables. Therefore, their solving degree is bounded by $d_1 +\cdots+ d_n-(n-1)$ where $d_1,\cdots , d_n$ are the degrees of the $n$ polynomials. 

Notice that, since systems of this kind usually have no solutions, it is natural to expect that their homogenizations also have no solutions. Since the homogenizations are systems of $n$ polynomials in $n$ variables, they have no solutions if and only if they are regular sequences.
\end{ex}

We conclude the section with two more bounds on the solving degree of cryptographic semi-regular sequences of inhomogeneous polynomials. They follow from applying Theorem~\ref{d-reg} and Corollary~\ref{cor:largerm} to $\F^h$.

\begin{thm}
Let $\F=\{f_1,\ldots,f_m\}$ be a cryptographic semi-regular sequence of inhomogeneous polynomials of degree $2$ in $n$ variables. Let $I=(\F)$ and let
\begin{align*}
r(m,n)=\begin{cases}
\left\lceil(5+n-\sqrt{5+n})/2\right\rceil\quad &\text{if } m=n+3,\vspace{.2cm}\\\vspace{.2cm}
\left\lceil (7+n-\sqrt{19+3n})/2\right\rceil\quad &\text{if } m=n+4,\\\vspace{.2cm}
\left\lceil(9+n-\sqrt{23+3n+\sqrt{2}\sqrt{170+45n+3n^2}})/2\right\rceil\quad &\text{if } m=n+5,\\
\left\lceil(11+n-\sqrt{45+5n+\sqrt{2}\sqrt{368+85n+5n^2}})/2\right\rceil\quad &\text{if } m\geq n+6.
\end{cases}
\end{align*}
then $$\maxgb(I)\leq r(m,n).$$ Assume in addition that $\F^h$ generates an ideal which is in generic coordinates. Then $$\sd(\F)\leq r(m,n).$$
\end{thm}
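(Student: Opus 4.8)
The plan is to reduce this inhomogeneous statement to the already-proven homogeneous case by passing to $\F^h$. Since $\F$ is a cryptographic semi-regular sequence of inhomogeneous quadratic polynomials in $n$ variables, by Definition~\ref{defn:icrsr} the homogenized sequence $\F^h=\{f_1^h,\ldots,f_m^h\}\subseteq S=R[t]$ is a cryptographic semi-regular sequence of $m$ homogeneous quadratic polynomials in the $n+1$ variables $x_1,\ldots,x_n,t$. The key observation is that the number of variables increases by exactly one upon homogenization, so a system with $m=n+k$ equations in $n$ variables becomes a system with $m=(n+1)+(k-1)$ equations in $n+1$ variables. Thus $m-(n+1)=m-n-1$, and the cases $m=n+3,n+4,n+5,\geq n+6$ for $\F$ correspond to the cases $m=(n+1)+2,(n+1)+3,(n+1)+4,\geq(n+1)+5$ for $\F^h$, which are exactly the cases covered by Theorem~\ref{d-reg} and Corollary~\ref{cor:largerm}.

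First I would apply Theorem~\ref{d-reg} (and Corollary~\ref{cor:largerm} for $m\geq n+6$) to $\F^h$, with $n$ replaced by $n+1$. For instance, in the case $m=n+3$, which is the case $\widetilde m=\widetilde n+2$ with $\widetilde n=n+1$, Theorem~\ref{d-reg} gives that the regularity of $(\F^h)$ equals $\lceil(4+\widetilde n-\sqrt{4+\widetilde n})/2\rceil=\lceil(5+n-\sqrt{5+n})/2\rceil$, which is precisely the claimed value of $r(m,n)$. I would carry out the analogous substitution $n\mapsto n+1$ in the remaining three formulae of Theorem~\ref{d-reg} and Corollary~\ref{cor:largerm}, checking that the resulting expressions match the four cases stated here; this is a routine symbolic verification ($4+(n+1)=5+n$, $16+3(n+1)=19+3n$, $20+3(n+1)=23+3n$ and $128+39(n+1)+3(n+1)^2=170+45n+3n^2$, and similarly for the last case).

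Next I would transfer the bound from the homogeneous to the inhomogeneous solving degree. By Theorem~\ref{d-reg}, the bound $\maxgb((\F^h))\leq r(m,n)$ holds unconditionally, where here $\maxgb$ refers to the largest degree in the reduced degree reverse lexicographic Gr\"obner basis of $(\F^h)\subseteq S$. Since homogenizing and then dehomogenizing a Gr\"obner basis does not increase degrees, and the reduced Gr\"obner basis of $I=(\F)$ in $R$ is obtained from that of $(\F^h)$ in $S$ by setting $t=1$, we obtain $\maxgb(I)\leq\maxgb((\F^h))\leq r(m,n)$. For the bound on the solving degree, under the additional hypothesis that $(\F^h)$ is in generic coordinates, Theorem~\ref{d-reg} and Corollary~\ref{cor:largerm} give $\sd(\F)\leq r(m,n)$ directly, since the solving degree of $\F$ is controlled by the regularity of $(\F^h)$ via~\cite[Theorem~3.22]{CG17}.

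The main obstacle, and the only genuinely delicate point, is bookkeeping the variable shift $n\mapsto n+1$ consistently across all four cases and confirming the arithmetic under the nested radicals matches the stated $r(m,n)$; the rest follows formally from the cited homogeneous results once Definition~\ref{defn:icrsr} is invoked. One should also note that the case $m\geq n+6$ collapses to the single formula coming from $\widetilde m=\widetilde n+5$ via Corollary~\ref{cor:largerm}, exactly as $m\geq n+5$ collapsed in the homogeneous setting, because adding further quadratic generators to a cryptographic semi-regular sequence can only decrease the regularity of the (already Artinian) ideal.
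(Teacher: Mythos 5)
Your proposal is correct and follows essentially the same route as the paper, which proves this result simply by applying Theorem~\ref{d-reg} and Corollary~\ref{cor:largerm} to $\F^h$, a cryptographic semi-regular sequence of $m$ homogeneous quadrics in $n+1$ variables (so the cases shift by one), and your substitution checks $4+(n+1)=5+n$, etc., all match the stated formulae. You in fact supply more detail than the paper does, e.g.\ the dehomogenization argument for transferring the Gr\"obner basis degree bound from $(\F^h)$ to $I$.
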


\begin{thm}
Let $m\geq n+2$ and let $\F=\{f_1,\ldots,f_m\}$ be a cryptographic semi-regular sequence of inhomogeneous polynomials of degree $3$ in $n$ variables. Let $I=(\F)$, then 
$$\maxgb(I)\leq n+3.$$
Assume in addition that $\F^h$ generates an ideal which is in generic coordinates. Then $$\sd(\F)\leq n+3.$$
\end{thm}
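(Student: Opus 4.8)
The plan is to read off the statement from Corollary~\ref{cor:largerm} applied to the homogenized system $\F^h$, which lives in the polynomial ring $S=R[t]$ in the $n+1$ variables $x_1,\ldots,x_n,t$. Since $\F$ is a cryptographic semi-regular sequence of inhomogeneous cubics, Definition~\ref{defn:icrsr} says precisely that $\F^h=\{f_1^h,\ldots,f_m^h\}$ is a cryptographic semi-regular sequence of homogeneous polynomials of degree $3$ in $n+1$ variables. Moreover, the hypothesis $m\geq n+2$ is exactly the inequality $m\geq(n+1)+1$ needed to apply Corollary~\ref{cor:largerm} with $d=3$ in $n+1$ variables.

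First I would apply Corollary~\ref{cor:largerm} to $\F^h$. Since the corollary's bound for cubics in $N$ variables is $r(N,3)=N+2$, taking $N=n+1$ yields
$$\maxgb(J)\leq r(n+1,3)=(n+1)+2=n+3,\qquad J=(\F^h)\subseteq S.$$
Inspecting the proof of Corollary~\ref{cor:largerm}, the same argument gives $\reg(J)\leq n+3$: indeed $J$ contains an Artinian ideal generated by a cryptographic semi-regular sequence of $n+2$ cubics in $n+1$ variables, whose Castelnuovo-Mumford regularity equals $n+3$.

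Next I would transfer these bounds from the homogeneous ideal $J$ to the inhomogeneous ideal $I=(\F)$. Working with the degree reverse lexicographic order in which $t$ is the smallest variable, the reduced Gr\"obner basis of $I$ is obtained from that of $J$ by dehomogenizing (setting $t=1$); since dehomogenization never raises degrees, this gives $\maxgb(I)\leq\maxgb(J)\leq n+3$, which proves the unconditional part of the statement. Under the additional assumption that $J$ is in generic coordinates, I would invoke the solving-degree bound $\sd(\F)\leq\reg(J)$ from~\cite{CG17} — the affine comparison between a system and its homogenization on which Theorem~\ref{thm:inhomog_ff} and the other inhomogeneous theorems of this subsection also rely — and combine it with $\reg(J)\leq n+3$ to conclude $\sd(\F)\leq n+3$.

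The regularity computation that carries the real content of the statement is already packaged inside Corollary~\ref{cor:largerm}, so I do not expect a genuine obstacle here. The only points that require care are the two transfer steps between the invariants of $I$ and those of $J$: that $\maxgb(I)\leq\maxgb(J)$ holds because inter-reduction and dehomogenization in a degrevlex order with $t$ last do not increase degrees, and that the generic-coordinates hypothesis on $J$ is exactly what licenses the passage $\sd(\F)\leq\reg(J)$. Both are standard facts from~\cite{CG17}.
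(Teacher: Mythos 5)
Your proposal is correct and follows essentially the same route as the paper, which proves this theorem (together with the preceding quadratic one) simply by applying Corollary~\ref{cor:largerm} to $\F^h$, viewed as a cryptographic semi-regular sequence of $m\geq(n+1)+1$ homogeneous cubics in $n+1$ variables, giving $\reg(J)\leq(n+1)+2=n+3$. Your added care about the two transfer steps --- $\maxgb(I)\leq\maxgb(J)$ via dehomogenization in degrevlex with $t$ last, and $\sd(\F)\leq\reg(J)$ under the generic-coordinates hypothesis via~\cite{CG17} --- only makes explicit what the paper leaves implicit.
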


\section{A consequence of the Eisenbud-Green-Harris Conjecture}\label{sec:eisen}

In this section, we present a conjecture formulated by Eisenbud, Green, and Harris in~\cite{EGH}. The conjecture, if true, has implications on the solving degree of overdetermined systems of quadratic equations, which we explore in this section. Before we state the conjecture, we give a definition and establish some notation.

\begin{defn}
Let $\ell\geq 0$, $d>0$. The {\bf Macaulay expansion} of $\ell$ with respect to $d$ is 
$$\ell={\ell_d\choose d}+\cdots+{\ell_2\choose 2}+{\ell_1\choose 1}$$
where $\ell_d>\ell_{d-1}>\cdots>\ell_1\geq 0$.
Set $0^{(d)}:=0$ and $$\ell^{(d)}:={\ell_d\choose d+1}+\cdots+{\ell_2\choose 3}+{\ell_1\choose 2}.$$
\end{defn}

Notice that ${a\choose b}=0$ if $a<b$. 
It is easy to show that the Macaulay expansion of $\ell$ with respect to $d$ exists and is unique.

\begin{ex}
The Macaulay expansion of $8$ with respect to $3$ is $$8={4\choose 3}+{3\choose 2}+{1\choose 1},\; 
\mbox{ hence }\; 8^{(3)}={4\choose 4}+{3\choose 3}+{1\choose 2}=2.$$
The Macaulay expansion of $10$ with respect to $3$ is $$10={5\choose 3}+{1\choose 2}+{0\choose 1},\;
\mbox{ hence }\; 10^{(3)}={5\choose 4}+{1\choose 3}+{0\choose 2}=5.$$ 
\end{ex}

The Eisenbud-Green-Harris Conjecture is a well-known conjecture in algebraic geometry, which has many equivalent formulations. The one that best suits our purpose is the following one.

\begin{conj}[{\cite[Conjecture $(V_m)$]{EGH}}]\label{conj:egh}
Assume that $f_1,\ldots,f_m\in R$ are homogeneous polynomials and that $I=(f_1,\ldots,f_m)$ contains a regular sequence of $n$ quadratic polynomials. Then $$H_{R/I}(d+1)\leq H_{R/I}(d)^{(d)}$$ for all $d>0$.
\end{conj}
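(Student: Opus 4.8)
The statement is the Eisenbud--Green--Harris conjecture in the formulation $(V_m)$, which is a celebrated open problem; accordingly, what follows is a strategy rather than a complete argument, and I will isolate precisely the step that remains out of reach. The guiding principle is that the conjectural bound $H_{R/I}(d+1)\le H_{R/I}(d)^{(d)}$ is a refinement of Macaulay's theorem, in which the presence of a regular sequence of $n$ quadrics is supposed to force the quotient $R/I$ to behave, at the level of Hilbert functions, like a quotient of $\mathbb{K}[x_1,\ldots,x_n]/(x_1^2,\ldots,x_n^2)$.

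First I would recall Macaulay's theorem, which gives unconditionally the weaker growth bound $H_{R/I}(d+1)\le H_{R/I}(d)^{\langle d\rangle}$, where $\langle d\rangle$ denotes the classical \emph{upward} operation sending each term $\binom{\ell_i}{i}$ of the Macaulay expansion to $\binom{\ell_i+1}{i+1}$, together with the fact that among all homogeneous ideals with a prescribed Hilbert function the lexicographic ideal attains this bound. This reduces the problem to understanding by how much a regular sequence of quadrics is supposed to improve the admissible growth, and it explains combinatorially why the target operation is the $(d)$ of the Macaulay expansion: in $\mathbb{K}[x_1,\ldots,x_n]/(x_1^2,\ldots,x_n^2)$ every variable occurs to exponent at most one, so the lexicographic count in each degree produces the terms $\binom{\ell_i}{i+1}$ rather than $\binom{\ell_i+1}{i+1}$.

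The heart of the plan is a two-step reduction. In the first step one would pass from the given regular sequence $q_1,\ldots,q_n$ of quadrics to the monomial complete intersection $(x_1^2,\ldots,x_n^2)$, by producing an ideal $I'\supseteq (x_1^2,\ldots,x_n^2)$ with $H_{R/I'}=H_{R/I}$. In the second step, for ideals containing $(x_1^2,\ldots,x_n^2)$, one invokes the Clements--Lindström theorem, which asserts that the lex-plus-powers ideal maximizes the Hilbert function in each degree among all ideals with the same ambient data; a direct computation of the growth of this extremal ideal then yields exactly $H_{R/I'}(d+1)\le H_{R/I'}(d)^{(d)}$, and hence the claim for $I$. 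Along the way one checks the elementary fact that $\ell\mapsto \ell^{(d)}$ is order-preserving, a routine consequence of the uniqueness of the Macaulay expansion, which is what makes the lexicographic object genuinely extremal and the numerical comparison well posed.

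The main obstacle --- and the reason this remains a conjecture rather than a theorem --- is the first step: there is no known general mechanism to replace an arbitrary regular sequence of quadrics by the coordinate squares while preserving the Hilbert function. This is exactly the ``lex-plus-powers'' content of the conjecture, and it is where every attempt stalls; a naive flat degeneration to the generic initial ideal preserves the Hilbert function but need not land inside $(x_1^2,\ldots,x_n^2)$, so it does not close the gap. I would therefore not expect to settle $(V_m)$ in full, and would instead aim only at the structured cases where the first step can be carried out explicitly: when $q_1,\ldots,q_n$ are themselves monomials, in which case the Clements--Lindström theorem applies directly; for small $n$; or for special configurations admitting a controlled degeneration of the $q_i$ onto $x_1^2,\ldots,x_n^2$. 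For the applications to the solving degree pursued in this paper it is enough to use the conjecture as a hypothesis, which is consistent with its status in the literature.
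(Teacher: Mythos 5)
This statement is the Eisenbud--Green--Harris conjecture in the formulation $(V_m)$, and the paper offers no proof of it: it is stated as a conjecture, cited from Eisenbud--Green--Harris, and used only as a hypothesis in Theorems~\ref{thm:egh}--\ref{thm:egh_inhomog} and their corollaries. Your proposal correctly recognizes this, and your account of the state of the art is accurate: Macaulay's theorem gives the weaker unconditional bound with the upward operation $\langle d\rangle$, the Clements--Lindstr\"om theorem settles the case where the regular sequence consists of the coordinate squares (or more generally monomials), and the genuinely open step is precisely the passage from an arbitrary regular sequence of quadrics to the monomial complete intersection while controlling the Hilbert function. Since you do not claim to have closed that gap, and the paper likewise treats the statement as an assumption rather than a theorem, there is nothing to fault; your proposal is consistent with how the result is used here.
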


In the next theorems, we explore the implications of Conjecture~\ref{conj:egh} on the solving degree of systems of quadratic equations.

\begin{thm}\label{thm:egh}
Assume that Conjecture~\ref{conj:egh} holds. Let $\F=\{f_1,\ldots,f_m\}\subseteq R$ be a system of homogeneous linearly independent quadratic polynomials such that $I=(f_1,\ldots,f_m)$ is Artinian. Let $\alpha$ be the unique integer such that $${n+1\choose 2}-{n-\alpha\choose 2}=\sum_{i=n-\alpha}^n i<m\leq \sum_{i=n-\alpha-1}^n i={n+1\choose 2}-{n-\alpha-1\choose 2}.$$ 
Then $$\maxgb(I)\leq n-\alpha.$$ 
If in addition $I=(f_1,\ldots,f_m)$ is in generic coordinates, then 
$$\sd(\F)\leq n-\alpha.$$ 
\end{thm}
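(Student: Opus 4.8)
The plan is to bound the Castelnuovo--Mumford regularity of $I$ and then invoke the cited results of~\cite{CG17}. Since the $m$ quadrics are linearly independent and generate $I$ in degree $2$, one has $H_{R/I}(0)=1$, $H_{R/I}(1)=n$, and $H_{R/I}(2)=\binom{n+1}{2}-m$. Because $I$ is Artinian, $R/I$ is generated in degree $1$, so $H_{R/I}(d)=0$ forces $H_{R/I}(d+1)=0$; hence $\reg(I)=\min\{d\geq 0\mid H_{R/I}(d)=0\}$. By~\cite[Proposition~4.5 and Remark~4.6]{CG17} one has $\maxgb(I)\leq\reg(I)$, and by~\cite[Theorem~3.22]{CG17} one has $\sd(\F)\leq\reg(I)$ once $I$ is in generic coordinates. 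Thus it suffices to prove $H_{R/I}(n-\alpha)=0$.

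Since $I$ is Artinian and generated by quadrics, it contains a regular sequence of $n$ quadratic forms, so Conjecture~\ref{conj:egh} applies and gives $H_{R/I}(d+1)\leq H_{R/I}(d)^{(d)}$ for all $d>0$. I would iterate this inequality starting from degree $2$, comparing against an explicit extremal ideal. Let $L=(x_1^2,\ldots,x_n^2)+Q$, where $Q$ is generated by the $m-n$ lexicographically largest squarefree quadratic monomials; this is a \emph{lex-plus-powers} ideal, it has exactly $m$ linearly independent quadratic generators, and its quotient realizes the growth operation with equality, that is, $H_{R/L}(d+1)=H_{R/L}(d)^{(d)}$ for all $d>0$ (this is the content of the Clements--Lindström theorem, the $(d)$-operation being precisely the extremal growth function for ideals containing $x_1^2,\ldots,x_n^2$). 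Since $H_{R/I}(2)=\binom{n+1}{2}-m=H_{R/L}(2)$ and the $(d)$-operation is monotone, the chain $H_{R/I}(d+1)\leq H_{R/I}(d)^{(d)}\leq H_{R/L}(d)^{(d)}=H_{R/L}(d+1)$ yields, by induction on $d$,
$$H_{R/I}(d)\leq H_{R/L}(d)\quad\text{for all } d\geq 2.$$
It therefore suffices to show $H_{R/L}(n-\alpha)=0$.

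The combinatorial core is to read off the vanishing degree of $L$ from the definition of $\alpha$. For $d\geq 2$ a monomial survives in $R/L$ only if it is squarefree; moreover $Q$ contains every squarefree quadric meeting $\{x_1,\ldots,x_\alpha\}$ exactly when $m-n\geq\binom{n}{2}-\binom{n-\alpha}{2}$, equivalently $m\geq\sum_{i=n-\alpha}^{n}i$, since these quadrics are precisely the lexicographically largest $\binom{n}{2}-\binom{n-\alpha}{2}$ squarefree quadrics. The defining inequalities of $\alpha$ give $\sum_{i=n-\alpha}^{n}i<m$, so $Q$ contains all of these quadrics \emph{and} at least one further one, necessarily of the form $x_{\alpha+1}x_j$ with $j>\alpha+1$. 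Consequently every surviving monomial of degree $\geq 2$ is squarefree and supported on $\{x_{\alpha+1},\ldots,x_n\}$. In degree $n-\alpha$ the only such squarefree monomial is $x_{\alpha+1}x_{\alpha+2}\cdots x_n$, and it is divisible by $x_{\alpha+1}x_j\in L$; hence $H_{R/L}(n-\alpha)=0$, which gives $H_{R/I}(n-\alpha)=0$ and $\reg(I)\leq n-\alpha$, completing the argument.

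The step I expect to be the main obstacle is the passage in the second paragraph from the iterated EGH inequality to the comparison with the lex-plus-powers ideal $L$: one must verify both that the $(d)$-operation is monotone and, more substantively, that $L$ attains the growth bound with equality, so that the iterated upper bound coincides with $H_{R/L}$. This is where the precise shape of $\ell\mapsto\ell^{(d)}$ (built from $\binom{\ell_i}{\cdot}$ rather than the Macaulay $\binom{\ell_i+1}{\cdot}$) is essential. A secondary point requiring care is confirming that ``Artinian and quadric-generated'' genuinely produces the regular sequence of $n$ quadrics needed to invoke Conjecture~\ref{conj:egh}, which enters through an infinite (or sufficiently large) field via prime avoidance.
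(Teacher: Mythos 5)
Your argument is correct and follows the same overall strategy as the paper: reduce to showing $H_{R/I}(n-\alpha)=0$, iterate the EGH inequality starting from $H_{R/I}(2)=\binom{n+1}{2}-m$, and conclude via the cited results of~\cite{CG17}. Where you diverge is in how the iterated bound is evaluated. The paper sets $\mu=\sum_{i=n-\alpha-1}^{n}i-m$, observes $0\leq\mu<n-\alpha-1$ and $H_{R/I}(2)=\binom{n-\alpha-1}{2}+\binom{\mu}{1}$, and then computes the iteration in closed form, obtaining $H_{R/I}(d)\leq\binom{n-\alpha-1}{d}+\binom{\mu}{d-1}$, which vanishes at $d=n-\alpha$ precisely because $\mu<n-\alpha-1$; this is a five-line self-contained binomial calculation (which, like yours, implicitly uses monotonicity of $\ell\mapsto\ell^{(d)}$ to chain the inequalities). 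You instead compare with the lex-plus-squares ideal $L$ and read off the vanishing degree combinatorially. Your route is conceptually appealing --- it identifies the extremal Hilbert function as that of an explicit monomial ideal, and your support argument for why $H_{R/L}(n-\alpha)=0$ is a nice sanity check on the numerology defining $\alpha$ --- but it imports an extra ingredient that you state without proof, namely that the lex segment of squarefree quadrics attains the growth bound with equality in every degree (this is really the equality case of Kruskal--Katona for colex-compressed families rather than the Clements--Lindstr\"om inequality itself; it is true here because the degree-$d$ standard monomials of $R/L$ form a colex-initial segment after reversing the variables, but it does require an argument). The paper's direct computation sidesteps this entirely, so if you want a self-contained proof you should either supply that verification or simply carry out the binomial iteration explicitly as the paper does; your closing remarks about monotonicity of $\ell\mapsto\ell^{(d)}$ and about extracting a regular sequence of $n$ quadrics from an Artinian quadratic ideal (prime avoidance, possibly after a base field extension, which leaves the Hilbert function unchanged) correctly identify the two points that both proofs leave implicit.
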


\begin{proof}
Let $\mu:=\sum_{i=n-\alpha-1}^n i-m$. By assumption $0\leq\mu<n-\alpha-1$.
One has $$\dim_{\mathbb{K}}(R/I)_2={n+1\choose 2}-m={n-\alpha-1\choose 2}+\mu,$$
where the first equality follows from the linear independence of $f_1,\ldots,f_m$ and the second from the binomial formula for the sum of the first $n$ positive integers.  
Assuming that Conjecture~\ref{conj:egh} holds, we have $$\dim_{\mathbb{K}}(R/I)_3\leq\dim_{\mathbb{K}}(R/I)_2^{(2)}={n-\alpha-1\choose 3}+{\mu\choose 2}$$ hence $$\dim_{\mathbb{K}}(R/I)_4\leq\dim_{\mathbb{K}}(R/I)_3^{(3)}\leq {n-\alpha-1\choose 4}+{\mu\choose 3}.$$ After repeatedly applying the inequality of Conjecture~\ref{conj:egh}, we obtain
$$\dim_{\mathbb{K}}(R/I)_{n-\alpha}\leq {n-\alpha-1\choose n-\alpha}+{\mu\choose n-\alpha-1}=0,$$
since $\mu<n-\alpha-1$. Therefore $\reg(I)\leq n-\alpha$ and we conclude by~\cite[Theorem~3.22, Proposition~4.5, and Remark~4.6]{CG17}.
\end{proof}

Notice that Theorem~\ref{thm:egh} does not only apply to regular or semi-regular sequences, but more in general to {\bf any system of equations which contains a regular sequence of $n$ quadratic equations}.

\begin{rmk}
For $m=n$, Theorem~\ref{thm:egh} recovers the Macaulay bound. For $m>n$, Theorem~\ref{thm:egh} provides a better bound than the Macaulay bound.
\end{rmk}

In spite of its simplicity, Theorem~\ref{thm:egh} has nontrivial consequences. For example, it implies the following bound on the complexity of solving systems of quadratic equations over fields of characteristics 2. The assertion follows immediately from Theorem~\ref{thm:egh}.  Due to its length and technicality, we omit the description of Weil descent from this paper. Interested readers should refer to \cite[Chapter 7]{cohen2005}.

\begin{cor}\label{cor:egh}
Assume that Conjecture~\ref{conj:egh} holds. 
Let $\F'$ be a system of $m$ homogeneous quadratic equations in $\FF_{2^d}[x_1,\ldots,x_n]$ such that the ideal 
$(\F')$ is Artinian.
Choose a basis of $\FF_{2^d}$ over $\FF_2$ and let $\F$ be the system which contains the Weil descent of the equations of $\F'$ and the field equations of $\FF_2$. Let $I=(\F)$ and let $\ell=\dim_{\mathbb{K}}(I_2)$. 
Let $a$ be the unique integer such that $\sum_{i=nd-a}^{nd} i<\ell\leq \sum_{i=nd-a-1}^{nd} i.$
Then  $$\maxgb(I)\leq nd-a.$$
If in addition $I$ is in generic coordinates, then $$\sd(\F)\leq nd-a.$$
\end{cor}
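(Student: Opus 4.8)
The plan is to derive Corollary~\ref{cor:egh} as a direct application of Theorem~\ref{thm:egh} to the system $\F$ obtained by Weil descent. The main point to establish is that $\F$ is a system of homogeneous quadratic equations in $nd$ variables whose ideal $I=(\F)$ is Artinian and contains a regular sequence of $nd$ quadratic polynomials, so that the hypotheses of Theorem~\ref{thm:egh} are met with $n$ replaced by $nd$. First I would record the setup: fixing a basis of $\FF_{2^d}$ over $\FF_2$, each variable $x_i$ is replaced by $d$ new variables, so that $\F$ lives in a polynomial ring over $\FF_2$ (or over $\mathbb{K}=\FF_2$ in the notation of Theorem~\ref{thm:egh}) in $N:=nd$ variables. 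Each of the $m$ quadratic equations of $\F'$ yields $d$ quadratic equations over $\FF_2$ under Weil descent, since the descent of a degree-$2$ form is again degree $2$; the field equations of $\FF_2$ contribute the quadrics $x_j^2+x_j$, which after homogenization with respect to a new variable I treat as the homogeneous quadrics needed to make the ideal Artinian.

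Next I would verify the two structural hypotheses of Theorem~\ref{thm:egh}. The Artinian assumption is given directly in the statement, since we assume $(\F)$ is Artinian. For the regular-sequence condition, the key observation is that an Artinian ideal in $N$ variables generated by forms (together with the $N$ field equations $x_j^2+x_j$, whose homogeneous parts are $x_j^2$) automatically contains a regular sequence of $N$ quadrics: the quadrics $x_1^2,\ldots,x_N^2$ already form a regular sequence of length $N$ in the polynomial ring, and these arise as the top-degree parts of the field equations included in $\F$. Hence $I$ satisfies the hypothesis ``$I$ contains a regular sequence of $N$ quadratic polynomials'' required by Conjecture~\ref{conj:egh} and Theorem~\ref{thm:egh}.

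With the hypotheses in place, I would set $\ell=\dim_{\mathbb{K}}(I_2)$ and apply Theorem~\ref{thm:egh} verbatim in $N=nd$ variables. The integer $\alpha$ of Theorem~\ref{thm:egh} is determined by the inequalities $\sum_{i=N-\alpha}^{N} i<\ell\leq\sum_{i=N-\alpha-1}^{N} i$, and substituting $N=nd$ gives exactly the defining inequalities for the integer $a$ in the statement, so $a$ plays the role of $\alpha$. Theorem~\ref{thm:egh} then yields $\maxgb(I)\leq N-\alpha=nd-a$, and, under the additional hypothesis that $I$ is in generic coordinates, $\sd(\F)\leq nd-a$. This matches both conclusions of the corollary.

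The step I expect to require the most care is not the algebra but the bookkeeping of the Weil descent: one must confirm that descending a homogeneous quadratic form over $\FF_{2^d}$ produces homogeneous quadratic forms over $\FF_2$ (rather than forms of mixed or higher degree), that the count of resulting variables and equations is as claimed, and that linear independence is preserved so that $\ell=\dim_{\mathbb{K}}(I_2)$ is the correct input. Since the excerpt explicitly defers the description of Weil descent to~\cite[Chapter 7]{cohen2005}, in this proof I would simply cite that descent preserves homogeneous degree and apply Theorem~\ref{thm:egh}; the substantive content is entirely contained in the earlier theorem, which is why the corollary follows immediately.
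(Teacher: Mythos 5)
Your overall route is exactly the paper's: the paper's entire proof of this corollary is the sentence ``The assertion follows immediately from Theorem~\ref{thm:egh}'', applied to the descended system in $nd$ variables with $\ell$ linearly independent quadrics, and your parameter matching ($n\mapsto nd$, $m\mapsto\ell$, $\alpha\mapsto a$) is correct. However, three of the details you supply to justify the hypotheses of Theorem~\ref{thm:egh} are off. First, you assert that Artinianness ``is given directly in the statement, since we assume $(\F)$ is Artinian''; the statement only assumes that $(\F')$ is Artinian over $\FF_{2^d}$, so the one substantive verification --- that the Weil descent of an Artinian homogeneous system is again Artinian in $\FF_2[y_1,\ldots,y_{nd}]$ --- is skipped by a misreading. (It does hold: the geometric points of the descended variety correspond to $d$-tuples of geometric points of $V(\F')$ via $\FF_{2^d}\otimes_{\FF_2}\overline{\FF}_2\cong\overline{\FF}_2^{\,d}$, so the only common zero is the origin.)

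Second, your argument that $I$ contains the regular sequence $x_1^2,\ldots,x_{nd}^2$ because these are ``the top-degree parts of the field equations included in $\F$'' is not valid: the ideal contains $x_j^2+x_j$, not $x_j^2$, and top-degree parts of elements of $\F$ live in $(\F^{\ttop})$, not in $I$. The regular-sequence hypothesis of Conjecture~\ref{conj:egh} is in any case already subsumed in Theorem~\ref{thm:egh}'s Artinian hypothesis for a quadratically generated homogeneous ideal, so this detour is both unnecessary and incorrectly argued. Third, homogenizing the field equations ``with respect to a new variable'' would place you in $nd+1$ variables and would change the bound to $nd+1-a$; the stated bound $nd-a$, with $a$ defined by sums running up to $nd$, shows that Theorem~\ref{thm:egh} must be applied in exactly $nd$ variables and that no homogenizing variable may be introduced here (that is the content of the separate inhomogeneous version, Corollary~\ref{cor:egh_inhomog2}). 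None of these slips changes the strategy, but as written the verification of the hypotheses does not go through.
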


Theorem~\ref{thm:egh} can be used to bound the solving degree of an inhomogeneous system of equations as follows. 

\begin{thm}\label{thm:egh_inhomog}
Assume that Conjecture~\ref{conj:egh} holds. Let $\F=\{f_1,\ldots,f_m\}\subseteq R$ be a system of inhomogeneous linearly independent quadratic polynomials such that $J=(\F^h)\subseteq S$ is Artinian. 
Let $\alpha$ be the unique integer such that $${n+2\choose 2}-{n+1-\alpha\choose 2}=\sum_{i=n+1-\alpha}^{n+1} i<m\leq \sum_{i=n-\alpha}^{n+1} i={n+2\choose 2}-{n-\alpha\choose 2}.$$ 
Then $$\maxgb(I)\leq n+1-\alpha.$$ 
If in addition $J$ is in generic coordinates, then 
$$\sd(\F)\leq n+1-\alpha.$$ 
\end{thm}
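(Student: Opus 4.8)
The plan is to reduce the inhomogeneous statement to the homogeneous Theorem~\ref{thm:egh} by passing to the homogenization $\F^h$. Since $\F=\{f_1,\ldots,f_m\}$ consists of inhomogeneous linearly independent quadratic polynomials, the homogenizations $f_1^h,\ldots,f_m^h\in S=R[t]$ are homogeneous linearly independent quadratic polynomials in the $n+1$ variables $x_1,\ldots,x_n,t$. The hypothesis that $J=(\F^h)$ is Artinian means exactly that $\F^h$ contains a regular sequence of $n+1$ quadratic polynomials in $S$, so $J$ satisfies the hypotheses of Conjecture~\ref{conj:egh} (applied in the polynomial ring in $n+1$ variables). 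This is the conceptual heart of the argument: everything about the inhomogeneous system is encoded in its homogenization, which lives in one more variable.

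Next I would apply Theorem~\ref{thm:egh} to $J\subseteq S$ directly, being careful to replace $n$ by $n+1$ throughout, since $S$ has $n+1$ variables. The integer $\alpha$ defined in the present statement via
$$\binom{n+2}{2}-\binom{n+1-\alpha}{2}<m\leq\binom{n+2}{2}-\binom{n-\alpha}{2}$$
is precisely the integer produced by Theorem~\ref{thm:egh} when its $n$ is replaced by $n+1$: indeed $\binom{(n+1)+1}{2}=\binom{n+2}{2}$, and the two bounding sums $\sum_{i=(n+1)-\alpha}^{n+1}i$ and $\sum_{i=(n+1)-\alpha-1}^{n+1}i$ match the displayed inequalities after rewriting them as telescoping differences of binomial coefficients. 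Thus Theorem~\ref{thm:egh}, applied to $J$, yields $\maxgb(J)\leq (n+1)-\alpha=n+1-\alpha$.

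The final step is to transfer the conclusion from $J$ back to $\F$. For the degree bound on the Gr\"obner basis, I would note that under the degree reverse lexicographic order the reduced Gr\"obner basis of $I=(\F)$ is obtained from that of $J=(\F^h)$ by dehomogenizing (setting $t=1$), a standard fact which does not increase degrees, so $\maxgb(I)\leq\maxgb(J)\leq n+1-\alpha$. For the solving degree bound, the additional hypothesis that $J$ is in generic coordinates lets me invoke the machinery of~\cite{CG17}: by~\cite[Theorem~3.22]{CG17} the regularity of $J$ bounds the solving degree of $\F$, and the chain of reasoning inside the proof of Theorem~\ref{thm:egh} actually shows $\reg(J)\leq n+1-\alpha$. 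Hence $\sd(\F)\leq n+1-\alpha$, as claimed.

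I expect the main obstacle to be purely bookkeeping rather than mathematical: verifying that the index shift $n\mapsto n+1$ correctly aligns the definition of $\alpha$ here with the one in Theorem~\ref{thm:egh}, and confirming that the bound $\maxgb(I)\leq\maxgb(J)$ and the solving-degree bound genuinely survive dehomogenization. The genericity-of-coordinates hypothesis on $J$ (rather than on $\F^h$ in some other guise) must be stated for $J$ precisely because it is $J$, the homogeneous object, to which Theorem~\ref{thm:egh} and~\cite[Theorem~3.22]{CG17} apply; I would make sure the statement's hypotheses are exactly what those cited results require.
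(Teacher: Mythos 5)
Your proposal matches the paper's intended argument exactly: the paper states this theorem without a written proof, indicating only that it follows by applying Theorem~\ref{thm:egh} to $\F^h$ in $S=R[t]$ (so with $n$ replaced by $n+1$), which is precisely the reduction you carry out, including the linear independence of the $f_i^h$, the index bookkeeping for $\alpha$, and the transfer back to $I$ by dehomogenization. The only nitpick is that for the inhomogeneous solving-degree bound the relevant citation is \cite[Theorems~3.23 and~3.26]{CG17} (which bound $\sd(\F)$ by $\reg(J)$) rather than \cite[Theorem~3.22]{CG17}, but the substance of that step is correct.
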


Notice that the assumption that $J$ is Artinian is verified, e.g., when $f_1,\ldots,f_{n+1}$ are generic polynomials.

\begin{cor}\label{cor:egh_inhomog}
Assume that Conjecture~\ref{conj:egh} holds. Let $\mathbb{K}$ be an infinite field and let $\F=\{f_1,\ldots,f_m\}\subseteq R$ be a system of inhomogeneous linearly independent quadratic polynomials such that $f_1,\ldots,f_{n+1}$ are generic. Let $\alpha$ be the unique integer such that $${n+2\choose 2}-{n+1-\alpha\choose 2}=\sum_{i=n+1-\alpha}^{n+1} i<m\leq \sum_{i=n-\alpha}^{n+1} i={n+2\choose 2}-{n-\alpha\choose 2}.$$ 
Then $$\maxgb(I)\leq n+1-\alpha.$$ 
If in addition $J=(\F^h)$ is in generic coordinates, then 
$$\sd(\F)\leq n+1-\alpha.$$ 
\end{cor}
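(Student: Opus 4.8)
The plan is to deduce the corollary directly from Theorem~\ref{thm:egh_inhomog}. The two statements have identical conclusions, the same defining inequalities for $\alpha$, and the hypotheses of the corollary (linear independence of the $f_i$, and genericity of $f_1,\ldots,f_{n+1}$ over an infinite field) are a special case of those of the theorem, with one exception: Theorem~\ref{thm:egh_inhomog} also requires $J=(\F^h)\subseteq S$ to be Artinian. So the entire task reduces to checking that genericity of $f_1,\ldots,f_{n+1}$ forces $J$ to be Artinian. Once this is established, Theorem~\ref{thm:egh_inhomog} applies verbatim and yields both $\maxgb(I)\leq n+1-\alpha$ and, under the additional hypothesis that $J$ is in generic coordinates, $\sd(\F)\leq n+1-\alpha$. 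Note that the dependence on Conjecture~\ref{conj:egh} enters only through Theorem~\ref{thm:egh_inhomog}; the Artinian verification below is unconditional.

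To verify the Artinian hypothesis, first I would pass to the homogenizations. Writing $f=f_2+f_1+f_0$ for the decomposition of a quadratic $f\in R$ into its homogeneous parts, one has $f^h=f_2+f_1t+f_0t^2\in S=\mathbb{K}[x_1,\ldots,x_n,t]$, a homogeneous quadratic form in $n+1$ variables. As explained in the discussion following Definition~\ref{defn:icrsr}, homogenization identifies a generic inhomogeneous quadratic in $n$ variables with a generic homogeneous quadratic in $n+1$ variables: indeed $f\mapsto f^h$ is the restriction to the locus $\{\deg f=2\}$ of a linear isomorphism between the space of polynomials of degree $\leq 2$ in $R$ and the space of quadrics in $S$, and such an isomorphism carries nonempty Zariski-open sets to nonempty Zariski-open sets. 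Hence $f_1^h,\ldots,f_{n+1}^h$ are $n+1$ generic homogeneous quadratic forms in the $n+1$ variables $x_1,\ldots,x_n,t$. Over the infinite field $\mathbb{K}$, a generic sequence of $n+1$ forms in $n+1$ variables is a regular sequence (its length equals the number of variables, which is the maximal possible), so $S/(f_1^h,\ldots,f_{n+1}^h)$ has Krull dimension $0$, equivalently these forms have no common projective zero, and $(f_1^h,\ldots,f_{n+1}^h)$ is Artinian. Since $(f_1^h,\ldots,f_{n+1}^h)\subseteq J$, we get $J_d\supseteq S_d$, hence $J_d=S_d$, for all $d\gg 0$, so $J$ is Artinian. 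Invoking Theorem~\ref{thm:egh_inhomog} then completes the argument.

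I would expect the only genuine subtlety to be the first step of the second paragraph: making precise that the homogenizations of generic polynomials are \emph{themselves} generic, and not merely quadratic forms. This is exactly where the infiniteness of $\mathbb{K}$ is essential, since genericity and the density of nonempty Zariski-open sets are meaningful only over an infinite field, as recalled before Definition~\ref{defn:srs}. The remaining ingredients — that a regular sequence of length equal to the number of variables yields an Artinian quotient, and that the Artinian property passes upward from $(f_1^h,\ldots,f_{n+1}^h)$ to the larger ideal $J$ — are routine and require no new input.
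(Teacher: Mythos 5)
Your proposal is correct and matches the paper's intended argument: the paper gives no separate proof of this corollary but prefaces it with the remark that the Artinian hypothesis of Theorem~\ref{thm:egh_inhomog} is satisfied when $f_1,\ldots,f_{n+1}$ are generic, which is exactly the reduction you carry out (identifying homogenizations of generic inhomogeneous quadratics with generic quadrics in $n+1$ variables, hence a regular sequence of length $n+1$, hence an Artinian subideal of $J$). Your write-up simply makes explicit the step the paper leaves as a one-line observation.
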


Finally, one can obtain a result for inhomogeneous systems, along the lines of Corollary~\ref{cor:egh}.

\begin{cor}\label{cor:egh_inhomog2}
Assume that Conjecture~\ref{conj:egh} holds. Let $\F'$ be a system of $m$ quadratic equations in $\FF_{2^d}[x_1,\ldots,x_n]$ such that the ideal $(\F'^h)$ is Artinian. Choose a basis of $\FF_{2^d}$ over $\FF_2$ and let $\F$ be the system obtained from $\F'$ by Weil descent. If $\F$ contains $\ell$ linearly independent quadratic equations, let $a$ be the unique integer such that 
$$\sum_{i=(n+1)d-a}^{(n+1)d} i<\ell\leq \sum_{i=(n+1)d-a-1}^{(n+1)d} i.$$
Then  $$\maxgb(I)\leq (n+1)d-a.$$
If in addition $J=(\F^h)$ is in generic coordinates, then $$\sd(\F)\leq (n+1)d-a.$$
\end{cor}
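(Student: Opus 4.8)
The plan is to deduce the statement from the homogeneous Eisenbud--Green--Harris bound of Theorem~\ref{thm:egh}, exactly as Corollary~\ref{cor:egh} does, after tracking the effect of Weil descent on the relevant invariants. The starting observation is that Weil descent is an $\FF_2$-linear operation on coefficients: writing each variable of $\FF_{2^d}[x_1,\ldots,x_n]$ in a fixed $\FF_2$-basis $\omega_1,\ldots,\omega_d$ of $\FF_{2^d}$ and expanding, every equation of $\F'$ of degree $\leq 2$ is replaced by $d$ equations over $\FF_2$ of degree $\leq 2$. In particular the descent of the homogenization $\F'^h$ is again a homogeneous quadratic system, now over $\FF_2$, and the homogenizing variable is descended together with the $x_i$, so that the resulting homogeneous quadratic ideal $J=(\F^h)$ lives in a polynomial ring over $\FF_2$ in $(n+1)d$ variables. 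This is the ring in which the bound $(n+1)d-a$ must be read.

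Next I would check that the hypotheses of the homogeneous bound transfer, the crucial point being that $J$ is Artinian. This is where the structure of Weil restriction enters: base-changing to $\overline{\FF_2}$, the zero locus of the descended system is isomorphic to the product, over the $d$ embeddings $\sigma$ of $\FF_{2^d}$ into $\overline{\FF_2}$, of the conjugate loci $V(\F'^h)^\sigma$. Since $(\F'^h)$ is Artinian by hypothesis, each conjugate locus has no nonzero projective points over $\overline{\FF_2}$, hence neither does their product; therefore $J$ is Artinian, i.e. it contains a regular sequence of $(n+1)d$ quadrics. I would then let $\ell=\dim_{\mathbb{K}}J_2$ be the number of linearly independent quadrics of the descent, which is exactly the quantity the statement calls the number of linearly independent quadratic equations of $\F$, and define $a$ by the stated inequalities with top index $(n+1)d$.

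With these identifications in place, the conclusion is immediate from Theorem~\ref{thm:egh} applied with the number of variables equal to $(n+1)d$: assuming Conjecture~\ref{conj:egh}, the repeated application of the inequality $H_{R/J}(e+1)\leq H_{R/J}(e)^{(e)}$ forces $H_{R/J}((n+1)d-a)=0$, so that $\reg(J)\leq (n+1)d-a$ and hence $\maxgb(I)\leq (n+1)d-a$ by the regularity-to-Gr\"obner-degree results of~\cite{CG17} already invoked in the proof of Theorem~\ref{thm:egh}. The bound $\sd(\F)\leq (n+1)d-a$ then follows under the additional hypothesis that $J=(\F^h)$ is in generic coordinates, via the same solving-degree-versus-regularity estimate used throughout (Theorem~\ref{CGbound} and~\cite[Theorem~3.22]{CG17}). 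Equivalently, one can package this as a direct application of Theorem~\ref{thm:egh_inhomog} to the descended inhomogeneous system, whose homogenization carries precisely these Artinian and generic-coordinate hypotheses.

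The step I expect to be the main obstacle is the bookkeeping around homogenization and Weil descent. One must verify that the homogenizing variable is descended consistently, so that the $(n+1)d$ appearing in the bound is genuinely the number of variables of the homogeneous ring in which $J$ is Artinian, and that the count $\ell$ is computed in that same ring; this is delicate precisely because homogenizing first and then descending is not literally the same operation as descending and then homogenizing with a single new variable. Establishing that the Artinian property is preserved, via the product-of-conjugates description of the Weil restriction over $\overline{\FF_2}$, is the only genuinely geometric input, and the remainder is a transcription of the proof of Theorem~\ref{thm:egh} with $n$ replaced by $(n+1)d$.
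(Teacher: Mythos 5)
Your proposal is correct and follows exactly the route the paper intends: the paper gives no explicit proof of this corollary, presenting it as following ``along the lines of Corollary~\ref{cor:egh}'', i.e.\ by applying Theorem~\ref{thm:egh} (equivalently Theorem~\ref{thm:egh_inhomog}) to the Weil-descended system, with the Artinian hypothesis transferred via the product-of-conjugates description of the Weil restriction over $\overline{\FF}_2$. Your remark about the order of homogenization and descent is well taken --- the bound $(n+1)d-a$ only matches Theorem~\ref{thm:egh} if the homogenizing variable is descended along with the $x_i$ (so that $J$ lives in $(n+1)d$ variables rather than $nd+1$), a point the statement leaves implicit and which your reading resolves in the only way consistent with the claimed formula.
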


\subsection{Limits to the applicability of Theorem~\ref{thm:egh} and relation with the degree of regularity}\label{sect:eghdreg}

This subsection is devoted to a discussion on the applicability of Theorem~\ref{thm:egh} and the other results of Section~\ref{sec:eisen} to systems arising in multivariate cryptography. We also establish a connection between the applicability of these results and the correctness of using the degree of regularity as a proxy for the solving degree. We use some results and standard arguments from commutative algebra, which we recall on a need-to-know basis, for the benefit of those who are not familiar with the commutative algebra literature. For the definition of degree of regularity and a discussion of its relation to the Castelnuovo-Mumford regularity, we refer the reader to~\cite[Section 4.1]{CG17}.

Throughout this subsection, we concentrate on the following situation. 

\begin{notat}\label{notat:lastsect}
Let $\mathbb{K}$ be a finite field and let $\F=\{f_1,\ldots,f_m\}\subseteq R$ be an inhomogeneous system of equations. Let $\F^h\subseteq S=R[t]$ be the system obtained by homogenizing the equations of $\F$ with respect to a new variable $t$. Let $I=(\F)\subseteq R$ be the ideal generated by $\F$ and let $J=(\F^h)\subseteq S$ be the ideal generated by $\F^h$. 
\end{notat}

\begin{ass}\label{ass:lastsect}
We assume that $\F$ contains the field equations and that it has at least one solution.
\end{ass}

Assumption~\ref{ass:lastsect} is satisfied for most systems associated to multivariate cryptosystems or multivariate signature schemes, after possibly adding the field equations to the system. It is not satisfied for most systems arising in the relation-finding phase of index-calculus algorithms to solve the elliptic and hyperelliptic curve discrete logarithm problem, and more in general the discrete logarithm problem on abelian variaties~\cite{G09}, as most of these systems have no solutions.

By~\cite[Theorem~3.23 and Theorem~3.26]{CG17}, in order to bound the solving degree of $\F$ it suffices to bound the regularity of $J$. Theorem~\ref{thm:egh} provides a bound on the regularity of certain Artinian ideals.
However, whenever $\F$ has a solution, $J$ is not Artinian. In fact, if $J$ is Artinian, then the only solution of $\F^h$ is the trivial solution $x_1=\cdots=x_n=t=0$. In particular, it has no solutions with $t=1$.
Since Theorem~\ref{thm:egh} applies to an Artinian ideal and $J$ is not Artinian, it would be natural to try to apply it to an Artinian reduction of $J$. In this subsection, we argue that {\bf it is unlikely that $J$ has an Artinian reduction}.

Before we go into the details of why this is the case, we need to recall a few definitions and facts from commutative algebra. Let $$J:(x_1,\ldots,x_n,t)^\infty=\{f\in S\mid \mbox{ there is a } d\geq 0 \mbox{ s.t. } fS_d\subseteq J\}$$ be the {\bf saturation} of $J$ with respect to the homogeneous maximal ideal of $S$. Here we denote by $fS_d=\{fg\mid g\in S_d\}$.
Let $$I^h=(f^h\mid f\in (f_1,\ldots,f_m))$$ be the {\bf homogenization} of the ideal $I$.
We stress that $I^h\supseteq J$, but they are not always equal. 

The following theorem gives equivalent conditions to $J$ having an Artinian reduction and provides a connection with the degree of regularity. Its relevance follows from observing that, whenever the equivalent conditions of the theorem hold, the solving degree of the system $\F$ can be estimated by applying Theorem~\ref{thm:egh} to $\F^{\ttop}$. 

\begin{thm}\label{thm:artinianred}
Adopt Notation~\ref{notat:lastsect} and assume that $\F$ satisfies Assumption~\ref{ass:lastsect}.
The following are equivalent:
\begin{enumerate}
\item $J$ has an Artinian reduction,
\item $(\F^{\ttop})$ is an Artinian reduction of $J$,
\item $t\nmid 0$ modulo $J$,
\item $J=I^h$.
\end{enumerate}
If the equivalent conditions hold, then $$\sd(\F)\leq\reg(\F^{\ttop})=d_{\reg}(\F).$$ 
\end{thm}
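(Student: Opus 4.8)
The plan is to prove the four equivalences in a cycle and then deduce the final inequality from the results already established in the excerpt. Throughout I work under Notation~\ref{notat:lastsect} and Assumption~\ref{ass:lastsect}, so $\F$ contains the field equations (which guarantees $J$ is in generic coordinates, as noted before Theorem~\ref{CGbound}) and $\F$ has a solution.

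The key tool is the standard commutative algebra fact relating the homogenization variable $t$ to the homogenization of the ideal: a homogeneous element $g \in S$ lies in $I^h$ if and only if $g$ is a multiple of some power of $t$ modulo $J$; equivalently, $I^h = J : t^\infty$, the saturation of $J$ with respect to $t$. From this I would extract the chain. First I would show $(3) \Leftrightarrow (4)$: since $I^h = J : t^\infty \supseteq J$ always, one has $J = I^h$ precisely when $J$ is already saturated with respect to $t$, which is exactly the statement that $t$ is a nonzerodivisor modulo $J$, i.e.\ $t \nmid 0$ modulo $J$. Next, $(3) \Rightarrow (2)$: if $t \nmid 0$ modulo $J$, then $t$ is a regular linear form on $S/J$, so $H = J + (t)/(t)$ is an Artinian reduction of $J$ in the sense of Definition~\ref{defn:artinred}; under the natural isomorphism $S/(t) \cong R$ established in the proof of the Corollary to Theorem~\ref{CGbound}, the image of $J + (t)/(t)$ is exactly $(\F^{\ttop})$, which gives $(2)$. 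The implication $(2) \Rightarrow (1)$ is trivial since $(2)$ exhibits a concrete Artinian reduction.

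The step that needs the most care is $(1) \Rightarrow (3)$, and I expect it to be the main obstacle. The difficulty is that an Artinian reduction is defined by \emph{some} linear form $\ell \in S_1$ with $\ell \nmid 0$ modulo $J$, whereas $(3)$ concerns the \emph{specific} form $t$. Here I would argue using the associated primes of $J$ together with Assumption~\ref{ass:lastsect}. Because $\F$ has a solution, the point at infinity structure forces $t$ to behave like a zerodivisor unless the solution set is empty at infinity; more precisely, I would show that under Assumption~\ref{ass:lastsect} the prime $(x_1,\ldots,x_n) \in \Ass(S/J)$ corresponds to the hyperplane at infinity $t \neq 0$ containing affine solutions, so that if $J$ admits \emph{any} Artinian reduction then the dimension and depth count forces the particular reduction by $t$ to exist as well. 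The clean way to package this is: $J$ has an Artinian reduction iff there is a linear nonzerodivisor, iff $\dim S/J = 1$ and $\operatorname{depth} S/J \geq 1$ (so that $S/J$ is Cohen–Macaulay of dimension one); and since $\F$ has an affine solution the affine cone over the solution set is one-dimensional with $t$ a parameter, the generic linear nonzerodivisor can be taken to be $t$ after the generic change of coordinates already guaranteed by the presence of the field equations. This is exactly the point where containing the field equations (hence being in generic coordinates) is used.

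Finally, assuming the equivalent conditions, I would derive the displayed inequality. By $(2)$, $(\F^{\ttop})$ is an Artinian reduction of $J$, so by Remark~\ref{rmk:artinred} we have $\reg_S(J) = \reg_R(\F^{\ttop})$. Since $\F$ contains the field equations, Theorem~\ref{CGbound} gives $\sd(\F) \leq \reg(J)$, and combining these yields $\sd(\F) \leq \reg(\F^{\ttop})$. The final equality $\reg(\F^{\ttop}) = d_{\reg}(\F)$ is exactly the consequence of~\cite[Proposition~4.5]{CG17} recorded just after the definition of $d_{\reg}$ in Section~\ref{sect:dreg}, valid because $(\F^{\ttop})$ is Artinian (by $(2)$) and hence $(\F^{\ttop})_d = R_d$ for some $d$. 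This chains to give $\sd(\F) \leq \reg(\F^{\ttop}) = d_{\reg}(\F)$, as claimed.
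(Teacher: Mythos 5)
Your handling of $(3)\Leftrightarrow(4)$ via $I^h=J:t^\infty$, of $(3)\Rightarrow(2)\Rightarrow(1)$ via the isomorphism $S/(t)\cong R$, and of the final inequality via Remark~\ref{rmk:artinred}, Theorem~\ref{CGbound}, and \cite[Proposition~4.5]{CG17} all match the paper's proof. The problem is the implication $(1)\Rightarrow(3)$, which you correctly flag as the crux but do not actually prove. Your sketch asserts that ``the dimension and depth count forces the particular reduction by $t$ to exist'' and that ``the generic linear nonzerodivisor can be taken to be $t$ after the generic change of coordinates already guaranteed by the presence of the field equations.'' Neither claim is an argument: the whole difficulty is precisely that some linear form $\ell$ being a nonzerodivisor modulo $J$ does not let you replace $\ell$ by the specific form $t$, and no change of coordinates is available here --- being in generic coordinates in the sense of \cite[Definition~1.11]{CG17} is a hypothesis about how $t$ interacts with the saturation of $J$, not a license to move $t$ to a generic position. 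Your geometric gloss is also off: the prime $(x_1,\ldots,x_n)\subseteq S$ cuts out the affine point $[0:\cdots:0:1]$, not the hyperplane at infinity, which is $t=0$; and the potential failure of $(3)$ comes from embedded components supported at the irrelevant maximal ideal, not from the solution set itself.

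The paper closes this gap in two clean steps. First, if $J$ has an Artinian reduction, i.e.\ there is a linear form $\ell$ with $\ell\nmid 0$ modulo $J$, then $J=J:(x_1,\ldots,x_n,t)^\infty$: for $f$ in the saturation with $fS_d\subseteq J$ and $d\geq 1$ minimal, one has $f\ell S_{d-1}\subseteq fS_d\subseteq J$, and regularity of $\ell$ forces $fS_{d-1}\subseteq J$, a contradiction; hence $d=0$ and $f\in J$. Second, by the argument in the proof of \cite[Theorem~3.26]{CG17}, the presence of the field equations guarantees that $t\nmid 0$ modulo the saturation $J:(x_1,\ldots,x_n,t)^\infty$, which now equals $J$. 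Your proposal contains neither of these steps (in particular, nowhere do you show that the existence of an Artinian reduction forces $J$ to be saturated with respect to the irrelevant maximal ideal), so as written the cycle of equivalences is broken at $(1)\Rightarrow(3)$.
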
 

\begin{proof}
Notice that the standard isomorphism between $S/(t)$ and $R$ sends $J+(t)/(t)$ to $(\F^{\ttop})$. Therefore, (2) and (3) are equivalent by Definition~\ref{defn:artinred}.
Clearly, (2) implies (1). 

We now show that (1) implies (3). 
Suppose that $J$ has an Artinian reduction. We claim that this implies that $J=J:(x_1,\ldots,x_n,t)^\infty$. By Definition~\ref{defn:artinred}, there is a homogeneous linear form $\ell\in S_1$ such that $\ell\nmid 0$ modulo $J$. Recall that, by definition of nonzero divisor, if $f\ell\in J$ for some $f\in S$, then $f\in J$. Let $f\in J:(x_1,\ldots,x_n,t)^\infty$ and let $d\geq 0$ be the least non-negative integer such that $fS_d\subseteq J$. If $d\geq 1$, then $f\ell S_{d-1}\subseteq fS_d\subseteq J$ and $\ell\nmid 0$ modulo $J$, hence $fS_{d-1}\subseteq J$. The latter contradicts the minimality of $d$, so it must be $d=0$, i.e., $f\in J$. Since $J\subseteq J:(x_1,\ldots,x_n,t)^\infty$ by the definition of saturation, we conclude that $J=J:(x_1,\ldots,x_n,t)^\infty$.
As shown in the proof of~\cite[Theorem~3.26]{CG17}, since $\F$ contains the field equations, then $t\nmid 0$ modulo $J:(x_1,\ldots,x_n,t)^\infty$. Since $J=J:(x_1,\ldots,x_n,t)^\infty$, then $t\nmid 0$ modulo $J$. This establishes the equivalence of conditions (1), (2), and (3).

Equivalence of (3) and (4) can be shown as follows. Consider the saturation of $J$ with respect to $t$, that is $$J:t^\infty=\{f\in S\mid \mbox{ there is a } d\geq 0 \mbox{ s.t. } ft^d\in J\}.$$ One has $$J\subseteq J:t^\infty=I^h,$$ where the inclusion follows from the definition and the equality by~\cite[Corollary~4.3.8]{KR2}. 
Therefore, (4) is equivalent to $J:t^\infty\subseteq J$. To check that the inclusion $J:t^\infty\subseteq J$ is equivalent to (3), let $f\in J:t^\infty$ and let $d$ be the least non-negative integer such that $ft^d\in J$. If $t\nmid 0$ modulo $J$ and $d\geq 1$, $ft^d\in J$ implies $ft^{d-1}\in J$, contradicting the minimality of $d$. Therefore it must be $d=0$, i.e., $f\in J$, showing that $J:t^\infty\subseteq J$. Conversely, if $J:t^\infty\subseteq J$, then $ft\in J$ implies $f\in J:t^\infty\subseteq J$, hence $t\nmid 0$ modulo $J$. 

Suppose now that the equivalent conditions hold. One has
$$\sd(\F)\leq \reg (J)=\reg(\F^{\ttop})=d_{\reg}(\F),$$
where the inequality follows from~\cite[Theorem~3.23 and Theorem~3.26]{CG17}, the first equality from Remark~\ref{rmk:artinred}, since $(\F^{\ttop})$ is an Artinian reduction of $J$, and the second equality from the definition of $d_{\reg}(\F)$.
\end{proof}

Finally, we give some evidence that the equivalent conditions of Theorem~\ref{thm:artinianred} are unlikely to hold, for a system $\mathcal{F}$ which satisfies Assumption~\ref{ass:lastsect}. One of the equivalent conditions of Theorem~\ref{thm:artinianred} is the equality $J=I^h$. As far as we know, the most general situation when $J=I^h$ is when {\bf $\F$ is a Gr\"obner basis of $I$ with respect to a degree-compatible term order}, as shown in~\cite[Proposition~4.3.21]{KR2}. It is clear that it is unlikely that the polynomials of $\F$ are a Gr\"obner basis, at least for systems $\F$ associated to multivariate cryptosystems or multivariate signature schemes.

\bigskip 

\addresseshere

\bigskip 

\appendix
\section{Values of $r(n+k,n)$ for $2 \le k,n \le 100$}

In this appendix, we provide the Castelnuovo-Mumford regularity $r(n+k,n)$ of the ideal generated by a cryptographic semi-regular system of $n+k$ homogeneous quadratic equations in $n$ variables. The value of $r(n+k,n)$ bounds the solving degree of cryptographic semi-regular systems of $n+k$ homogeneous polynomials in $n$ variables or $n+k$ inhomogeneous polynomials in $n-1$ variables (under the assumption that the system is in generic coordinates, as discussed in Section~\ref{semi-regular}). We have computed the value of $r(n+k,n)$ for $2 \le k,n \le 500$, however here we provide only the values for $2 \le k,n \le 100$. A table containing all the values that we computed can be found at~\url{bit.ly/wine-3}.

\begin{table}[ht]
	\hspace{-6cm}
	\caption{$r(n+k,n)$ for $2 \le k \le 100, 2\le n \le 26$} 
	\centering 
	\fontsize{6}{6}\selectfont
	\begin{tabular}{c|ccccccccccccccccccccccccc}
		\hline\hline
		k/n&2&3&4&5&6&7&8&9&10&11&12&13&14&15&16&17&18&19&20&21&22&23&24&25&26\\
		\hline
		2&2&3&3&3&4&4&5&5&6&6&6&7&7&8&8&9&9&10&10&10&11&11&12&12&13\\
		3&2&2&3&3&4&4&4&5&5&5&6&6&7&7&7&8&8&9&9&10&10&10&11&11&12\\
		4&2&2&3&3&3&4&4&4&5&5&5&6&6&7&7&7&8&8&8&9&9&10&10&10&11\\
		5&2&2&3&3&3&3&4&4&4&5&5&5&6&6&7&7&7&8&8&8&9&9&9&10&10\\
		6&2&2&2&3&3&3&4&4&4&5&5&5&6&6&6&7&7&7&8&8&8&9&9&9&10\\
		7&2&2&2&3&3&3&3&4&4&4&5&5&5&6&6&6&7&7&7&8&8&8&9&9&9\\
		8&2&2&2&3&3&3&3&4&4&4&4&5&5&5&6&6&6&7&7&7&8&8&8&9&9\\
		9&2&2&2&3&3&3&3&4&4&4&4&5&5&5&5&6&6&6&7&7&7&8&8&8&9\\
		10&2&2&2&2&3&3&3&3&4&4&4&4&5&5&5&6&6&6&6&7&7&7&8&8&8\\
		11&2&2&2&2&3&3&3&3&4&4&4&4&5&5&5&5&6&6&6&6&7&7&7&8&8\\
		12&2&2&2&2&3&3&3&3&3&4&4&4&4&5&5&5&5&6&6&6&7&7&7&7&8\\
		13&2&2&2&2&3&3&3&3&3&4&4&4&4&5&5&5&5&6&6&6&6&7&7&7&7\\
		14&2&2&2&2&3&3&3&3&3&4&4&4&4&4&5&5&5&5&6&6&6&6&7&7&7\\
		15&2&2&2&2&2&3&3&3&3&3&4&4&4&4&5&5&5&5&6&6&6&6&7&7&7\\
		16&2&2&2&2&2&3&3&3&3&3&4&4&4&4&5&5&5&5&5&6&6&6&6&7&7\\
		17&2&2&2&2&2&3&3&3&3&3&4&4&4&4&4&5&5&5&5&6&6&6&6&7&7\\
		18&2&2&2&2&2&3&3&3&3&3&4&4&4&4&4&5&5&5&5&5&6&6&6&6&7\\
		19&2&2&2&2&2&3&3&3&3&3&3&4&4&4&4&4&5&5&5&5&6&6&6&6&7\\
		20&2&2&2&2&2&3&3&3&3&3&3&4&4&4&4&4&5&5&5&5&5&6&6&6&6\\
		21&2&2&2&2&2&2&3&3&3&3&3&4&4&4&4&4&5&5&5&5&5&6&6&6&6\\
		22&2&2&2&2&2&2&3&3&3&3&3&3&4&4&4&4&4&5&5&5&5&5&6&6&6\\
		23&2&2&2&2&2&2&3&3&3&3&3&3&4&4&4&4&4&5&5&5&5&5&6&6&6\\
		24&2&2&2&2&2&2&3&3&3&3&3&3&4&4&4&4&4&5&5&5&5&5&6&6&6\\
		25&2&2&2&2&2&2&3&3&3&3&3&3&4&4&4&4&4&4&5&5&5&5&5&6&6\\
		26&2&2&2&2&2&2&3&3&3&3&3&3&3&4&4&4&4&4&5&5&5&5&5&6&6\\
		27&2&2&2&2&2&2&3&3&3&3&3&3&3&4&4&4&4&4&5&5&5&5&5&5&6\\
		28&2&2&2&2&2&2&2&3&3&3&3&3&3&4&4&4&4&4&4&5&5&5&5&5&6\\
		29&2&2&2&2&2&2&2&3&3&3&3&3&3&4&4&4&4&4&4&5&5&5&5&5&6\\
		30&2&2&2&2&2&2&2&3&3&3&3&3&3&4&4&4&4&4&4&5&5&5&5&5&5\\
		31&2&2&2&2&2&2&2&3&3&3&3&3&3&3&4&4&4&4&4&4&5&5&5&5&5\\
		32&2&2&2&2&2&2&2&3&3&3&3&3&3&3&4&4&4&4&4&4&5&5&5&5&5\\
		33&2&2&2&2&2&2&2&3&3&3&3&3&3&3&4&4&4&4&4&4&5&5&5&5&5\\
		34&2&2&2&2&2&2&2&3&3&3&3&3&3&3&4&4&4&4&4&4&5&5&5&5&5\\
		35&2&2&2&2&2&2&2&3&3&3&3&3&3&3&3&4&4&4&4&4&4&5&5&5&5\\
		36&2&2&2&2&2&2&2&2&3&3&3&3&3&3&3&4&4&4&4&4&4&5&5&5&5\\
		37&2&2&2&2&2&2&2&2&3&3&3&3&3&3&3&4&4&4&4&4&4&5&5&5&5\\
		38&2&2&2&2&2&2&2&2&3&3&3&3&3&3&3&4&4&4&4&4&4&4&5&5&5\\
		39&2&2&2&2&2&2&2&2&3&3&3&3&3&3&3&4&4&4&4&4&4&4&5&5&5\\
		40&2&2&2&2&2&2&2&2&3&3&3&3&3&3&3&3&4&4&4&4&4&4&5&5&5\\
		41&2&2&2&2&2&2&2&2&3&3&3&3&3&3&3&3&4&4&4&4&4&4&5&5&5\\
		42&2&2&2&2&2&2&2&2&3&3&3&3&3&3&3&3&4&4&4&4&4&4&4&5&5\\
		43&2&2&2&2&2&2&2&2&3&3&3&3&3&3&3&3&4&4&4&4&4&4&4&5&5\\
		44&2&2&2&2&2&2&2&2&3&3&3&3&3&3&3&3&4&4&4&4&4&4&4&5&5\\
		45&2&2&2&2&2&2&2&2&2&3&3&3&3&3&3&3&4&4&4&4&4&4&4&5&5\\
		46&2&2&2&2&2&2&2&2&2&3&3&3&3&3&3&3&3&4&4&4&4&4&4&4&5\\
		47&2&2&2&2&2&2&2&2&2&3&3&3&3&3&3&3&3&4&4&4&4&4&4&4&5\\
		48&2&2&2&2&2&2&2&2&2&3&3&3&3&3&3&3&3&4&4&4&4&4&4&4&5\\
		49&2&2&2&2&2&2&2&2&2&3&3&3&3&3&3&3&3&4&4&4&4&4&4&4&5\\
		50&2&2&2&2&2&2&2&2&2&3&3&3&3&3&3&3&3&4&4&4&4&4&4&4&4\\
		51&2&2&2&2&2&2&2&2&2&3&3&3&3&3&3&3&3&3&4&4&4&4&4&4&4\\
		52&2&2&2&2&2&2&2&2&2&3&3&3&3&3&3&3&3&3&4&4&4&4&4&4&4\\
		53&2&2&2&2&2&2&2&2&2&3&3&3&3&3&3&3&3&3&4&4&4&4&4&4&4\\
		54&2&2&2&2&2&2&2&2&2&3&3&3&3&3&3&3&3&3&4&4&4&4&4&4&4\\
		55&2&2&2&2&2&2&2&2&2&2&3&3&3&3&3&3&3&3&4&4&4&4&4&4&4\\
		56&2&2&2&2&2&2&2&2&2&2&3&3&3&3&3&3&3&3&4&4&4&4&4&4&4\\
		57&2&2&2&2&2&2&2&2&2&2&3&3&3&3&3&3&3&3&3&4&4&4&4&4&4\\
		58&2&2&2&2&2&2&2&2&2&2&3&3&3&3&3&3&3&3&3&4&4&4&4&4&4\\
		59&2&2&2&2&2&2&2&2&2&2&3&3&3&3&3&3&3&3&3&4&4&4&4&4&4\\
		60&2&2&2&2&2&2&2&2&2&2&3&3&3&3&3&3&3&3&3&4&4&4&4&4&4\\
		61&2&2&2&2&2&2&2&2&2&2&3&3&3&3&3&3&3&3&3&4&4&4&4&4&4\\
		62&2&2&2&2&2&2&2&2&2&2&3&3&3&3&3&3&3&3&3&4&4&4&4&4&4\\
		63&2&2&2&2&2&2&2&2&2&2&3&3&3&3&3&3&3&3&3&4&4&4&4&4&4\\
		64&2&2&2&2&2&2&2&2&2&2&3&3&3&3&3&3&3&3&3&3&4&4&4&4&4\\
		65&2&2&2&2&2&2&2&2&2&2&3&3&3&3&3&3&3&3&3&3&4&4&4&4&4\\
		66&2&2&2&2&2&2&2&2&2&2&2&3&3&3&3&3&3&3&3&3&4&4&4&4&4\\
		67&2&2&2&2&2&2&2&2&2&2&2&3&3&3&3&3&3&3&3&3&4&4&4&4&4\\
		68&2&2&2&2&2&2&2&2&2&2&2&3&3&3&3&3&3&3&3&3&4&4&4&4&4\\
		69&2&2&2&2&2&2&2&2&2&2&2&3&3&3&3&3&3&3&3&3&4&4&4&4&4\\
		70&2&2&2&2&2&2&2&2&2&2&2&3&3&3&3&3&3&3&3&3&3&4&4&4&4\\
		71&2&2&2&2&2&2&2&2&2&2&2&3&3&3&3&3&3&3&3&3&3&4&4&4&4\\
		72&2&2&2&2&2&2&2&2&2&2&2&3&3&3&3&3&3&3&3&3&3&4&4&4&4\\
		73&2&2&2&2&2&2&2&2&2&2&2&3&3&3&3&3&3&3&3&3&3&4&4&4&4\\
		74&2&2&2&2&2&2&2&2&2&2&2&3&3&3&3&3&3&3&3&3&3&4&4&4&4\\
		75&2&2&2&2&2&2&2&2&2&2&2&3&3&3&3&3&3&3&3&3&3&4&4&4&4\\
		76&2&2&2&2&2&2&2&2&2&2&2&3&3&3&3&3&3&3&3&3&3&4&4&4&4\\
		77&2&2&2&2&2&2&2&2&2&2&2&3&3&3&3&3&3&3&3&3&3&3&4&4&4\\
		78&2&2&2&2&2&2&2&2&2&2&2&2&3&3&3&3&3&3&3&3&3&3&4&4&4\\
		79&2&2&2&2&2&2&2&2&2&2&2&2&3&3&3&3&3&3&3&3&3&3&4&4&4\\
		80&2&2&2&2&2&2&2&2&2&2&2&2&3&3&3&3&3&3&3&3&3&3&4&4&4\\
		81&2&2&2&2&2&2&2&2&2&2&2&2&3&3&3&3&3&3&3&3&3&3&4&4&4\\
		82&2&2&2&2&2&2&2&2&2&2&2&2&3&3&3&3&3&3&3&3&3&3&4&4&4\\
		83&2&2&2&2&2&2&2&2&2&2&2&2&3&3&3&3&3&3&3&3&3&3&4&4&4\\
		84&2&2&2&2&2&2&2&2&2&2&2&2&3&3&3&3&3&3&3&3&3&3&4&4&4\\
		85&2&2&2&2&2&2&2&2&2&2&2&2&3&3&3&3&3&3&3&3&3&3&3&4&4\\
		86&2&2&2&2&2&2&2&2&2&2&2&2&3&3&3&3&3&3&3&3&3&3&3&4&4\\
		87&2&2&2&2&2&2&2&2&2&2&2&2&3&3&3&3&3&3&3&3&3&3&3&4&4\\
		88&2&2&2&2&2&2&2&2&2&2&2&2&3&3&3&3&3&3&3&3&3&3&3&4&4\\
		89&2&2&2&2&2&2&2&2&2&2&2&2&3&3&3&3&3&3&3&3&3&3&3&4&4\\
		90&2&2&2&2&2&2&2&2&2&2&2&2&3&3&3&3&3&3&3&3&3&3&3&4&4\\
		91&2&2&2&2&2&2&2&2&2&2&2&2&2&3&3&3&3&3&3&3&3&3&3&4&4\\
		92&2&2&2&2&2&2&2&2&2&2&2&2&2&3&3&3&3&3&3&3&3&3&3&3&4\\
		93&2&2&2&2&2&2&2&2&2&2&2&2&2&3&3&3&3&3&3&3&3&3&3&3&4\\
		94&2&2&2&2&2&2&2&2&2&2&2&2&2&3&3&3&3&3&3&3&3&3&3&3&4\\
		95&2&2&2&2&2&2&2&2&2&2&2&2&2&3&3&3&3&3&3&3&3&3&3&3&4\\
		96&2&2&2&2&2&2&2&2&2&2&2&2&2&3&3&3&3&3&3&3&3&3&3&3&4\\
		97&2&2&2&2&2&2&2&2&2&2&2&2&2&3&3&3&3&3&3&3&3&3&3&3&4\\
		98&2&2&2&2&2&2&2&2&2&2&2&2&2&3&3&3&3&3&3&3&3&3&3&3&4\\
		99&2&2&2&2&2&2&2&2&2&2&2&2&2&3&3&3&3&3&3&3&3&3&3&3&4\\
		100&2&2&2&2&2&2&2&2&2&2&2&2&2&3&3&3&3&3&3&3&3&3&3&3&3\\
		\hline
	\end{tabular}
\end{table}
\begin{table}[ht]
	\hspace{-6cm}
	\caption{$r(n+k,n)$ for $2 \le k \le 100, 27\le n \le 51$} 
	\centering 
	\fontsize{6}{6}\selectfont
	\begin{tabular}{c|ccccccccccccccccccccccccc}
		\hline\hline
		k/n&27&28&29&30&31&32&33&34&35&36&37&38&39&40&41&42&43&44&45&46&47&48&49&50&51\\
		\hline
		2&25&25&26&26&27&27&28&28&28&29&29&30&30&31&31&32&32&33&33&34&34&35&35&36&36\\
		3&23&23&24&24&25&25&26&26&26&27&27&28&28&29&29&30&30&31&31&31&32&32&33&33&34\\
		4&22&22&22&23&23&24&24&25&25&25&26&26&27&27&28&28&28&29&29&30&30&31&31&31&32\\
		5&20&21&21&22&22&22&23&23&24&24&25&25&25&26&26&27&27&27&28&28&29&29&30&30&30\\
		6&20&20&20&21&21&21&22&22&23&23&23&24&24&25&25&26&26&26&27&27&28&28&28&29&29\\
		7&19&19&19&20&20&21&21&21&22&22&23&23&23&24&24&25&25&25&26&26&27&27&27&28&28\\
		8&18&18&19&19&19&20&20&21&21&21&22&22&23&23&23&24&24&24&25&25&26&26&26&27&27\\
		9&17&18&18&18&19&19&20&20&20&21&21&21&22&22&23&23&23&24&24&24&25&25&26&26&26\\
		10&17&17&18&18&18&19&19&19&20&20&20&21&21&21&22&22&23&23&23&24&24&24&25&25&26\\
		11&16&17&17&17&18&18&18&19&19&19&20&20&20&21&21&22&22&22&23&23&23&24&24&24&25\\
		12&16&16&17&17&17&18&18&18&19&19&19&20&20&20&21&21&21&22&22&22&23&23&23&24&24\\
		13&15&16&16&16&17&17&17&18&18&18&19&19&19&20&20&20&21&21&21&22&22&22&23&23&24\\
		14&15&15&16&16&16&17&17&17&18&18&18&19&19&19&20&20&20&21&21&21&22&22&22&23&23\\
		15&15&15&15&16&16&16&17&17&17&18&18&18&18&19&19&19&20&20&20&21&21&21&22&22&22\\
		16&14&15&15&15&16&16&16&16&17&17&17&18&18&18&19&19&19&20&20&20&21&21&21&22&22\\
		17&14&14&15&15&15&16&16&16&16&17&17&17&18&18&18&19&19&19&20&20&20&21&21&21&22\\
		18&14&14&14&15&15&15&15&16&16&16&17&17&17&18&18&18&19&19&19&19&20&20&20&21&21\\
		19&13&14&14&14&15&15&15&15&16&16&16&17&17&17&18&18&18&18&19&19&19&20&20&20&21\\
		20&13&13&14&14&14&15&15&15&15&16&16&16&17&17&17&18&18&18&18&19&19&19&20&20&20\\
		21&13&13&13&14&14&14&15&15&15&15&16&16&16&17&17&17&18&18&18&18&19&19&19&20&20\\
		22&13&13&13&13&14&14&14&15&15&15&15&16&16&16&17&17&17&18&18&18&18&19&19&19&20\\
		23&12&13&13&13&14&14&14&14&15&15&15&15&16&16&16&17&17&17&17&18&18&18&19&19&19\\
		24&12&13&13&13&13&14&14&14&14&15&15&15&16&16&16&16&17&17&17&17&18&18&18&19&19\\
		25&12&12&13&13&13&13&14&14&14&14&15&15&15&16&16&16&16&17&17&17&17&18&18&18&19\\
		26&12&12&12&13&13&13&13&14&14&14&14&15&15&15&16&16&16&16&17&17&17&17&18&18&18\\
		27&12&12&12&12&13&13&13&13&14&14&14&15&15&15&15&16&16&16&16&17&17&17&18&18&18\\
		28&11&12&12&12&12&13&13&13&14&14&14&14&15&15&15&15&16&16&16&16&17&17&17&18&18\\
		29&11&12&12&12&12&13&13&13&13&14&14&14&14&15&15&15&15&16&16&16&16&17&17&17&18\\
		30&11&11&12&12&12&12&13&13&13&13&14&14&14&14&15&15&15&15&16&16&16&16&17&17&17\\
		31&11&11&11&12&12&12&12&13&13&13&13&14&14&14&14&15&15&15&15&16&16&16&17&17&17\\
		32&11&11&11&12&12&12&12&13&13&13&13&14&14&14&14&15&15&15&15&16&16&16&16&17&17\\
		33&11&11&11&11&12&12&12&12&13&13&13&13&14&14&14&14&15&15&15&15&16&16&16&16&17\\
		34&11&11&11&11&11&12&12&12&12&13&13&13&13&14&14&14&14&15&15&15&15&16&16&16&16\\
		35&10&11&11&11&11&12&12&12&12&13&13&13&13&13&14&14&14&14&15&15&15&15&16&16&16\\
		36&10&11&11&11&11&11&12&12&12&12&13&13&13&13&14&14&14&14&15&15&15&15&16&16&16\\
		37&10&10&11&11&11&11&12&12&12&12&12&13&13&13&13&14&14&14&14&15&15&15&15&16&16\\
		38&10&10&10&11&11&11&11&12&12&12&12&13&13&13&13&13&14&14&14&14&15&15&15&15&16\\
		39&10&10&10&11&11&11&11&11&12&12&12&12&13&13&13&13&14&14&14&14&14&15&15&15&15\\
		40&10&10&10&10&11&11&11&11&12&12&12&12&12&13&13&13&13&14&14&14&14&15&15&15&15\\
		41&10&10&10&10&11&11&11&11&11&12&12&12&12&13&13&13&13&13&14&14&14&14&15&15&15\\
		42&10&10&10&10&10&11&11&11&11&12&12&12&12&12&13&13&13&13&14&14&14&14&14&15&15\\
		43&10&10&10&10&10&11&11&11&11&11&12&12&12&12&13&13&13&13&13&14&14&14&14&15&15\\
		44&9&10&10&10&10&10&11&11&11&11&12&12&12&12&12&13&13&13&13&13&14&14&14&14&15\\
		45&9&10&10&10&10&10&11&11&11&11&11&12&12&12&12&12&13&13&13&13&14&14&14&14&14\\
		46&9&9&10&10&10&10&10&11&11&11&11&11&12&12&12&12&13&13&13&13&13&14&14&14&14\\
		47&9&9&10&10&10&10&10&11&11&11&11&11&12&12&12&12&12&13&13&13&13&14&14&14&14\\
		48&9&9&9&10&10&10&10&10&11&11&11&11&11&12&12&12&12&13&13&13&13&13&14&14&14\\
		49&9&9&9&10&10&10&10&10&11&11&11&11&11&12&12&12&12&12&13&13&13&13&13&14&14\\
		50&9&9&9&9&10&10&10&10&10&11&11&11&11&11&12&12&12&12&13&13&13&13&13&14&14\\
		51&9&9&9&9&10&10&10&10&10&11&11&11&11&11&12&12&12&12&12&13&13&13&13&13&14\\
		52&9&9&9&9&9&10&10&10&10&10&11&11&11&11&11&12&12&12&12&12&13&13&13&13&14\\
		53&9&9&9&9&9&10&10&10&10&10&11&11&11&11&11&12&12&12&12&12&13&13&13&13&13\\
		54&9&9&9&9&9&10&10&10&10&10&10&11&11&11&11&11&12&12&12&12&12&13&13&13&13\\
		55&9&9&9&9&9&9&10&10&10&10&10&11&11&11&11&11&12&12&12&12&12&13&13&13&13\\
		56&8&9&9&9&9&9&10&10&10&10&10&10&11&11&11&11&11&12&12&12&12&12&13&13&13\\
		57&8&9&9&9&9&9&9&10&10&10&10&10&11&11&11&11&11&12&12&12&12&12&13&13&13\\
		58&8&8&9&9&9&9&9&10&10&10&10&10&10&11&11&11&11&11&12&12&12&12&12&13&13\\
		59&8&8&9&9&9&9&9&9&10&10&10&10&10&11&11&11&11&11&12&12&12&12&12&13&13\\
		60&8&8&9&9&9&9&9&9&10&10&10&10&10&11&11&11&11&11&11&12&12&12&12&12&13\\
		61&8&8&8&9&9&9&9&9&10&10&10&10&10&10&11&11&11&11&11&12&12&12&12&12&13\\
		62&8&8&8&9&9&9&9&9&9&10&10&10&10&10&11&11&11&11&11&11&12&12&12&12&12\\
		63&8&8&8&8&9&9&9&9&9&10&10&10&10&10&10&11&11&11&11&11&12&12&12&12&12\\
		64&8&8&8&8&9&9&9&9&9&9&10&10&10&10&10&11&11&11&11&11&11&12&12&12&12\\
		65&8&8&8&8&9&9&9&9&9&9&10&10&10&10&10&10&11&11&11&11&11&12&12&12&12\\
		66&8&8&8&8&8&9&9&9&9&9&10&10&10&10&10&10&11&11&11&11&11&12&12&12&12\\
		67&8&8&8&8&8&9&9&9&9&9&9&10&10&10&10&10&11&11&11&11&11&11&12&12&12\\
		68&8&8&8&8&8&9&9&9&9&9&9&10&10&10&10&10&10&11&11&11&11&11&12&12&12\\
		69&8&8&8&8&8&8&9&9&9&9&9&9&10&10&10&10&10&11&11&11&11&11&11&12&12\\
		70&8&8&8&8&8&8&9&9&9&9&9&9&10&10&10&10&10&10&11&11&11&11&11&12&12\\
		71&8&8&8&8&8&8&9&9&9&9&9&9&10&10&10&10&10&10&11&11&11&11&11&11&12\\
		72&8&8&8&8&8&8&8&9&9&9&9&9&9&10&10&10&10&10&10&11&11&11&11&11&12\\
		73&7&8&8&8&8&8&8&9&9&9&9&9&9&10&10&10&10&10&10&11&11&11&11&11&11\\
		74&7&8&8&8&8&8&8&9&9&9&9&9&9&10&10&10&10&10&10&11&11&11&11&11&11\\
		75&7&8&8&8&8&8&8&8&9&9&9&9&9&9&10&10&10&10&10&10&11&11&11&11&11\\
		76&7&7&8&8&8&8&8&8&9&9&9&9&9&9&10&10&10&10&10&10&11&11&11&11&11\\
		77&7&7&8&8&8&8&8&8&9&9&9&9&9&9&9&10&10&10&10&10&10&11&11&11&11\\
		78&7&7&8&8&8&8&8&8&8&9&9&9&9&9&9&10&10&10&10&10&10&11&11&11&11\\
		79&7&7&7&8&8&8&8&8&8&9&9&9&9&9&9&10&10&10&10&10&10&11&11&11&11\\
		80&7&7&7&8&8&8&8&8&8&9&9&9&9&9&9&9&10&10&10&10&10&10&11&11&11\\
		81&7&7&7&8&8&8&8&8&8&8&9&9&9&9&9&9&10&10&10&10&10&10&11&11&11\\
		82&7&7&7&7&8&8&8&8&8&8&9&9&9&9&9&9&10&10&10&10&10&10&11&11&11\\
		83&7&7&7&7&8&8&8&8&8&8&9&9&9&9&9&9&9&10&10&10&10&10&10&11&11\\
		84&7&7&7&7&8&8&8&8&8&8&8&9&9&9&9&9&9&10&10&10&10&10&10&11&11\\
		85&7&7&7&7&8&8&8&8&8&8&8&9&9&9&9&9&9&10&10&10&10&10&10&10&11\\
		86&7&7&7&7&7&8&8&8&8&8&8&9&9&9&9&9&9&9&10&10&10&10&10&10&11\\
		87&7&7&7&7&7&8&8&8&8&8&8&8&9&9&9&9&9&9&10&10&10&10&10&10&11\\
		88&7&7&7&7&7&8&8&8&8&8&8&8&9&9&9&9&9&9&10&10&10&10&10&10&10\\
		89&7&7&7&7&7&7&8&8&8&8&8&8&9&9&9&9&9&9&9&10&10&10&10&10&10\\
		90&7&7&7&7&7&7&8&8&8&8&8&8&8&9&9&9&9&9&9&10&10&10&10&10&10\\
		91&7&7&7&7&7&7&8&8&8&8&8&8&8&9&9&9&9&9&9&10&10&10&10&10&10\\
		92&7&7&7&7&7&7&8&8&8&8&8&8&8&9&9&9&9&9&9&9&10&10&10&10&10\\
		93&7&7&7&7&7&7&7&8&8&8&8&8&8&8&9&9&9&9&9&9&10&10&10&10&10\\
		94&7&7&7&7&7&7&7&8&8&8&8&8&8&8&9&9&9&9&9&9&10&10&10&10&10\\
		95&7&7&7&7&7&7&7&8&8&8&8&8&8&8&9&9&9&9&9&9&9&10&10&10&10\\
		96&7&7&7&7&7&7&7&8&8&8&8&8&8&8&9&9&9&9&9&9&9&10&10&10&10\\
		97&7&7&7&7&7&7&7&7&8&8&8&8&8&8&8&9&9&9&9&9&9&10&10&10&10\\
		98&6&7&7&7&7&7&7&7&8&8&8&8&8&8&8&9&9&9&9&9&9&9&10&10&10\\
		99&6&7&7&7&7&7&7&7&8&8&8&8&8&8&8&9&9&9&9&9&9&9&10&10&10\\
		100&6&7&7&7&7&7&7&7&7&8&8&8&8&8&8&8&9&9&9&9&9&9&10&10&10\\
		\hline
	\end{tabular}
\end{table}

\begin{table}[ht]
	\hspace{-6cm}
	\caption{$r(n+k,n)$ for $2 \le k \le 100, 52\le n \le 76$} 
	\centering 
	\fontsize{6}{6}\selectfont
	\begin{tabular}{c|ccccccccccccccccccccccccc}
		\hline\hline
		k/n&52&53&54&55&56&57&58&59&60&61&62&63&64&65&66&67&68&69&70&71&72&73&74&75&76\\
		\hline
		2&25&25&26&26&27&27&28&28&28&29&29&30&30&31&31&32&32&33&33&34&34&35&35&36&36\\
		3&23&23&24&24&25&25&26&26&26&27&27&28&28&29&29&30&30&31&31&31&32&32&33&33&34\\
		4&22&22&22&23&23&24&24&25&25&25&26&26&27&27&28&28&28&29&29&30&30&31&31&31&32\\
		5&20&21&21&22&22&22&23&23&24&24&25&25&25&26&26&27&27&27&28&28&29&29&30&30&30\\
		6&20&20&20&21&21&21&22&22&23&23&23&24&24&25&25&26&26&26&27&27&28&28&28&29&29\\
		7&19&19&19&20&20&21&21&21&22&22&23&23&23&24&24&25&25&25&26&26&27&27&27&28&28\\
		8&18&18&19&19&19&20&20&21&21&21&22&22&23&23&23&24&24&24&25&25&26&26&26&27&27\\
		9&17&18&18&18&19&19&20&20&20&21&21&21&22&22&23&23&23&24&24&24&25&25&26&26&26\\
		10&17&17&18&18&18&19&19&19&20&20&20&21&21&21&22&22&23&23&23&24&24&24&25&25&26\\
		11&16&17&17&17&18&18&18&19&19&19&20&20&20&21&21&22&22&22&23&23&23&24&24&24&25\\
		12&16&16&17&17&17&18&18&18&19&19&19&20&20&20&21&21&21&22&22&22&23&23&23&24&24\\
		13&15&16&16&16&17&17&17&18&18&18&19&19&19&20&20&20&21&21&21&22&22&22&23&23&24\\
		14&15&15&16&16&16&17&17&17&18&18&18&19&19&19&20&20&20&21&21&21&22&22&22&23&23\\
		15&15&15&15&16&16&16&17&17&17&18&18&18&18&19&19&19&20&20&20&21&21&21&22&22&22\\
		16&14&15&15&15&16&16&16&16&17&17&17&18&18&18&19&19&19&20&20&20&21&21&21&22&22\\
		17&14&14&15&15&15&16&16&16&16&17&17&17&18&18&18&19&19&19&20&20&20&21&21&21&22\\
		18&14&14&14&15&15&15&15&16&16&16&17&17&17&18&18&18&19&19&19&19&20&20&20&21&21\\
		19&13&14&14&14&15&15&15&15&16&16&16&17&17&17&18&18&18&18&19&19&19&20&20&20&21\\
		20&13&13&14&14&14&15&15&15&15&16&16&16&17&17&17&18&18&18&18&19&19&19&20&20&20\\
		21&13&13&13&14&14&14&15&15&15&15&16&16&16&17&17&17&18&18&18&18&19&19&19&20&20\\
		22&13&13&13&13&14&14&14&15&15&15&15&16&16&16&17&17&17&18&18&18&18&19&19&19&20\\
		23&12&13&13&13&14&14&14&14&15&15&15&15&16&16&16&17&17&17&17&18&18&18&19&19&19\\
		24&12&13&13&13&13&14&14&14&14&15&15&15&16&16&16&16&17&17&17&17&18&18&18&19&19\\
		25&12&12&13&13&13&13&14&14&14&14&15&15&15&16&16&16&16&17&17&17&17&18&18&18&19\\
		26&12&12&12&13&13&13&13&14&14&14&14&15&15&15&16&16&16&16&17&17&17&17&18&18&18\\
		27&12&12&12&12&13&13&13&13&14&14&14&15&15&15&15&16&16&16&16&17&17&17&18&18&18\\
		28&11&12&12&12&12&13&13&13&14&14&14&14&15&15&15&15&16&16&16&16&17&17&17&18&18\\
		29&11&12&12&12&12&13&13&13&13&14&14&14&14&15&15&15&15&16&16&16&16&17&17&17&18\\
		30&11&11&12&12&12&12&13&13&13&13&14&14&14&14&15&15&15&15&16&16&16&16&17&17&17\\
		31&11&11&11&12&12&12&12&13&13&13&13&14&14&14&14&15&15&15&15&16&16&16&17&17&17\\
		32&11&11&11&12&12&12&12&13&13&13&13&14&14&14&14&15&15&15&15&16&16&16&16&17&17\\
		33&11&11&11&11&12&12&12&12&13&13&13&13&14&14&14&14&15&15&15&15&16&16&16&16&17\\
		34&11&11&11&11&11&12&12&12&12&13&13&13&13&14&14&14&14&15&15&15&15&16&16&16&16\\
		35&10&11&11&11&11&12&12&12&12&13&13&13&13&13&14&14&14&14&15&15&15&15&16&16&16\\
		36&10&11&11&11&11&11&12&12&12&12&13&13&13&13&14&14&14&14&15&15&15&15&16&16&16\\
		37&10&10&11&11&11&11&12&12&12&12&12&13&13&13&13&14&14&14&14&15&15&15&15&16&16\\
		38&10&10&10&11&11&11&11&12&12&12&12&13&13&13&13&13&14&14&14&14&15&15&15&15&16\\
		39&10&10&10&11&11&11&11&11&12&12&12&12&13&13&13&13&14&14&14&14&14&15&15&15&15\\
		40&10&10&10&10&11&11&11&11&12&12&12&12&12&13&13&13&13&14&14&14&14&15&15&15&15\\
		41&10&10&10&10&11&11&11&11&11&12&12&12&12&13&13&13&13&13&14&14&14&14&15&15&15\\
		42&10&10&10&10&10&11&11&11&11&12&12&12&12&12&13&13&13&13&14&14&14&14&14&15&15\\
		43&10&10&10&10&10&11&11&11&11&11&12&12&12&12&13&13&13&13&13&14&14&14&14&15&15\\
		44&9&10&10&10&10&10&11&11&11&11&12&12&12&12&12&13&13&13&13&13&14&14&14&14&15\\
		45&9&10&10&10&10&10&11&11&11&11&11&12&12&12&12&12&13&13&13&13&14&14&14&14&14\\
		46&9&9&10&10&10&10&10&11&11&11&11&11&12&12&12&12&13&13&13&13&13&14&14&14&14\\
		47&9&9&10&10&10&10&10&11&11&11&11&11&12&12&12&12&12&13&13&13&13&14&14&14&14\\
		48&9&9&9&10&10&10&10&10&11&11&11&11&11&12&12&12&12&13&13&13&13&13&14&14&14\\
		49&9&9&9&10&10&10&10&10&11&11&11&11&11&12&12&12&12&12&13&13&13&13&13&14&14\\
		50&9&9&9&9&10&10&10&10&10&11&11&11&11&11&12&12&12&12&13&13&13&13&13&14&14\\
		51&9&9&9&9&10&10&10&10&10&11&11&11&11&11&12&12&12&12&12&13&13&13&13&13&14\\
		52&9&9&9&9&9&10&10&10&10&10&11&11&11&11&11&12&12&12&12&12&13&13&13&13&14\\
		53&9&9&9&9&9&10&10&10&10&10&11&11&11&11&11&12&12&12&12&12&13&13&13&13&13\\
		54&9&9&9&9&9&10&10&10&10&10&10&11&11&11&11&11&12&12&12&12&12&13&13&13&13\\
		55&9&9&9&9&9&9&10&10&10&10&10&11&11&11&11&11&12&12&12&12&12&13&13&13&13\\
		56&8&9&9&9&9&9&10&10&10&10&10&10&11&11&11&11&11&12&12&12&12&12&13&13&13\\
		57&8&9&9&9&9&9&9&10&10&10&10&10&11&11&11&11&11&12&12&12&12&12&13&13&13\\
		58&8&8&9&9&9&9&9&10&10&10&10&10&10&11&11&11&11&11&12&12&12&12&12&13&13\\
		59&8&8&9&9&9&9&9&9&10&10&10&10&10&11&11&11&11&11&12&12&12&12&12&13&13\\
		60&8&8&9&9&9&9&9&9&10&10&10&10&10&11&11&11&11&11&11&12&12&12&12&12&13\\
		61&8&8&8&9&9&9&9&9&10&10&10&10&10&10&11&11&11&11&11&12&12&12&12&12&13\\
		62&8&8&8&9&9&9&9&9&9&10&10&10&10&10&11&11&11&11&11&11&12&12&12&12&12\\
		63&8&8&8&8&9&9&9&9&9&10&10&10&10&10&10&11&11&11&11&11&12&12&12&12&12\\
		64&8&8&8&8&9&9&9&9&9&9&10&10&10&10&10&11&11&11&11&11&11&12&12&12&12\\
		65&8&8&8&8&9&9&9&9&9&9&10&10&10&10&10&10&11&11&11&11&11&12&12&12&12\\
		66&8&8&8&8&8&9&9&9&9&9&10&10&10&10&10&10&11&11&11&11&11&12&12&12&12\\
		67&8&8&8&8&8&9&9&9&9&9&9&10&10&10&10&10&11&11&11&11&11&11&12&12&12\\
		68&8&8&8&8&8&9&9&9&9&9&9&10&10&10&10&10&10&11&11&11&11&11&12&12&12\\
		69&8&8&8&8&8&8&9&9&9&9&9&9&10&10&10&10&10&11&11&11&11&11&11&12&12\\
		70&8&8&8&8&8&8&9&9&9&9&9&9&10&10&10&10&10&10&11&11&11&11&11&12&12\\
		71&8&8&8&8&8&8&9&9&9&9&9&9&10&10&10&10&10&10&11&11&11&11&11&11&12\\
		72&8&8&8&8&8&8&8&9&9&9&9&9&9&10&10&10&10&10&10&11&11&11&11&11&12\\
		73&7&8&8&8&8&8&8&9&9&9&9&9&9&10&10&10&10&10&10&11&11&11&11&11&11\\
		74&7&8&8&8&8&8&8&9&9&9&9&9&9&10&10&10&10&10&10&11&11&11&11&11&11\\
		75&7&8&8&8&8&8&8&8&9&9&9&9&9&9&10&10&10&10&10&10&11&11&11&11&11\\
		76&7&7&8&8&8&8&8&8&9&9&9&9&9&9&10&10&10&10&10&10&11&11&11&11&11\\
		77&7&7&8&8&8&8&8&8&9&9&9&9&9&9&9&10&10&10&10&10&10&11&11&11&11\\
		78&7&7&8&8&8&8&8&8&8&9&9&9&9&9&9&10&10&10&10&10&10&11&11&11&11\\
		79&7&7&7&8&8&8&8&8&8&9&9&9&9&9&9&10&10&10&10&10&10&11&11&11&11\\
		80&7&7&7&8&8&8&8&8&8&9&9&9&9&9&9&9&10&10&10&10&10&10&11&11&11\\
		81&7&7&7&8&8&8&8&8&8&8&9&9&9&9&9&9&10&10&10&10&10&10&11&11&11\\
		82&7&7&7&7&8&8&8&8&8&8&9&9&9&9&9&9&10&10&10&10&10&10&11&11&11\\
		83&7&7&7&7&8&8&8&8&8&8&9&9&9&9&9&9&9&10&10&10&10&10&10&11&11\\
		84&7&7&7&7&8&8&8&8&8&8&8&9&9&9&9&9&9&10&10&10&10&10&10&11&11\\
		85&7&7&7&7&8&8&8&8&8&8&8&9&9&9&9&9&9&10&10&10&10&10&10&10&11\\
		86&7&7&7&7&7&8&8&8&8&8&8&9&9&9&9&9&9&9&10&10&10&10&10&10&11\\
		87&7&7&7&7&7&8&8&8&8&8&8&8&9&9&9&9&9&9&10&10&10&10&10&10&11\\
		88&7&7&7&7&7&8&8&8&8&8&8&8&9&9&9&9&9&9&10&10&10&10&10&10&10\\
		89&7&7&7&7&7&7&8&8&8&8&8&8&9&9&9&9&9&9&9&10&10&10&10&10&10\\
		90&7&7&7&7&7&7&8&8&8&8&8&8&8&9&9&9&9&9&9&10&10&10&10&10&10\\
		91&7&7&7&7&7&7&8&8&8&8&8&8&8&9&9&9&9&9&9&10&10&10&10&10&10\\
		92&7&7&7&7&7&7&8&8&8&8&8&8&8&9&9&9&9&9&9&9&10&10&10&10&10\\
		93&7&7&7&7&7&7&7&8&8&8&8&8&8&8&9&9&9&9&9&9&10&10&10&10&10\\
		94&7&7&7&7&7&7&7&8&8&8&8&8&8&8&9&9&9&9&9&9&10&10&10&10&10\\
		95&7&7&7&7&7&7&7&8&8&8&8&8&8&8&9&9&9&9&9&9&9&10&10&10&10\\
		96&7&7&7&7&7&7&7&8&8&8&8&8&8&8&9&9&9&9&9&9&9&10&10&10&10\\
		97&7&7&7&7&7&7&7&7&8&8&8&8&8&8&8&9&9&9&9&9&9&10&10&10&10\\
		98&6&7&7&7&7&7&7&7&8&8&8&8&8&8&8&9&9&9&9&9&9&9&10&10&10\\
		99&6&7&7&7&7&7&7&7&8&8&8&8&8&8&8&9&9&9&9&9&9&9&10&10&10\\
		100&6&7&7&7&7&7&7&7&7&8&8&8&8&8&8&8&9&9&9&9&9&9&10&10&10\\
		
		\hline
	\end{tabular}
\end{table}

\begin{table}[ht]
	\hspace{-6cm}
	\caption{$r(n+k,n)$ for $2 \le k \le 100, 77\le n \le 100$} 
	\centering 
	\fontsize{6}{6}\selectfont
	\begin{tabular}{c|ccccccccccccccccccccccccc}
		\hline\hline
		k/n&77&78&79&80&81&82&83&84&85&86&87&88&89&90&91&92&93&94&95&96&97&98&99&100\\
		\hline
		2&36&37&37&38&38&39&39&40&40&41&41&42&42&43&43&44&44&45&45&45&46&46&47&47\\
		3&34&35&35&35&36&36&37&37&38&38&39&39&40&40&40&41&41&42&42&43&43&44&44&45\\
		4&32&33&33&34&34&35&35&35&36&36&37&37&38&38&38&39&39&40&40&41&41&42&42&42\\
		5&31&31&32&32&33&33&33&34&34&35&35&36&36&36&37&37&38&38&39&39&39&40&40&41\\
		6&30&30&30&31&31&32&32&32&33&33&34&34&35&35&35&36&36&37&37&37&38&38&39&39\\
		7&28&29&29&30&30&30&31&31&32&32&33&33&33&34&34&35&35&35&36&36&37&37&37&38\\
		8&28&28&28&29&29&29&30&30&31&31&31&32&32&33&33&33&34&34&35&35&35&36&36&37\\
		9&27&27&27&28&28&29&29&29&30&30&30&31&31&32&32&32&33&33&34&34&34&35&35&36\\
		10&26&26&27&27&27&28&28&28&29&29&30&30&30&31&31&32&32&32&33&33&33&34&34&35\\
		11&25&26&26&26&27&27&27&28&28&28&29&29&30&30&30&31&31&31&32&32&33&33&33&34\\
		12&25&25&25&26&26&26&27&27&27&28&28&28&29&29&30&30&30&31&31&31&32&32&33&33\\
		13&24&24&25&25&25&26&26&26&27&27&27&28&28&29&29&29&30&30&30&31&31&31&32&32\\
		14&23&24&24&24&25&25&25&26&26&26&27&27&28&28&28&29&29&29&30&30&30&31&31&31\\
		15&23&23&23&24&24&25&25&25&26&26&26&27&27&27&28&28&28&29&29&29&30&30&30&31\\
		16&22&23&23&23&24&24&24&25&25&25&26&26&26&27&27&27&28&28&28&29&29&29&30&30\\
		17&22&22&23&23&23&23&24&24&24&25&25&25&26&26&26&27&27&28&28&28&29&29&29&30\\
		18&21&22&22&22&23&23&23&24&24&24&25&25&25&26&26&26&27&27&27&28&28&28&29&29\\
		19&21&21&22&22&22&23&23&23&24&24&24&25&25&25&26&26&26&26&27&27&27&28&28&28\\
		20&21&21&21&22&22&22&22&23&23&23&24&24&24&25&25&25&26&26&26&27&27&27&28&28\\
		21&20&21&21&21&21&22&22&22&23&23&23&24&24&24&25&25&25&26&26&26&27&27&27&28\\
		22&20&20&20&21&21&21&22&22&22&23&23&23&24&24&24&25&25&25&25&26&26&26&27&27\\
		23&20&20&20&20&21&21&21&22&22&22&23&23&23&23&24&24&24&25&25&25&26&26&26&27\\
		24&19&20&20&20&20&21&21&21&22&22&22&23&23&23&23&24&24&24&25&25&25&26&26&26\\
		25&19&19&20&20&20&20&21&21&21&22&22&22&22&23&23&23&24&24&24&25&25&25&26&26\\
		26&19&19&19&19&20&20&20&21&21&21&22&22&22&22&23&23&23&24&24&24&25&25&25&25\\
		27&18&19&19&19&19&20&20&20&21&21&21&22&22&22&22&23&23&23&24&24&24&24&25&25\\
		28&18&18&19&19&19&19&20&20&20&21&21&21&21&22&22&22&23&23&23&24&24&24&24&25\\
		29&18&18&18&19&19&19&19&20&20&20&21&21&21&21&22&22&22&23&23&23&24&24&24&24\\
		30&18&18&18&18&19&19&19&19&20&20&20&21&21&21&21&22&22&22&23&23&23&23&24&24\\
		31&17&18&18&18&18&19&19&19&20&20&20&20&21&21&21&21&22&22&22&23&23&23&23&24\\
		32&17&17&18&18&18&18&19&19&19&20&20&20&20&21&21&21&21&22&22&22&23&23&23&23\\
		33&17&17&17&18&18&18&18&19&19&19&20&20&20&20&21&21&21&21&22&22&22&23&23&23\\
		34&17&17&17&17&18&18&18&19&19&19&19&20&20&20&20&21&21&21&21&22&22&22&23&23\\
		35&16&17&17&17&17&18&18&18&19&19&19&19&20&20&20&20&21&21&21&21&22&22&22&23\\
		36&16&17&17&17&17&18&18&18&18&19&19&19&19&20&20&20&20&21&21&21&22&22&22&22\\
		37&16&16&17&17&17&17&18&18&18&18&19&19&19&19&20&20&20&20&21&21&21&22&22&22\\
		38&16&16&16&17&17&17&17&18&18&18&18&19&19&19&19&20&20&20&20&21&21&21&22&22\\
		39&16&16&16&16&17&17&17&17&18&18&18&18&19&19&19&19&20&20&20&21&21&21&21&22\\
		40&16&16&16&16&17&17&17&17&17&18&18&18&19&19&19&19&20&20&20&20&21&21&21&21\\
		41&15&16&16&16&16&17&17&17&17&18&18&18&18&19&19&19&19&20&20&20&20&21&21&21\\
		42&15&15&16&16&16&16&17&17&17&17&18&18&18&18&19&19&19&19&20&20&20&20&21&21\\
		43&15&15&15&16&16&16&16&17&17&17&17&18&18&18&18&19&19&19&19&20&20&20&20&21\\
		44&15&15&15&16&16&16&16&17&17&17&17&17&18&18&18&18&19&19&19&19&20&20&20&20\\
		45&15&15&15&15&16&16&16&16&17&17&17&17&18&18&18&18&19&19&19&19&20&20&20&20\\
		46&15&15&15&15&15&16&16&16&16&17&17&17&17&18&18&18&18&19&19&19&19&20&20&20\\
		47&14&15&15&15&15&16&16&16&16&17&17&17&17&17&18&18&18&18&19&19&19&19&20&20\\
		48&14&15&15&15&15&15&16&16&16&16&17&17&17&17&18&18&18&18&18&19&19&19&19&20\\
		49&14&14&15&15&15&15&15&16&16&16&16&17&17&17&17&18&18&18&18&19&19&19&19&20\\
		50&14&14&14&15&15&15&15&16&16&16&16&16&17&17&17&17&18&18&18&18&19&19&19&19\\
		51&14&14&14&15&15&15&15&15&16&16&16&16&17&17&17&17&17&18&18&18&18&19&19&19\\
		52&14&14&14&14&15&15&15&15&16&16&16&16&16&17&17&17&17&18&18&18&18&19&19&19\\
		53&14&14&14&14&15&15&15&15&15&16&16&16&16&17&17&17&17&17&18&18&18&18&19&19\\
		54&14&14&14&14&14&15&15&15&15&15&16&16&16&16&17&17&17&17&17&18&18&18&18&19\\
		55&13&14&14&14&14&14&15&15&15&15&16&16&16&16&16&17&17&17&17&18&18&18&18&18\\
		56&13&13&14&14&14&14&15&15&15&15&15&16&16&16&16&17&17&17&17&17&18&18&18&18\\
		57&13&13&14&14&14&14&14&15&15&15&15&16&16&16&16&16&17&17&17&17&17&18&18&18\\
		58&13&13&13&14&14&14&14&15&15&15&15&15&16&16&16&16&16&17&17&17&17&18&18&18\\
		59&13&13&13&14&14&14&14&14&15&15&15&15&15&16&16&16&16&17&17&17&17&17&18&18\\
		60&13&13&13&13&14&14&14&14&14&15&15&15&15&16&16&16&16&16&17&17&17&17&18&18\\
		61&13&13&13&13&14&14&14&14&14&15&15&15&15&15&16&16&16&16&17&17&17&17&17&18\\
		62&13&13&13&13&13&14&14&14&14&14&15&15&15&15&16&16&16&16&16&17&17&17&17&17\\
		63&13&13&13&13&13&14&14&14&14&14&15&15&15&15&15&16&16&16&16&16&17&17&17&17\\
		64&12&13&13&13&13&13&14&14&14&14&14&15&15&15&15&15&16&16&16&16&17&17&17&17\\
		65&12&13&13&13&13&13&14&14&14&14&14&15&15&15&15&15&16&16&16&16&16&17&17&17\\
		66&12&12&13&13&13&13&13&14&14&14&14&14&15&15&15&15&15&16&16&16&16&17&17&17\\
		67&12&12&13&13&13&13&13&14&14&14&14&14&15&15&15&15&15&16&16&16&16&16&17&17\\
		68&12&12&12&13&13&13&13&13&14&14&14&14&14&15&15&15&15&15&16&16&16&16&16&17\\
		69&12&12&12&13&13&13&13&13&14&14&14&14&14&15&15&15&15&15&16&16&16&16&16&17\\
		70&12&12&12&12&13&13&13&13&13&14&14&14&14&14&15&15&15&15&15&16&16&16&16&16\\
		71&12&12&12&12&13&13&13&13&13&14&14&14&14&14&15&15&15&15&15&16&16&16&16&16\\
		72&12&12&12&12&13&13&13&13&13&13&14&14&14&14&14&15&15&15&15&15&16&16&16&16\\
		73&12&12&12&12&12&13&13&13&13&13&14&14&14&14&14&15&15&15&15&15&16&16&16&16\\
		74&12&12&12&12&12&13&13&13&13&13&13&14&14&14&14&14&15&15&15&15&15&16&16&16\\
		75&12&12&12&12&12&12&13&13&13&13&13&14&14&14&14&14&15&15&15&15&15&16&16&16\\
		76&11&12&12&12&12&12&13&13&13&13&13&13&14&14&14&14&14&15&15&15&15&15&16&16\\
		77&11&12&12&12&12&12&12&13&13&13&13&13&14&14&14&14&14&15&15&15&15&15&15&16\\
		78&11&11&12&12&12&12&12&13&13&13&13&13&13&14&14&14&14&14&15&15&15&15&15&16\\
		79&11&11&12&12&12&12&12&12&13&13&13&13&13&14&14&14&14&14&15&15&15&15&15&15\\
		80&11&11&12&12&12&12&12&12&13&13&13&13&13&13&14&14&14&14&14&15&15&15&15&15\\
		81&11&11&11&12&12&12&12&12&12&13&13&13&13&13&14&14&14&14&14&15&15&15&15&15\\
		82&11&11&11&12&12&12&12&12&12&13&13&13&13&13&14&14&14&14&14&14&15&15&15&15\\
		83&11&11&11&11&12&12&12&12&12&13&13&13&13&13&13&14&14&14&14&14&15&15&15&15\\
		84&11&11&11&11&12&12&12&12&12&12&13&13&13&13&13&14&14&14&14&14&14&15&15&15\\
		85&11&11&11&11&12&12&12&12&12&12&13&13&13&13&13&13&14&14&14&14&14&15&15&15\\
		86&11&11&11&11&11&12&12&12&12&12&12&13&13&13&13&13&14&14&14&14&14&14&15&15\\
		87&11&11&11&11&11&12&12&12&12&12&12&13&13&13&13&13&13&14&14&14&14&14&15&15\\
		88&11&11&11&11&11&11&12&12&12&12&12&13&13&13&13&13&13&14&14&14&14&14&14&15\\
		89&11&11&11&11&11&11&12&12&12&12&12&12&13&13&13&13&13&13&14&14&14&14&14&15\\
		90&11&11&11&11&11&11&12&12&12&12&12&12&13&13&13&13&13&13&14&14&14&14&14&14\\
		91&10&11&11&11&11&11&11&12&12&12&12&12&12&13&13&13&13&13&14&14&14&14&14&14\\
		92&10&11&11&11&11&11&11&12&12&12&12&12&12&13&13&13&13&13&13&14&14&14&14&14\\
		93&10&11&11&11&11&11&11&11&12&12&12&12&12&12&13&13&13&13&13&14&14&14&14&14\\
		94&10&10&11&11&11&11&11&11&12&12&12&12&12&12&13&13&13&13&13&13&14&14&14&14\\
		95&10&10&11&11&11&11&11&11&12&12&12&12&12&12&13&13&13&13&13&13&14&14&14&14\\
		96&10&10&10&11&11&11&11&11&11&12&12&12&12&12&12&13&13&13&13&13&13&14&14&14\\
		97&10&10&10&11&11&11&11&11&11&12&12&12&12&12&12&13&13&13&13&13&13&14&14&14\\
		98&10&10&10&11&11&11&11&11&11&12&12&12&12&12&12&12&13&13&13&13&13&14&14&14\\
		99&10&10&10&10&11&11&11&11&11&11&12&12&12&12&12&12&13&13&13&13&13&13&14&14\\
		100&10&10&10&10&11&11&11&11&11&11&12&12&12&12&12&12&13&13&13&13&13&13&14&14\\
		\hline
	\end{tabular}
\end{table}
\end{document}